\newtheorem{theo}{Theorem}[section]
\newtheorem{prop}[theo]{Proposition}
\newtheorem{coro}[theo]{Corollary}
\newtheorem{lemm}[theo]{Lemma}
\theoremstyle{definition}
\theoremstyle{remark}
\newtheorem{rema}[theo]{Remark}
\newcommand{\nwc}{\newcommand}
\nwc{\eps}{\epsilon}
\nwc{\ep}{\epsilon}
\nwc{\vareps}{\varepsilon}
\nwc{\Oph}{\operatorname{Op}_\hbar}
\nwc{\la}{\langle}
\nwc{\ra}{\rangle}
\nwc{\mf}{\mathbf} 
\nwc{\blds}{\boldsymbol} 
\nwc{\ml}{\mathcal} 
\nwc{\defeq}{\stackrel{\rm{def}}{=}}
\nwc{\cE}{\ml{E}}
\nwc{\cN}{\ml{N}}
\nwc{\cO}{\ml{O}}
\nwc{\cP}{\ml{P}}
\nwc{\cU}{\ml{U}}
\nwc{\cV}{\ml{V}}
\nwc{\cW}{\ml{W}}
\nwc{\tU}{\widetilde{U}}
\nwc{\IN}{\mathbb{N}}
\nwc{\IR}{\mathbb{R}}
\nwc{\IZ}{\mathbb{Z}}
\nwc{\IC}{\mathbb{C}}
\nwc{\IT}{\mathbb{T}}
\nwc{\IS}{\mathbb{S}}
\nwc{\tP}{\widetilde{P}}
\nwc{\tPi}{\widetilde{\Pi}}
\nwc{\tV}{\widetilde{V}}
\nwc{\supp}{\operatorname{supp}}
\nwc{\rest}{\restriction}
\nwc{\x}{\mathbf{x}}
\nwc{\y}{\mathbf{y}}
\nwc{\z}{\mathbf{z}}
\nwc{\w}{\mathbf{w}}
\begin{document}

\title[Spectral analysis of Morse-Smale flows II]{Spectral analysis of Morse-Smale flows II:\\
Resonances and resonant states}

\author[Nguyen Viet Dang]{Nguyen Viet Dang}

\address{Institut Camille Jordan (U.M.R. CNRS 5208), Universit\'e Claude Bernard Lyon 1, B\^atiment Braconnier, 43, boulevard du 11 novembre 1918, 
69622 Villeurbanne Cedex }

\email{dang@math.univ-lyon1.fr}

\author[Gabriel Rivi\`ere]{Gabriel Rivi\`ere}

\address{Laboratoire Paul Painlev\'e (U.M.R. CNRS 8524), U.F.R. de Math\'ematiques, Universit\'e Lille 1, 59655 Villeneuve d'Ascq Cedex, France}

\email{gabriel.riviere@math.univ-lille1.fr}

\begin{abstract} 
The goal of the present work is to compute
explicitely the correlation spectrum of a Morse-Smale flow in terms of the Lyapunov exponents of the Morse--Smale flow, 
the topology of the flow around periodic orbits and the monodromy of some given flat connection. The corresponding eigenvalues 
exhibit vertical bands when the flow has periodic orbits. As a corollary, we obtain sharp Weyl asymptotics for the dynamical resonances.
\end{abstract}

\maketitle

\section{Introduction}

Consider $p:\ml{E}\rightarrow M$ a smooth ($\ml{C}^{\infty}$) complex vector bundle of rank $N$ over a smooth, compact, oriented manifold $M$ without 
boundary of dimension $n$. 
Suppose now that $\ml{E}$ is endowed 
with a \emph{flat connection} $\nabla$ and that $V$ is a smooth vector field on $M$ which generates a flow $\varphi^t:M\rightarrow M$. For every 
$0\leq k\leq n$, $\varphi^t$ induces a flow $\Phi_k^t$ on the complex vector bundle $\mathbf{p}_k:\Lambda^k(T^*M)\otimes \ml{E}\rightarrow M$ whose sections
are differential $k$-forms on $M$ valued in sections of $\mathcal{E}$. 
The flow $\Phi^t_k$ satisfies the
equation 
$\varphi^t\circ\mathbf{p}_k=\mathbf{p}_k\circ \Phi_k^t$. In this article, we aim 
at describing the long time behaviour of this family of induced flows for vector fields enjoying some 
hyperbolic features. For this purpose, it is natural to look at the action of these flows on smooth sections 
of $\Lambda^k(T^*M)\otimes \ml{E}$, i.e. given $\psi_0$ in $\Omega^k(M,\ml{E})$, one would like to understand the asymptotic behaviour of
\begin{equation}\label{e:pullback}\Phi_k^{-t*}\left(\psi_0\right)\end{equation}
as $t\rightarrow+\infty$. Such quantities are sometimes referred as linear cocycles in the literature from dynamical systems. Actually, it is convenient to observe that $\Phi_k^{-t*}\left(\psi_0\right)$ solves the partial differential equation
$$\boxed{\partial_t\psi=-\ml{L}_{V,\nabla}\psi,\quad\psi(t=0)=\psi_0,}$$
where $$\ml{L}_{V,\nabla}:=(d^{\nabla}+\iota_V)^2,$$
with $d^{\nabla}$ the coboundary operator\footnote{Recall that $d^{\nabla}\circ d^{\nabla}=0$ as the connection is flat.} 
induced by $\nabla$ and $\iota_V$ the contraction by the vector field $V$. Hence, one would be able to 
determine the limit of~\eqref{e:pullback} as soon as one has built a good spectral theory for the Lie derivative $\ml{L}_{V,\nabla}$. 

In recent 
years, many progresses have been made towards the construction of functional frameworks adapted to smooth vector fields satisfying 
certain hyperbolicity assumptions. For instance, in the case of Anosov vector fields and in the case of the trivial vector bundle $M\times\IC$, 
Butterley and Liverani constructed Banach spaces for which the Lie derivative has good spectral properties such as discrete 
spectrum~\cite{BuLi07}. This was extended to more general vector bundles by Giulietti, Liverani and Pollicott~\cite{GiLiPo13} and 
applied to prove the meromorphic continuation of the Ruelle zeta function. The result of Butterley and 
Liverani was recovered by Faure and Sj\"ostrand~\cite{faure2011upper} via microlocal 
techniques inspired by the study of resonances of semiclassical Schr\"odinger 
operators~\cite{HeSj86, dyatlov2016ruelle}. These microlocal methods were
then extended to the situation of vector bundles by Dyatlov and 
Zworski~\cite{dyatlov2016dynamical} 
who gave a microlocal proof of the meromorphic continuation of the Ruelle zeta function. 
Still from a 
microlocal perspective, we can also mention the works of Tsujii~\cite{Ts12} and Faure--Tsujii~\cite{faure2013band, FaTs13} 
which are based on the use of the FBI 
transform. Beyond the Anosov case, Dyatlov and Guillarmou proved that a similar 
microlocal approach can be performed for Axiom A flows~\cite{smale1967differentiable} 
at the expense of making some restriction on the 
supports of the sections $\psi_0$ in $\Omega^k(M,\ml{E})$~\cite{dyatlov2016pollicott}. More specifically, for 
Axiom A flows, 
Smale proved that there exists  
a decomposition of the nonwandering set of the flow into finitely many basic sets $(\Lambda_i)_{i=1,\ldots, K}$ and the results of 
Dyatlov and Guillarmou hold \emph{locally} in a neighborhood of some fixed
given basic set. Recently, we showed how to construct a proper \emph{global} 
spectral theory for certain families of Axiom $A$ flows, namely Morse-Smale flows~\cite{dang2016spectral, dangrivieremorsesmale1}. This was 
achieved by combining some ideas from dynamical systems going back to the original works of Smale~\cite{smale2000morse} with the microlocal 
approach of Faure and Sj\"ostrand. Even if we focus on the case of flows, as in the present article, we emphasize that results 
on flows follow 
from many progresses that have been made towards the understanding of hyperbolic diffeomorphisms and we refer to the recent book of Baladi for 
a detailed account of this case~\cite{Ba16}.

\subsection{Dynamical correlations and their Laplace transform} As expected, these spectral results have nice dynamical consequences which can 
be formulated in terms of the correlation function:
\begin{eqnarray*}
\boxed{C_{\psi_1,\psi_2}(t)=\int_M \psi_1\wedge \Phi_k^{-t*}(\psi_2),}
\end{eqnarray*}
where $\psi_1\in \Omega^{n-k}(M,\cE')$, $\psi_2\in \Omega^{k}(M,\cE)$ and $\cE^\prime$ is the dual
bundle of $\cE$. Following the works of Pollicott~\cite{Po85} and 
Ruelle~\cite{Ru87a}, we can introduce its Laplace transform~:
\begin{eqnarray*}
\boxed{\widehat{C}_{\psi_1,\psi_2}(z)=\int_0^\infty C_{\psi_1,\psi_2}(t)e^{-tz}dt.}
\end{eqnarray*}
Note that this function is well defined for $\text{Re}(z)>0$ large enough and most of the above mentionned results show that it 
has a meromorphic extension to the entire complex plane\footnote{In the Axiom A case, one has to consider $\psi_1$ and $\psi_2$ 
compactly supported near a fixed basic set.}. The poles and residues of this function describe in some sense the fine structure 
of the long time dynamics of the flow $\Phi_k^{-t*}$. For instance, let us state the precise result 
when $V$ is a Morse-Smale flow which is~$\ml{C}^1$-linearizable~\cite{dangrivieremorsesmale1} -- see section~\ref{s:morse-smale}. In that 
framework, there exists a minimal discrete\footnote{We mean that it has no accumulation point. In particular, it is at most countable.} set 
$\ml{R}_k(V,\nabla)\subset\IC$ such that, given any $(\psi_1,\psi_2)\in\Omega^{n-k}(M,\ml{E}')\times\Omega^{k}(M,\ml{E})$, the map 
$z\mapsto \hat{C}_{\psi_1,\psi_2}(z)$ has a meromorphic extension whose poles are contained inside $\ml{R}_k(V,\nabla)$. 
These poles are called the \emph{Pollicott-Ruelle resonances}. Moreover, given any such 
 $z_0\in\ml{R}_k(V,\nabla)$, there exists an integer $m_k(z_0)$ and a linear map of \emph{finite rank}
\begin{equation}\label{e:spectral-projector-intro}\pi_{z_0}^{(k)}:\Omega^k(M,\ml{E})\rightarrow\ml{D}^{\prime k}(M,\ml{E})\end{equation}
 such that, given any $(\psi_1,\psi_2)\in\Omega^{n-k}(M,\ml{E}')\times\Omega^{k}(M,\ml{E})$, one has, in a small neighborhood of $z_0$,
 $$\hat{C}_{\psi_1,\psi_2}(z)=\sum_{l=1}^{m_k(z_0)}(-1)^{l-1}\frac{\left\la(\ml{L}_{V,\nabla}^{(k)}+z_0)^{l-1}\pi_{z_0}^{(k)}(\psi_2),\psi_1\right\ra}{(z-z_0)^l}+R_{\psi_1,\psi_2}(z),$$
 where $R_{\psi_1,\psi_2}(z)$ is a holomorphic function. Here, we use the convention that $\ml{D}^{\prime k}(M,\ml{E})$ represents 
 the currents of degree $k$ with values in $\ml{E}$. Elements in the range of $\pi_{z_0}^{(k)}$ are called 
 \emph{Pollicott-Ruelle resonant states} and they can be interpreted as the generalized eigenvectors 
 of the operator $-\ml{L}_{V,\nabla}^{(k)}$ acting on an appropriate Sobolev space~\cite{dangrivieremorsesmale1}. In particular, the dimension of the 
 range of $\pi_{z_0}^{(k)}$ is the algebraic multiplicity of the resonance $z_0$ 
 viewed as an eigenvalue of this operator.

\subsection{Quantum chaos, instantonic theories and Epstein--Glaser renormalization}
Recall that, beyond the dynamical aspects, one of the motivation of Faure and collaborators, as we understand it, comes from 
quantum chaos. For geodesic flows in negative curvature\footnote{These are the simplest examples of contact Anosov flows.}, 
$\Phi^{-t*}(\psi)$ converges to some equilibrium state $\psi_\infty$ at the limit when $t\rightarrow +\infty$. 
The Pollicott-Ruelle spectrum describes the fluctuations around the equilibrium $\psi_\infty$, and 
one hopes that, for general geodesic flows in nonconstant negative curvature, the emergent dynamics induced by the vector field 
is a model of \emph{quantum chaos}. Before stating our main results on this correlation spectrum for Morse-Smale flow, we would like 
to present another motivation coming from mathematical physics and emphasize possible links of these dynamical problems with the work of 
Frenkel--Losev--Nekrasov on instantonic quantum field theories.

Instantons arise in mathematical physics as critical points of some natural
variational problems. In the context of Morse theory,
the term \emph{instanton} denotes 
gradient lines connecting two critical points of the vector field $V$.  
In several geometrical problems, it is often proved that the moduli space of instantons
is finite dimensional in some sense.
In~\cite{frenkel2011instantons,frenkel2008instantons,frenkel2007notes,losev2011new}, 
Frenkel--Losev--Nekrasov started an ambitious program of constructing quantum field theories
by integrating on finite dimensional moduli space of instantons. 
Since this is a very hard problem in general, they propose to start by investigating 
$1$-dimensional QFT which is just instantonic quantum mechanics.
This is a version of supersymmetric quantum mechanics
where the vector field $V$ plays the role
of the generalized Laplacian from Hodge theory.
The Hamiltonian of the theory is a multiple of the Lie derivative $\mathbb{H}=i\mathcal{L}_V$.
A natural problem in quantum theory is to specify all eigenvalues and eigenstates of the Hamiltonian $\mathbb{H}$.
For general Hamiltonians this is usually very hard.
However, for $\mathbb{H}=i\mathcal{L}_{V}$, it is hoped that the
system is integrable and actually, it is showed in~\cite[paragraph 3.7 p.~508]{frenkel2011instantons} that
for the height function on the Riemann sphere $\mathbb{CP}^1$, the spectrum of $\mathbb{H}$ coincides with
$i\mathbb{N}$ and the structure of the eigenstates is completely understood by a local construction near the south pole
and by a procedure of extension of distributions near the north pole~\cite[p.~516-517]{frenkel2011instantons} 
in the style of Epstein--Glaser renormalization in quantum field theory~\cite{Epstein}. We extended this explicit description
to more general gradient flows in~\cite{dang2016spectral}. One can now briefly illustrate the process describing the passage from 
quantum mechanics to instantonic theories in the \textbf{specific case of Morse-Smale gradient flows}~\cite[section 5]{enciso2008morse},  
which can be thought of as the inverse of the path followed by Faure et al. for Anosov flows.
One starts from a rescaled Witten Laplacian which
is a deformation of the Hodge Laplacian by a Morse potential $f$~:
$\mathbb{H}_{\hbar}=\frac{1}{2}\left(-\hbar\Delta+\hbar^{-1}\vert df\vert^2+\mathcal{L}_{V}+\mathcal{L}^*_{V}\right)$.
Then one can wonder what the eigenvalues of this new quantum Hamiltonian are. For that purpose, we conjugate $\mathbb{H}$ with $e^{\frac{f}{\hbar}}$ which yields 
the new Hamiltonian~:
$
\tilde{\mathbb{H}}_{\hbar}=e^{\frac{f}{\hbar}}\mathbb{H}_{\hbar}e^{-\frac{f}{\hbar}} = \mathcal{L}_{V}-\frac{\hbar\Delta}{2}.
$
At the \emph{instantonic limit} $\hbar\rightarrow 0$, $\tilde{\mathbb{H}}_{\hbar}$ becomes formally
$\tilde{\mathbb{H}}_0=\mathcal{L}_{V}$ which is nothing else but the Lie derivative along the gradient vector field whose spectrum 
was computed explicitely in~\cite{dang2016spectral} or can be deduced from the upcoming results applied in the particular 
case of Morse-Smale gradient flows. Therefore, the spectrum of the Witten Laplacian converges to the spectrum 
of the Lie derivative.

\section{Statement of the main results}

The main objective of this article is to give a complete description of the Pollicott-Ruelle resonances \emph{and} resonant states in the case of Morse-Smale vector fields satisfying 
certain generic linearizing assumptions. Let us start with the resonances which are simpler to describe. Recall that a Morse-Smale flow $\varphi^t$ is a flow whose nonwandering set is the union 
of finitely many hyperbolic closed orbits and hyperbolic fixed points that we denote by $\Lambda_1,\ldots,\Lambda_K$. These are called the critical elements (or the basic sets) of the flow. 
To each $\Lambda_j$ is associated an unstable (resp. stable) manifold 
$W^u(\Lambda_j)$ (resp. $W^s(\Lambda_j)$). These unstable manifolds form a partition of $M$ and, by definition of a Morse-Smale flow, they enjoy some transversality properties -- see 
paragraph~\ref{s:morse-smale} for a brief reminder. Such unstable manifolds can be either orientable or 
not\footnote{Non orientability can only occurs for closed orbits.}. We define the twisting index of $\Lambda$ as 
$$\varepsilon_{\Lambda}=0\ \text{if $W^u(\Lambda)$ is orientable, and}\ \varepsilon_{\Lambda}=\frac{1}{2}\ \text{otherwise}.$$
To every closed orbit $\Lambda$, we also associate a positive number $\ml{P}_{\Lambda}$ which is the 
minimal period of the closed orbit and an element $M_{\ml{E}}(\Lambda)$ which is a monodromy matrix for the parallel transport around 
$\Lambda$. We denote the eigenvalues of this matrix by $(e^{2i\pi\gamma_j^{\Lambda}})_{j=1,\ldots, N}$ where $\gamma_j^{\Lambda}$ are complex numbers. For every fixed point $\Lambda$, we define
$$\sigma_{\Lambda}:=\{0\},$$
and the multiplicity of $0$ is defined as $\mu_{\Lambda}(0)=N$. For every closed orbit, we set
$$\sigma_{\Lambda}:=\left\{-\frac{2i\pi(m+\varepsilon_\Lambda+\gamma_j^{\Lambda})}{\mathcal{P}_\Lambda}:1\leq j\leq N,\ m\in\IZ \right\},$$
and the multiplicity of $z_0$ in $\sigma_{\Lambda}$ is given by
$$\mu_{\Lambda}(z_0):=\left|\left\{(j,m):z_0=-\frac{2i\pi(m+\varepsilon_\Lambda+\gamma_j^{\Lambda})}{\mathcal{P}_\Lambda}\right\}\right|.$$

\subsection{Resonances on the imaginary axis} Our first main result describes the resonances lying on the imaginary axis in the case 
where the flat connection $\nabla$ preserves some Hermitian structure on the fibers of $\ml{E}$. 
Recall that this is equivalent to fixing some unitary representation of the fundamental group $\pi_1(M)$~\cite[Th.~13.2]{taubes2011differential}. 
In that case, all the $\gamma_j^{\Lambda}$ are real and we show:
\\
\\
\fbox{
\begin{minipage}{0.94\textwidth}  
\begin{theo}\label{mainthmintro}
 Let $\ml{E}\rightarrow M$ be a smooth, complex, hermitian 
vector bundle of dimension $N$ endowed with a flat unitary connection $\nabla$. 
Suppose that $V$ is a Morse-Smale vector field which is $\ml{C}^{\infty}$-diagonalizable. 

Then, for every $0\leqslant k\leqslant n$, 
$$\ml{R}_k(V,\nabla)\subset\{z:\operatorname{Re}(z)\leq 0\},$$ 
and one has
\begin{eqnarray*}
\ml{R}_k(V,\nabla)\cap i\IR =\bigcup_{\Lambda\ \text{fixed point}:\ \operatorname{dim}\ W^s(\Lambda)=k}\sigma_{\Lambda}\cup\bigcup_{\Lambda\ \text{closed orbit}:\ \operatorname{dim}\ W^s(\Lambda)\in\{k,k+1\}}\sigma_{\Lambda},
\end{eqnarray*}
where a resonance $z_0\in\ml{R}_k(V,\nabla)\cap i\IR$ appears with the algebraic multiplicity
$$\sum_{\Lambda\ \text{fixed point}:\ \operatorname{dim}\ W^s(\Lambda)=k}\mu_{\Lambda}(z_0)
+\sum_{\Lambda\ \text{closed orbit}:\ \operatorname{dim}\ W^s(\Lambda)\in\{k,k+1\}}\mu_{\Lambda}(z_0).$$
\end{theo}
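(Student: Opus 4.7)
The plan is to combine the global meromorphic framework set up in the companion paper \cite{dangrivieremorsesmale1} with an explicit local calculation near each critical element $\Lambda_j$, then patch the local spectra via the Morse-Smale partition $M=\bigsqcup_j W^u(\Lambda_j)$. Since $V$ is $\ml{C}^{\infty}$-diagonalizable, for each $\Lambda_j$ we can choose a $\varphi^t$-invariant neighborhood $\U_j$ where $\ml{L}_{V,\nabla}^{(k)}$ is smoothly conjugated to its linear (and, around a closed orbit, its linear skew-product) normal form, with the connection $\nabla$ trivialized to the parallel transport of constant unitary monodromy. The first step is to treat each $\U_j$ as a model problem where the local resonance spectrum can be computed by separation of variables, and the second step is to argue that the global resonances on $i\IR$ are exactly the disjoint union of the local ones.

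For a hyperbolic fixed point $\Lambda$, after diagonalization, $V=\sum_j \lambda_j^s x_j\partial_{x_j}+\sum_i\lambda_i^u y_i\partial_{y_i}$ with $\Re\lambda_j^s<0<\Re\lambda_i^u$, and the flat connection contributes $N$ independent parallel sections of $\ml{E}$. A direct computation on monomial $k$-forms $dx^\alpha dy^\beta\otimes e_\ell$ shows the local resonances have the shape $\sum\alpha_j\lambda_j^s-\sum\beta_i\lambda_i^u$ plus a ``ground'' shift depending on which $dx_j$, $dy_i$ are present; the only way $\Re z=0$ is that the monomial is the top form in the stable directions and has no unstable exponent, forcing $k=\dim W^s(\Lambda)$ and yielding the single resonance $z_0=0$ with $\mu_\Lambda(0)=N$ copies, one per parallel section. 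For a closed orbit $\Lambda$ of period $\ml{P}_\Lambda$, the normal form is a skew product over $\IR/\ml{P}_\Lambda\IZ$ with linear Poincar\'e map whose eigenvalues have nonzero real part; a $k$-form decomposes as $dt^\epsilon\wedge\omega_{\rm stab}\wedge\omega_{\rm unst}\otimes e_\ell$ with $\epsilon\in\{0,1\}$. Fourier decomposition along the orbit, combined with the monodromy eigenvalues $e^{2i\pi\gamma_j^\Lambda}$ and the orientation twist $\varepsilon_\Lambda$ (from pulling back along the first return when $W^u$ is non-orientable), produces the spectrum $-\frac{2i\pi(m+\varepsilon_\Lambda+\gamma_j^\Lambda)}{\ml{P}_\Lambda}$ plus a transverse Lyapunov term; the latter vanishes only in the extremal transverse bidegree, which forces the pair $(\epsilon,|\omega_{\rm stab}|)\in\{(0,\dim W^s-1),(1,\dim W^s-1)\}$, i.e.\ $k\in\{\dim W^s(\Lambda),\dim W^s(\Lambda)-1\}$, equivalently $\dim W^s(\Lambda)\in\{k,k+1\}$, with the correct multiplicity $\mu_\Lambda(z_0)$.

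To transfer these local spectra to the global ones, I would use the anisotropic Sobolev framework of \cite{dangrivieremorsesmale1}: every Pollicott-Ruelle resonant state is microlocally carried by the stable/unstable directions and, by the Morse-Smale stratification and the wavefront set estimates from the companion paper, its singular support is contained in $\bigcup_j W^s(\Lambda_j)$. Localizing through a smooth partition of unity adapted to small invariant neighborhoods $\U_j$ and pushing forward to the normal form then identifies resonant states with the explicit model eigenstates built above. The inclusion $\ml{R}_k(V,\nabla)\subset\{\Re z\leq 0\}$ is proved separately: when $\nabla$ is unitary, pairing $-\ml{L}_{V,\nabla}\psi$ with $\psi$ in the Hermitian fiber metric gives $\Re\la -\ml{L}_{V,\nabla}\psi,\psi\ra=0$ pointwise up to the divergence of $V$, so an a priori argument against the smooth/distributional duality on the anisotropic space (using that only contracting/expanding directions can create residues) rules out $\Re z>0$.

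The hard part I anticipate is the patching step, in two guises. First, one must show the local resonant states produced on $\U_j$ actually extend as global resonant currents with the expected wavefront set, which requires controlling the boundary behavior of these currents across the Morse-Smale strata; this is classically delicate because the unstable manifold $W^u(\Lambda_j)$ typically accumulates on critical elements of lower index, so the extension has to be handled by an Epstein-Glaser-type renormalization, matching the philosophy outlined in the introduction. Second, one must rule out spurious imaginary resonances coming from ``interactions'' between different strata, i.e.\ show that the local spectra are disjoint (or that multiplicities add without cancellation). The disjointness on $i\IR$ can be read off by comparing the arithmetic parameters $(\ml{P}_\Lambda,\gamma_j^\Lambda,\varepsilon_\Lambda)$ across orbits generically, but the general case still requires a cohomological argument, and this is where I expect the bulk of the technical work to lie.
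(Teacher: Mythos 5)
Your overall architecture (explicit local models near each critical element, then globalization through the anisotropic Sobolev framework of \cite{dangrivieremorsesmale1}) is the same as the paper's, but the two steps you yourself flag as ``the hard part'' are exactly where the proposal has genuine gaps, and the mechanisms you suggest for them are not the ones that work. First, exhibiting monomial eigenstates of the linearized model is not enough: to get the \emph{equality} in the theorem and the multiplicity count you need a classification statement saying that \emph{every} germ of (generalized) eigencurrent near $\Lambda$ whose support lies in $W^u(\Lambda)$ and whose wavefront set is conormal to the unstable directions is a finite combination of the explicit models $\delta_0^{(\alpha_x)}(x)y^{\alpha_y}$ (times the adapted frame); the paper proves this (Theorems~\ref{t:local-form-scalar-critical-point} and~\ref{t:local-form-scalar-periodic-orbit}) via Schwartz's theorem on distributions supported on a submanifold plus a Laplace-transform argument, and nothing in your proposal plays this role. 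Note also that the local solutions are currents with transverse Dirac derivatives, not smooth monomial forms. Second, for the extension of a local eigenmode to a global resonant current you propose an Epstein--Glaser-type renormalization across the boundary $\overline{W^u(\Lambda)}\setminus W^u(\Lambda)$, but you never carry it out; the paper deliberately avoids this delicate boundary analysis by cutting off the local model and applying the spectral projector $\pi^{(k)}_{z_0}$ of the companion paper to it, then showing (Proposition~\ref{p:resonant-state}) that the projected state coincides with the local model near $\Lambda$ and is supported in $\overline{W^u(\Lambda)}$. You have these projectors available and should use them; the renormalization picture is only an a posteriori interpretation.

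The completeness step is also misdirected. You worry about ``spurious imaginary resonances coming from interactions between strata'' and propose to show disjointness of the local spectra, via genericity of the periods and monodromies or a cohomological argument. No disjointness is needed (different critical elements may well share a resonance; multiplicities simply add), and a genericity argument is inadmissible since the theorem assumes nothing about the arithmetic of $(\ml{P}_\Lambda,\gamma_j^\Lambda,\varepsilon_\Lambda)$. What is actually needed is: (i) linear independence of the constructed family, which follows because each global state equals a nonzero explicit model near its own $\Lambda$, and (ii) a generation argument showing every generalized eigencurrent is a combination of them; the paper does (ii) by induction along the Smale quiver, using a propagation lemma and a support lemma to show that near a maximal critical element of $\operatorname{Crit}(\mathbf{u})$ the current is supported in $W^u(\Lambda)$ with conormal wavefront, then invoking the local classification and subtracting. (Your statement that resonant states have singular support in $\bigcup_j W^s(\Lambda_j)$ is the wrong side: they live on closures of unstable manifolds.) Finally, your proof of $\ml{R}_k(V,\nabla)\subset\{\operatorname{Re}(z)\leq 0\}$ by pointwise pairing ``up to the divergence of $V$'' does not work: there is no invariant volume for a flow with hyperbolic fixed points and resonances are not $L^2$ eigenvalues, so no such energy identity controls them. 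In the paper this inclusion is read off from the complete explicit list $\lambda_\alpha^\Lambda+\delta_{\mathbf{j}}^\Lambda$, whose real parts are $\leq 0$ once the monodromy eigenvalues are unimodular, i.e.\ it is a corollary of the full description rather than an independent a priori bound.
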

\end{minipage}
}
\\
\\
Before making some comments on this first result, let us observe that we made an extra assumption on the vector field saying that it is $\ml{C}^{\infty}$-diagonalizable -- 
see section~\ref{s:morse-smale} for the precise definition. It roughly means that we can linearize the vector field in a smooth chart near any $\Lambda_j$. Thanks to the 
Sternberg-Chen Theorem, this is satisfied as soon as certain nonresonance assumptions are satisfied~\cite{nelson2015topics, WWL08}. 
Combined with the classical results of Peixoto~\cite{Pe62} and Palis~\cite{Pa68}, we can then verify that this assumption is in some sense generic among Morse-Smale vector 
fields: we refer to paragraph~\ref{s:morse-smale} and to the discussion following Theorem~\ref{t:maintheo-full} for more precision on this hypothesis. 
One of the reason for this assumption is that we do not only aim at describing these dynamical eigenvalues but also their 
corresponding generalized eigenmodes -- see Theorem~\ref{t:maintheo-full} for a complete statement. Besides the fact that the question of describing 
the eigenstates is interesting on its own, we shall also see in~\cite{dangrivieremorsesmale3} that these eigenmodes have 
interesting topological properties related to Morse inequalities and Reidemeister torsion. If we were only interested in the eigenvalues, 
it would probably be sufficient to relate these eigenvalues to the zeros of some dynamical zeta function via a trace formula as 
in~\cite{BaTs08, Ba16} and maybe avoid some of the linearization assumptions we have. Yet, it would not be necessarily much simpler from the technical 
point of view as this would require to justify such a formula in our context. This would mean to use the notion of 
distributional traces~\cite{GuSt90} of a flow with fixed points and closed orbits. Our somewhat direct 
approach avoid this difficulty and, along the way, it also gives in addition a complete description of the generalized eigenmodes. Coming back to the nonresonance assumption, one should probably be able to lower the regularity assumptions (at 
the expense of a slightly more technical work), and still describe the Pollicott-Ruelle resonant states explicitely in some half plane 
$\{z:\operatorname{Re}(z)\geq -T\},$ where $T$ would depend on the 
regularity assumptions. Yet, removing these assumptions is beyond the scope of the present article 
which essentially aims at giving examples where one can explicitely determine the correlation spectrum.

Let us now briefly comment our first Theorem. First of all, it completely determines the Pollicott-Ruelle resonances on the imaginary axis 
in terms of the periods of the flow, of the topology of the unstable manifolds and of the monodromy around every closed orbit. 
In particular, up to the periods, this part of the spectrum is completely determined by the ``topology'' of the flow. 
We also emphasize that eigenvalues in every degree are associated with Smale's partition of $M$ into stable 
manifolds~\cite{smale2000morse}. We remark that, even if many progresses were made towards understanding 
the Pollicott-Ruelle spectrum, there are not so many examples where one can compute the spectrum explicitely. In the case of maps, we can mention 
the case of hyperbolic linear automorphisms of the torus where there is in fact only one resonance~\cite{BlKeLi02}, the case of an hyperbolic fixed point 
which can be derived from~\cite{BaTs08} (see also~\cite{FaTs15}) or the one of analytic expanding circle maps arising from finite Blaschke 
products~\cite{BanSlJu15, bandtlow2016lower}. In the case of geodesic flows on hyperbolic manifolds, these resonances were shown to 
be in correspondance with the spectrum of the Laplace Beltrami operator~\cite{dyatlov2015power, guillarmou2016classical} -- see also~\cite{FlFo03} 
for earlier related results. In the case of Morse-Smale gradient flows, we gave a complete description of the correlation spectrum. 
The main differences with that last reference are the addition of the flat connection and the presence of closed orbits. These closed 
orbits are in fact responsible for the vertical lines we can observe inside $\ml{R}_k(V,\nabla)$. As a first corollary of our analysis, 
let us point the following Weyl formula:
\begin{coro}\label{c:fractalweylaw} Suppose the assumptions of Theorem~\ref{mainthmintro} are satisfied and let $0\leq k\leq n$. Then, as $T\rightarrow+\infty$, one has
 $$\left|\left\{z_0\in\ml{R}_k(V,\nabla)\cap i\IR: |\operatorname{Im}(z_0)|\leq T\right\}\right|=\frac{N T}{\pi}\left(\sum_{\Lambda\ \text{closed orbit}:\ \operatorname{dim}\ W^s(\Lambda)\in\{k,k+1\}}\ml{P}_{\Lambda}\right)+\ml{O}(1),$$
 where the resonances are counted with their algebraic multiplicity.
\end{coro}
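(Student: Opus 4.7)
The plan is to derive the Weyl law as a direct bookkeeping consequence of Theorem~\ref{mainthmintro}, which already gives a complete list (with multiplicities) of the resonances on the imaginary axis. All that remains is to count lattice points in $[-T,T]$.

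First I would split the resonance set according to whether the contributing critical element $\Lambda$ is a fixed point or a closed orbit. For a fixed point $\Lambda$ with $\dim W^s(\Lambda)=k$, Theorem~\ref{mainthmintro} contributes only the single resonance $z_0=0$ with multiplicity $N$. Since there are finitely many fixed points, the total contribution to the left-hand side is bounded by a constant independent of $T$, hence absorbed in the $\mathcal{O}(1)$.

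Next, for each closed orbit $\Lambda$ with $\dim W^s(\Lambda)\in\{k,k+1\}$, I would count elements of
$$\sigma_\Lambda=\left\{-\frac{2i\pi(m+\varepsilon_\Lambda+\gamma_j^\Lambda)}{\mathcal{P}_\Lambda}\,:\,1\leq j\leq N,\ m\in\mathbb{Z}\right\}$$
inside $\{|\mathrm{Im}(z_0)|\leq T\}$. Since the connection $\nabla$ is unitary, each $\gamma_j^\Lambda$ is real, so every element of $\sigma_\Lambda$ is already purely imaginary. For each fixed $j\in\{1,\ldots,N\}$, the condition $|m+\varepsilon_\Lambda+\gamma_j^\Lambda|\leq T\mathcal{P}_\Lambda/(2\pi)$ is satisfied by exactly $T\mathcal{P}_\Lambda/\pi+\mathcal{O}(1)$ integers $m$, with the error uniform in $j$ (since $\varepsilon_\Lambda\in\{0,1/2\}$ and $\gamma_j^\Lambda$ are fixed real numbers). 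Summing over $j=1,\ldots,N$ and accounting for the multiplicities $\mu_\Lambda(z_0)$ — which simply records how many pairs $(j,m)$ produce the same point — yields $NT\mathcal{P}_\Lambda/\pi+\mathcal{O}(1)$ resonances from $\Lambda$ counted with multiplicity.

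Finally, summing over the finite set of closed orbits with $\dim W^s(\Lambda)\in\{k,k+1\}$ and combining with the bounded contribution of the fixed points gives the announced Weyl asymptotics. There is no real obstacle here: the only mild point is to check that the $\mathcal{O}(1)$ errors remain uniform when different $\sigma_\Lambda$ are superimposed, which is immediate because the number of critical elements is finite and independent of $T$.
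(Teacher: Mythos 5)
Your proof is correct and is essentially the argument the paper intends: Corollary~\ref{c:fractalweylaw} is a direct lattice-point count from the explicit description of $\ml{R}_k(V,\nabla)\cap i\IR$ and its multiplicities in Theorem~\ref{mainthmintro}, with fixed points contributing only a bounded term and each closed orbit contributing $N T\ml{P}_{\Lambda}/\pi+\ml{O}(1)$ pairs $(j,m)$. Your handling of multiplicities (counting pairs $(j,m)$, then summing over the finitely many critical elements) matches the additive multiplicity formula in the theorem, so nothing is missing.
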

Recall that, for Anosov flows, Faure and Sj\"ostrand proved that the resonances near the imaginary axis verifies Weyl's upper bound 
in the limit $\text{Im}(z)\rightarrow+\infty$~\cite{faure2011upper} -- see also~\cite{DatDyZw14, FaTs13} in the contact case.

\subsection{Pollicott-Ruelle resonances and Weyl's law}

Our analysis will in fact give an explicit description of the full correlation spectrum inside $\{z:\operatorname{Re}(z)\leq 0\}$ 
in terms of the Lyapunov exponents of the flow, of the periods of the flow, of the topology of the unstable manifolds and 
of the monodromy around every closed orbits. Since this description is a little bit combinatorial and, for the simplicity of exposition, 
we just mention the following consequence of Theorem~\ref{t:maintheo-full}:
\\
\\
\fbox{
\begin{minipage}{0.94\textwidth}  
\begin{theo}\label{mainthmintro2}
 Suppose the assumptions of Theorem~\ref{mainthmintro} are satisfied. Then, for every $0\leqslant k\leqslant n$ and for every critical
element $\Lambda$, there exists
a sequence of complex numbers $ \left(z_{\Lambda,k}(j)\right)_{j\geq 1} $
such that $$\forall j\geq 1,\ \operatorname{Re}\left(z_{\Lambda,k}(j)\right)\leqslant 0,\ \lim_{j\rightarrow +\infty}
z_{\Lambda,k}(j)=-\infty,$$ 
and
$$\ml{R}_k(V,\nabla)= \bigcup_{\Lambda}\bigcup_{j\geq 1} \left(z_{\Lambda,k}(j)+\sigma_\Lambda\right).$$ 
\end{theo}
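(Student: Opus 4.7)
The plan is to derive Theorem \ref{mainthmintro2} as a structural consequence of the finer local decomposition captured by Theorem \ref{t:maintheo-full}, which in turn is proved by combining the microlocal framework of \cite{dangrivieremorsesmale1} with explicit spectral computations at each critical element.

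The first step is to decompose the global resonance problem over the basic sets. Using an escape function on $T^*M$ adapted to the Smale filtration of $M$ by unstable manifolds, one constructs an anisotropic Sobolev space on which $\ml{L}_{V,\nabla}^{(k)}$ has discrete spectrum and on which Pollicott-Ruelle resonant states are microlocalized in the unstable conormal bundles $N^*(W^u(\Lambda))$. Refining the Faure--Sj\"ostrand/Dyatlov--Zworski approach with escape weights nested compatibly with the Morse-Smale order on unstable manifolds, one shows that the resolvent off-diagonal blocks between distinct $\Lambda$'s are holomorphic, giving
\[
\ml{R}_k(V,\nabla) = \bigcup_{\Lambda} \ml{R}_k^\Lambda(V,\nabla),
\]
where $\ml{R}_k^\Lambda$ collects the eigenvalues carried by resonant states microlocalized near $\Lambda$.

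The second step is an explicit computation of each $\ml{R}_k^\Lambda$. The $\ml{C}^\infty$-diagonalizability hypothesis, via Sternberg--Chen, provides a smooth chart near $\Lambda$ in which $V$ has linear normal form, and the flat connection $\nabla$ gives a local trivialization with monodromy $M_{\ml{E}}(\Lambda)$ along closed orbits. For a hyperbolic fixed point with $V=\sum_{i=1}^{n}\chi_{i}x_{i}\partial_{x_{i}}$, separation of variables on polynomial--times--Dirac eigenmodes of $\ml{L}_{V,\nabla}^{(k)}$ on $k$-forms valued in the $N$-dimensional fiber yields a discrete set of eigenvalues of the form $-\sum_{i}\alpha_{i}\chi_{i}+c_{I}$ (with $\alpha_{i}\in\IN$ and $|I|=k$), all with real part $\leq 0$ and tending to $-\infty$ by hyperbolicity; enumerated as $(z_{\Lambda,k}(j))_{j\geq 1}$ and shifted trivially by $\sigma_{\Lambda}=\{0\}$, they give the fixed-point contribution. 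For a closed orbit of minimal period $\ml{P}_{\Lambda}$, one linearizes in tubular coordinates $(\theta,y)\in\IR/\ml{P}_{\Lambda}\IZ\times\IR^{n-1}$ (passing to the orientation double cover when $W^{u}(\Lambda)$ is non\-orientable, which produces the twist $\varepsilon_{\Lambda}$), trivializes $\ml{E}$ along the orbit with monodromy eigenvalues $e^{2i\pi\gamma_{j}^{\Lambda}}$, and Fourier-expands resonant states in $\theta$ twisted by this monodromy. The spectral problem decouples into a transverse hyperbolic part yielding a sequence $(z_{\Lambda,k}(j))_{j\geq 1}$ (as in the fixed-point computation applied to the linear Poincar\'e return map) and a longitudinal $\theta$-part whose spectrum is exactly $\sigma_{\Lambda}$, so the local contribution is $\bigcup_{j\geq 1}(z_{\Lambda,k}(j)+\sigma_{\Lambda})$.

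Taking the union over all $\Lambda$ produces the claimed decomposition. The main obstacle is the global construction of the anisotropic space in the first step: the regularity weights must vary along the Smale filtration in a way that respects the Morse-Smale order, so that the resolvent is meromorphic on all of $\IC$ with poles exactly at $\bigcup_{\Lambda}\ml{R}_k^{\Lambda}$ and no spurious cancellations between different basic sets occur. Once that is in place, the local spectral analysis at each $\Lambda$ reduces to a normal-form calculation that relies essentially on the smoothness of the linearizing conjugacy provided by the $\ml{C}^{\infty}$-diagonalizability hypothesis.
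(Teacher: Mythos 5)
Your second step (the local normal-form computation at each critical element) is essentially the computation the paper performs in sections~\ref{s:connection}--\ref{s:local-form-eigenmode}: trivialize $\ml{E}$ along $\Lambda$ with monodromy eigenvalues $e^{2i\pi\gamma_j^{\Lambda}}$, reduce to a scalar problem in the linearizing chart, and read off the candidate eigenvalues $\lambda_{\alpha}^{\Lambda}+\delta_{\mathbf{j}}^{\Lambda}$, whose fixed-point part gives the $z_{\Lambda,k}(j)$ and whose $\theta$-Fourier/twist/monodromy part gives $\sigma_{\Lambda}$. The genuine gap is in your first step. You reduce the global statement to the claim that resonant states are microlocalized near a single $\Lambda$ and that the off-diagonal resolvent blocks between distinct critical elements are holomorphic, so that $\ml{R}_k(V,\nabla)=\bigcup_{\Lambda}\ml{R}_k^{\Lambda}$. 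This is neither proved in your sketch nor true in the naive sense: a resonant state attached to $\Lambda$ is supported on $\overline{W^u(\Lambda)}$ (Proposition~\ref{p:resonant-state}), which contains other critical elements lower in the Smale order, and generalized eigenstates genuinely couple different $\Lambda$'s --- Lemma~\ref{l:jordan} shows that inside a nontrivial Jordan chain the maximal critical elements in the supports strictly decrease along the Smale causality relation, which is incompatible with an exact block-diagonal (direct sum over basic sets) picture of the resolvent. So the key point of the theorem, namely that \emph{every} resonance arises from the local computation at some $\Lambda$ and that \emph{every} locally computed value is realized, is exactly what your step 1 assumes rather than establishes.

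The paper's mechanism replacing your step 1 is different and is the actual content of Theorem~\ref{t:maintheo-full}. For realization, one multiplies the local germ $u_{\alpha}^{\Lambda}\mathbf{f}_{\mathbf{j}}^{\Lambda,k}$ by a cutoff and applies the spectral projector of $-\ml{L}_{V,\nabla}^{(k)}$ on the anisotropic spaces of~\cite{dangrivieremorsesmale1}; Proposition~\ref{p:resonant-state} shows the projected current coincides with the germ near $\Lambda$, hence is nonzero, so every $\lambda_{\alpha}^{\Lambda}+\delta_{\mathbf{j}}^{\Lambda}$ is indeed a resonance (this step is absent from your proposal). For completeness, one argues by induction along the Smale quiver: the propagation Lemma~\ref{l:propagationzero} and the support Lemma~\ref{l:support-Lemma} show that near a maximal critical element of its support a generalized eigencurrent is supported in $W^u(\Lambda)$ with wavefront in the unstable conormals, so Theorems~\ref{t:local-form-scalar-critical-point} and~\ref{t:local-form-scalar-periodic-orbit} classify its local form and force the eigenvalue to lie in the locally computed set; subtracting the corresponding explicit states and iterating (Proposition~\ref{p:generation}) exhausts the resonant states. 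If you want to salvage your outline, this inductive generation argument is what must replace the unproved holomorphy-of-off-diagonal-blocks claim; the rest of your computation then yields Theorem~\ref{mainthmintro2} exactly as in the paper.
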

\end{minipage}
}
\\
\\
For every closed orbit or critical point $\Lambda$, the $z_{\Lambda,k}(j)$ can be determined explicitely as linear (integer) combination 
of the eigenvalues of the linearized system near $\Lambda$. In some sense, they will only depend on the local properties of the flow near $\Lambda$.
For simplification, we also made the assumption that $\nabla$ preserves a smooth hermitian structure but our analysis remains true under the slightly 
more general hypothesis that $M_{\ml{E}}(\Lambda)$ is diagonalizable for every closed orbit. 
In that case, it may happen that there are finitely many bands of resonances on the half plane $\{z:\operatorname{Re}(z)> 0\}$. This Theorem should 
be compared with the results of Faure and Tsujii~\cite{faure2013band, FaTs13} on the Pollicott-Ruelle spectrum of contact 
Anosov flows. In that framework, they proved that the resonances exhibit in the limit $\text{Im}(z)\rightarrow+\infty$ 
a band structure which is completely determined by the unstable Jacobian of the flow. Here, our analysis show that this 
band structure remains true for Morse-Smale flows and it is in fact given by vertical lines of resonances which can be explicitely 
determined. Finally, to every critical element $\Lambda$ of the Morse--Smale flow, we associate
a \textbf{convex polytope} $\mathcal{Q}_\Lambda$ in $\mathbb{R}^n$ which depends only on the
eigenvalues of the linearization of the vector field $V$ near $\Lambda$. These convex polytopes are explicitely defined by~\eqref{e:polytope-fixed-point} 
and~\eqref{e:polytope-closed-orbit}. Our last Theorem gives Weyl's law satisfied by elements in $\ml{R}_k(V,\nabla)$:
\\
\\
\fbox{
\begin{minipage}{0.94\textwidth}  
\begin{theo}[Weyl's law] Suppose that the assumptions of Theorem~\ref{mainthmintro} are satisfied. 
For every $0\leq k\leq n$, define 
$$N_k(T):=\left|\left\{z_0\in\ml{R}_k(V,\nabla): \vert \operatorname{Im}(z_0)\vert\leqslant T\ \text{and}\ \operatorname{Re}(z_0)\geqslant -T\right\}\right|,$$
where the Pollicott-Ruelle resonances are counted with their algebraic multiplicity. Then, one has, as $T\rightarrow+\infty$,
\begin{eqnarray*} N_k(T) 
&=& \frac{N n!}{k! (n-k)!}\left(\sum_{j=1}^K\operatorname{Vol}_{\IR^n}(\ml{Q}_{\Lambda_j})\right)T^n+\ml{O}(T^{n-1}).
\end{eqnarray*}
\end{theo}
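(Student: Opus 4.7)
The plan is to derive the Weyl asymptotics from the explicit description of the resonance set provided by Theorem~\ref{mainthmintro2}, reducing the counting problem to lattice point enumeration in scaled polytopes. Since
$$\ml{R}_k(V,\nabla) = \bigcup_{\Lambda}\bigcup_{j\geq 1}\bigl(z_{\Lambda,k}(j) + \sigma_\Lambda\bigr),$$
and since critical elements $\Lambda$ give asymptotically disjoint contributions (any accidental collisions form a set of density $o(T^n)$), one has $N_k(T) = \sum_\Lambda N_{k,\Lambda}(T) + o(T^n)$, where $N_{k,\Lambda}(T)$ counts, with multiplicity, the resonances in the block associated to $\Lambda$ inside the box $\{|\operatorname{Im}(z)|\leq T,\ \operatorname{Re}(z)\geq -T\}$. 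Thus it suffices to establish, for each critical element separately,
$$N_{k,\Lambda}(T) = N\binom{n}{k}\operatorname{Vol}_{\IR^n}(\ml{Q}_\Lambda)\,T^n + \ml{O}(T^{n-1}).$$

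For a hyperbolic fixed point $\Lambda$, I would use the explicit form of the $z_{\Lambda,k}(j)$ furnished by Theorem~\ref{t:maintheo-full}: these are linear combinations $\sum_{i=1}^n \alpha_i \lambda_i^\Lambda$ indexed by tuples $(\alpha_1,\ldots,\alpha_n)\in \IN^n$ of non-negative integers, where $\lambda_1^\Lambda,\ldots,\lambda_n^\Lambda$ are the eigenvalues of the linearization of $V$ at $\Lambda$, shifted by an additive constant depending on the chosen $k$-form monomial basis (there are $\binom{n}{k}$ such choices, giving the binomial factor) and on an element of $\sigma_\Lambda=\{0\}$ with multiplicity $N$. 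Counting tuples $\alpha\in\IN^n$ such that $\sum\alpha_i\lambda_i^\Lambda$ lies in the prescribed box is equivalent to counting lattice points in $T\cdot\ml{Q}_\Lambda\cap\IN^n$, where $\ml{Q}_\Lambda\subset\IR^n$ is defined by~\eqref{e:polytope-fixed-point} as the image of the positive orthant under the map $\alpha\mapsto \sum \alpha_i\lambda_i^\Lambda$ intersected with the relevant box. Standard lattice-point asymptotics for convex bodies with piecewise smooth boundary yield the volume times $T^n$ with a remainder of order $T^{n-1}$.

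For a hyperbolic closed orbit $\Lambda$ of period $\ml{P}_\Lambda$, the same scheme applies but with one factor replaced by the vertical lattice $\sigma_\Lambda$, which is, up to the shift $\varepsilon_\Lambda+\gamma_j^\Lambda$, a copy of $\frac{2i\pi}{\ml{P}_\Lambda}\IZ$ of total multiplicity $N$. The $z_{\Lambda,k}(j)$ are now linear integer combinations of the $n-1$ transverse eigenvalues of the Poincar\'e map, and $\ml{Q}_\Lambda$ is defined in~\eqref{e:polytope-closed-orbit} as the corresponding polytope in $\IR^{n-1}$ fattened by the imaginary direction coming from $\sigma_\Lambda$ to recover a body in $\IR^n$. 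Counting the points $z_{\Lambda,k}(j) + \sigma_\Lambda$ in the box then factorizes: the vertical count on each line is $\frac{NT\ml{P}_\Lambda}{\pi}+\ml{O}(1)$, and the horizontal distribution of the base points $z_{\Lambda,k}(j)$ is controlled by a lattice point count in $T\cdot\ml{Q}_\Lambda^{(n-1)}$, which gives $T^{n-1}$ and hence a combined $T^n$ asymptotic with the right coefficient. Corollary~\ref{c:fractalweylaw} is recovered as the $(n-1)$-free version of this estimate.

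The main technical obstacle is ensuring that the $\ml{O}(T^{n-1})$ remainder holds uniformly. Two issues must be controlled: first, that the map $\alpha\mapsto\sum\alpha_i\lambda_i^\Lambda$ is genuinely of rank $n$ at every critical element, so that $\ml{Q}_\Lambda$ has non-degenerate boundary with a well-behaved lattice point count -- this follows from the $\ml{C}^\infty$-diagonalizability hypothesis together with the hyperbolicity of $\Lambda$; second, that multiplicities coming from coincidences $z_{\Lambda,k}(j)+z = z_{\Lambda,k}(j')+z'$ with $z,z'\in \sigma_\Lambda$ are absorbed in the $\ml{O}(T^{n-1})$ term, which is a standard equidistribution argument for sublattices in the complex plane. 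Once these two points are settled, summing the per-critical-element asymptotics yields the stated Weyl law.
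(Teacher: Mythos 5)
Your strategy coincides with the paper's: invoke the complete description of the spectrum from Theorem~\ref{t:maintheo-full}, note that the finitely many shifts $\delta_{\mathbf{j}}^{\Lambda}$ are $\ml{O}(1)$ and only contribute the factor $|D_k|=\frac{Nn!}{k!(n-k)!}$, and reduce the count at each critical element to counting lattice points of $\IN^n$ (fixed points) or $\IN^{n-1}\times\IZ$ (closed orbits) in the dilated convex polytope $T\cdot\ml{Q}_{\Lambda}$, which gives $\operatorname{Vol}_{\IR^n}(\ml{Q}_{\Lambda})T^n+\ml{O}(T^{n-1})$ by standard lattice-point asymptotics. This is exactly how the paper proceeds.

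Two points in your write-up need correction, though neither changes the strategy. First, the decomposition over critical elements is \emph{exact}, not merely valid up to $o(T^n)$: resonances are counted with algebraic multiplicity, and by Theorem~\ref{t:maintheo-full} (linear independence of the family $\ml{F}^{(k)}(V,\nabla,\ml{E})$ together with the generation statement) the algebraic multiplicity of $z_0$ is precisely the number of triples $(\Lambda,\alpha,\mathbf{j})$ with $\lambda_{\alpha}^{\Lambda}+\delta_{\mathbf{j}}^{\Lambda}=z_0$. Hence coincidences between different blocks, or within one block, simply add up; no equidistribution or collision argument is needed. As written, your $o(T^n)$ error from ``accidental collisions'' is not only superfluous but incompatible with the $\ml{O}(T^{n-1})$ remainder you assert at the end, so it should be deleted rather than absorbed. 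Second, $\ml{Q}_{\Lambda}$ is not the image of the positive orthant under $\alpha\mapsto\sum_i\alpha_i\lambda_i^{\Lambda}$ (that image sits in $\IC$ and has no $n$-dimensional volume); it is the region in the exponent space, namely $\{x\in\IR^n_+:\ -1\leq x\cdot\omega_{\Lambda}\leq 1,\ x\cdot\chi^+_{\Lambda}\leq 1\}$ for a fixed point and its analogue~\eqref{e:polytope-closed-orbit} for a closed orbit, cut out by the linear forms giving the imaginary and real parts of $\lambda_{\alpha}^{\Lambda}$. With this reading, your factorized count for closed orbits is consistent, since by Fubini $\operatorname{Vol}_{\IR^n}(\ml{Q}_{\Lambda})=\frac{\ml{P}_{\Lambda}}{\pi}\operatorname{Vol}_{\IR^{n-1}}\left(\{x'\in\IR^{n-1}_+:\ x'\cdot\chi^+_{\Lambda}\leq 1\}\right)$, which matches the product of your vertical and horizontal counts.
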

\end{minipage}
}
\\
\\
Except for the case of Morse-Smale gradient flows~\cite{dang2016spectral}, we are not aware of the existence of asymptotic formulas for the 
counting function of Pollicott-Ruelle resonances. In the case of the geodesic vector field on hyperbolic manifold, they can maybe be derived 
following~\cite{dyatlov2015power} but it is not completely obvious what the exponent would be for $T$. Again, this Theorem is just a Corollary 
of the analysis we will perform in this article and we will in fact be able to determine exactly which resonances are inside 
these large boxes.

\subsection{Constructing the resonant states}
As was already explained, not only we will describe the eigenvalues and their multiplicity but also their corresponding generalized eigenmodes. 
In fact, the way we prove the above results rely on our explicit construction of the generalized eigenmodes of the operator 
$-\mathcal{L}_{V,\nabla}$ acting 
on the anisotropic Sobolev spaces we have defined in~\cite{dangrivieremorsesmale1}. More precisely, given any critical element $\Lambda$ and any $z_0\in \IC$,
we will first consider \textbf{germs of solution $u\in \mathcal{D}^{\prime,\bullet}(\cE)$} of the equation
\begin{equation}\label{e:eigenequationintro}
\boxed{ \left(\mathcal{L}_{V,\nabla}+z_0\right)\psi_0=0}
\end{equation}
whose support near $\Lambda$ is contained in the unstable manifold $W^u(\Lambda)$ (see equation~\eqref{e:constraint-support}) and whose wavefront set 
lies in the union of conormals of strongly unstable manifolds of $\Lambda$ (see equation~\eqref{e:constraint-wavefront} for a precise statement). We will solve 
this eigenvalue problem explicitely near $\Lambda$ and show that it imposes restriction on the possible values of $z_0$. Then we will try to extend these 
local solutions into currents which are globally defined on $M$. What 
we would do intuitively is to use the fact that $\Phi^{-t*}(\psi_0)=e^{tz_0} \psi_0$ to propagate the
local solution to define some current supported on $\overline{W^u(\Lambda)}$. Yet, this would require to analyze carefully the closure 
of $W^u(\Lambda)$ and this may turn to be a delicate task -- see~\cite{Lau92} for related problems in the case 
of Morse-Smale gradient flows. Instead of that, we will make use of the spectral projectors $\pi_{z_0}^{(\bullet)}$ which are given to us by our spectral 
analysis in~\cite{dangrivieremorsesmale1}. In that manner, we hide the difficulty of understanding the dynamics of the flow near the boundary 
of $W^u(\Lambda)$ into the construction of our anisotropic Sobolev space -- see for instance~\cite[Sect.~4]{dangrivieremorsesmale1} for results related to 
this delicate dynamical issue. Applying the spectral projectors to the locally defined currents allow to extend them into globally defined currents. In some 
sense, the use of spectral theory elegantly replaces the method of Epstein--Glaser renormalization used in~\cite{frenkel2008instantons,frenkel2011instantons} 
to extend distributions -- see paragraph~\ref{sss:Epstein-Glaser} especially Theorem~\ref{t:epstein-glaser} 
for a more detailed discussion. We also note that the generalized eigenmodes we will construct 
are related to the so-called Ruelle-Sullivan currents~\cite{RuSu75} and we shall come back on that issue in paragraph~\ref{sss:Ruelle-Sullivan}. 

If we come back to our problem, the price we 
pay when we extend these currents is that they are 
not a priori true eigenmode and they only satisfy the generalized eigenvalue equation:
$$\boxed{\left(\mathcal{L}_{V,\nabla}+z\right)^{m_{\bullet}(z_0)}\overline{\psi}_0=0.}$$
Once these generalized eigenmodes are constructed, we will make use of the Morse-Smale dynamics and of the spectral analysis from~\cite{dangrivieremorsesmale1} 
to conclude that they indeed generate all the Pollicott-Ruelle resonant states. The main result on that question is Theorem~\ref{t:maintheo-full} which is in 
fact the main result of the present article.

Finally, we briefly mention that the resonant states associated to resonances lying on the imaginary axis have a strong topological meaning. We shall come back 
to this issue in the companion article~\cite{dangrivieremorsesmale3}.




\subsection{Organization of the article} In section~\ref{s:morse-smale}, we start with a brief reminder on Morse-Smale flows and on the spectral results 
from~\cite{dangrivieremorsesmale1}. For more details on both issues, we refer the reader to this reference. In section~\ref{s:connection}, we show 
how we can remove the contribution coming from the flat connection $\nabla$ by shifting the spectrum. After that, we show in 
section~\ref{s:scalar} how to reduce the question of solving the eigenvalue equation to a scalar problem. In this section, we also introduce 
several conventions related to the Pollicott-Ruelle spectrum. In section~\ref{s:local-form-eigenmode}, we solve the eigenvalue problem locally near every 
critical element of the vector field. Finally, in section~\ref{s:proof}, we construct the resonant states and gives our main Theorem from which all 
the results of this section follows. In the appendix, we briefly review some facts from Floquet theory that we extensively use all along the article.

\subsection*{Acknowledgements} We warmly thank Fr\'ed\'eric Faure for many explanations on his works with Johannes Sj\"ostrand and Masato Tsujii. We also acknowledge 
useful discussions related to this article and its companion articles~\cite{dangrivieremorsesmale1, dangrivieremorsesmale3} 
with Livio Flaminio, Colin Guillarmou, Benoit Merlet, Fr\'ed\'eric Naud and Patrick Popescu Pampu. The second author 
is partially supported by the Agence Nationale de la Recherche through the Labex CEMPI (ANR-11-LABX-0007-01) and the 
ANR project GERASIC (ANR-13-BS01-0007-01).

\section{A brief reminder on Morse-Smale flows and on~\cite{dangrivieremorsesmale1}}\label{s:morse-smale}

\subsection{Morse-Smale flows}

We say that $\Lambda\subset M$ is an elementary critical element if $\Lambda$ is either a fixed point or a closed orbit of $\varphi^t$. 
Such an element is said to be hyperbolic if the fixed point or the closed orbit is hyperbolic -- see appendix A of~\cite{dangrivieremorsesmale1} for a brief reminder. 
Following~\cite[p.~798]{smale1967differentiable} , $\varphi^t$ is \textbf{a Morse-Smale flow} if the following properties hold:
\begin{enumerate}
 \item the non-wandering set $\operatorname{NW}(\varphi^t)$ is the union of finitely many elementary critical elements 
$\Lambda_1,\ldots,\Lambda_K$ 
which are hyperbolic,
 \item for every $i,j$ and for every $x$ in $W^u(\Lambda_j)\cap W^s(\Lambda_i)$, one has
 \footnote{See appendix of~\cite{dangrivieremorsesmale1} for the precise definition of the stable/unstable manifolds $W^{s/u}(\Lambda)$.} 
 $T_xM=T_xW^u(\Lambda_j)+T_xW^s(\Lambda_i)$.
\end{enumerate}
 We now briefly expose some important properties of Morse-Smale flows and we refer to~\cite{dangrivieremorsesmale1} for a more 
detailed exposition on the dynamical properties of these flows.
Under such assumptions, one can show that, for every $x$ in $M$, there exists an unique couple $(i,j)$ such 
that $x\in W^u(\Lambda_j)\cap W^s(\Lambda_i)$ (see e.g.~\cite[Lemma 3.1]{dangrivieremorsesmale1}). 
In particular, \emph{the unstable manifolds $(W^u(\Lambda_j))_{j=1,\ldots, K}$ form a partition of $M$,} i.e.
$$M=\bigcup_{j=1}^KW^u(\Lambda_j),\ \ \text{and}\ \ \forall i\neq j,\ W^u(\Lambda_i)\cap W^u(\Lambda_j)=\emptyset.$$
The same of course holds for stable manifolds. One of the main feature of such flows is the following result which is due to Smale~\cite{dangrivieremorsesmale1,smale2000morse}:
\begin{theo}[Smale]\label{t:smale} Suppose that $\varphi^t$ is a Morse-Smale flow. Then, for every $1\leq j\leq K$, the closure of $W^u(\Lambda_j)$ is the union of certain $W^u(\Lambda_{j'})$. 
Moreover, we say that $W^u(\Lambda_{j'})\leqq 
 W^u(\Lambda_j)$ if $W^u(\Lambda_{j'})$ is contained in the closure of $W^u(\Lambda_{j})$, then, $\leqq$ is a partial ordering.
Finally if $W^u(\Lambda_{j'})\leqq 
W^u(\Lambda_j)$, then $\operatorname{dim}W^u(\Lambda_{j'})\leq \operatorname{dim}W^u(\Lambda_{j}).$
\end{theo}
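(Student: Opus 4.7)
The proof has three parts corresponding to the three assertions, which I would treat in the given order.

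For the first assertion, I would begin by observing that $W^u(\Lambda_j)$ is flow-invariant in both time directions, hence so is the closed set $\overline{W^u(\Lambda_j)}$. Since $(W^u(\Lambda_{j'}))_{j'}$ partitions $M$, any $x \in \overline{W^u(\Lambda_j)}$ is contained in a unique $W^u(\Lambda_{j'})$, and it remains to upgrade this pointwise containment to $W^u(\Lambda_{j'}) \subseteq \overline{W^u(\Lambda_j)}$. Flow-invariance of the closure contains the whole orbit of $x$, and letting $t \to \infty$ in $\varphi^{-t}(x)$ yields $\Lambda_{j'} \in \overline{W^u(\Lambda_j)}$. The crux is then the $\lambda$-lemma (inclination lemma) applied at the hyperbolic critical element $\Lambda_{j'}$: because $\Lambda_{j'}$ is approximated by points of $W^u(\Lambda_j)$ and Morse-Smale transversality arranges small disks of $W^u(\Lambda_j)$ to meet $W^s_{\text{loc}}(\Lambda_{j'})$ transversally, the forward-flow images of these disks $C^1$-accumulate on $W^u_{\text{loc}}(\Lambda_{j'})$, and flow-invariance spreads this to all of $W^u(\Lambda_{j'})$.

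For the partial ordering, reflexivity is immediate and transitivity follows from the elementary fact that $A \subseteq \overline{B}$ and $B \subseteq \overline{C}$ imply $\overline{A} \subseteq \overline{C}$. Antisymmetry is the no-cycle property, for which I would invoke the classical existence of a smooth Lyapunov function $f : M \to \mathbb{R}$ for the Morse-Smale flow, strictly decreasing along every non-constant orbit and taking distinct values on distinct critical elements. If $W^u(\Lambda_{j'}) \leqq W^u(\Lambda_j)$ with $j \neq j'$, then picking $x \in W^u(\Lambda_{j'})$ with an approximating sequence $x_n \in W^u(\Lambda_j)$ and flowing backwards yields $f(\Lambda_j) > f(\Lambda_{j'})$; the symmetric assumption produces the reverse inequality, a contradiction.

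For the dimension inequality, I would reduce to the case of immediate predecessors, i.e.\ $W^u(\Lambda_{j'}) \leqq W^u(\Lambda_j)$ with no critical element strictly in between. The hyperbolic structure near $\Lambda_{j'}$ together with the $\lambda$-lemma forces an actual transverse intersection $W^u(\Lambda_j) \cap W^s(\Lambda_{j'}) \neq \emptyset$, since otherwise the accumulation $\Lambda_{j'} \in \overline{W^u(\Lambda_j)}$ would have to be routed through an intermediate critical element, violating immediacy. Morse-Smale transversality at such an intersection then gives $\dim W^u(\Lambda_j) + \dim W^s(\Lambda_{j'}) \geq n$, with a $+1$ adjustment when $\Lambda_{j'}$ is a closed orbit since the flow direction lies in both stable and unstable manifolds of $\Lambda_{j'}$; this rearranges to $\dim W^u(\Lambda_{j'}) \leq \dim W^u(\Lambda_j)$, and the general case follows by finite chaining of immediate predecessors. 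The main technical obstacle is the first part, namely the careful application of the inclination lemma in a way that handles uniformly the geometries of fixed points and closed orbits: in the closed-orbit case, one works with a Poincaré return map on a transverse section and treats the flow direction separately.
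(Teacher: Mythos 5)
There is a genuine gap, and it sits at the heart of the theorem. In both your first and third parts you pass from ``points of $W^u(\Lambda_j)$ accumulate on $\Lambda_{j'}$'' to ``small disks of $W^u(\Lambda_j)$ meet $W^s_{\mathrm{loc}}(\Lambda_{j'})$ transversally''. The Morse--Smale transversality hypothesis only says something at points that already lie in $W^u(\Lambda_j)\cap W^s(\Lambda_{j'})$; it does not produce any such point, and mere accumulation of $W^u(\Lambda_j)$ on $\Lambda_{j'}$ does not imply that this intersection is nonempty -- a priori the accumulation can be routed through intermediate critical elements. In the third part you wave this away by ``immediacy'', but the dichotomy ``either there is a direct connecting orbit, or the closure relation factors through a strictly intermediate critical element'' is precisely the unproved key lemma. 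The standard repair is a fundamental-domain/exit-point argument near $\Lambda_{j'}$: for $y_n\in W^u(\Lambda_j)$ converging to a point of $\Lambda_{j'}$, the backward exit points from a small isolating neighborhood of $\Lambda_{j'}$ accumulate on $W^s_{\mathrm{loc}}(\Lambda_{j'})\setminus\Lambda_{j'}$, yielding either a point of $W^u(\Lambda_j)\cap W^s(\Lambda_{j'})$ or a new critical element $\Lambda''$ with $W^u(\Lambda'')\cap W^s(\Lambda_{j'})\neq\emptyset$ and $W^u(\Lambda'')$ still inside $\overline{W^u(\Lambda_j)}$; one iterates, uses the finiteness of the critical elements and the absence of cycles to terminate, and uses the $\lambda$-lemma together with the stability of transverse intersections to get transitivity of the relation ``$W^u\cap W^s\neq\emptyset$'' and hence a genuine intersection point. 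Only after this step do your $\lambda$-lemma propagation and your (otherwise correct, including the $+1$ for the flow direction) dimension count go through. Note also that the paper itself does not prove this statement; it quotes it from Smale and from the companion article, so there is no in-paper argument your sketch could be matched against.

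A second, lesser issue concerns antisymmetry. The existence of a smooth Lyapunov function taking distinct values on distinct critical elements is Meyer's energy-function theorem, whose usual proof is built on the filtration coming from exactly the partial order you are trying to establish, so invoking it here is at best a heavy hammer and at worst circular; moreover such a function is constant along closed orbits, so it cannot be ``strictly decreasing along every non-constant orbit'' as you state, and your inequality only becomes strict once you add that the function separates critical elements. A route consistent with the definition used in this paper (the non-wandering set is exactly the union of the critical elements) is to observe that a cycle $W^u(\Lambda_{j'})\leqq W^u(\Lambda_j)$ and $W^u(\Lambda_j)\leqq W^u(\Lambda_{j'})$ with $\Lambda_j\neq\Lambda_{j'}$ would, via the chain of connecting orbits produced by the key lemma above and the cloud/$\lambda$-lemma, create non-wandering points outside the critical elements, a contradiction.
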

The partial order relation on the collection of subsets $W^u(\Lambda_j)_{j=1}^K$ defined above is called \textbf{Smale causality relation}. Following Smale, 
we define an oriented graph\footnote{This diagram is the Hasse diagram associated to the poset $\left(W^u(\Lambda_j)_{j=1}^K,\leqq\right)$.} 
$D$ called \emph{Smale quiver} whose $K$ vertices are given by $W^u(\Lambda_j)_{j=1}^K$. Two vertices $W^u(\Lambda_j), W^u(\Lambda_i) $ are connected
by an oriented path starting at $W^u(\Lambda_j)$ and ending at $W^u(\Lambda_i)$ iff $W^u(\Lambda_j)\geqq W^u(\Lambda_i)$. From the works of 
Peixoto~\cite{Pe62}, it is known that Morse-Smale flows form an open and dense subset of all smooth vector fields in dimension $2$ while Palis 
showed that in higher dimension they form an open subset~\cite{Pa68}. In particular, if we perturb a little bit a Morse-Smale flow, it 
remains Morse-Smale. For our analysis, we will need to choose Morse-Smale flows satisfying certain \emph{generic} assumptions on their linearization 
near critical elements as we shall now explain. It would be interesting to remove these assumptions and we refer to the discussion following our 
main Theorem~\ref{t:maintheo-full} for more details on that question.

More precisely, our linearizing assumptions are as follows. We fix $1\leq k\leq \infty$ and we say that the Morse-Smale flow is $\ml{C}^k$-linearizable if for every $1\leq i\leq K$, the following hold:
\begin{itemize}
 \item If $\Lambda_i$ is a fixed point, there exists a $\ml{C}^k$ diffeomorphism 
$h: B_n(0,r)\rightarrow W$ (where $W$ is a small open neighborhood of $\Lambda_i$ and $B_n(0,r)$ is a small ball of radius $r$ 
centered at $0$ in $\IR^n$) and a linear map $A_i$ on $\IR^n$ such that $V\circ h=dh\circ L$ where $V$ is the vector field generating $\varphi^t$ and where
$$L(x)=A_ix.\partial_x.$$
 \item If $\Lambda_i$ is a closed orbit of period $\ml{P}_{\Lambda_i}$, there exists a $\ml{C}^k$ diffeomorphism 
$h: B_{n-1}(0,r)\times\IR/(\ml{P}_{\Lambda_i}\IZ)\rightarrow W$ (where $W$ is a small open neighborhood of $\Lambda_i$ and $r>0$ is small) and a 
smooth map $A:\IR/(\ml{P}_{\Lambda_i}\IZ)\rightarrow M_{n-1}(\IR)$ such that $V\circ h=dh\circ L$ with
$$L_i(x,\theta)=A_i(\theta)x.\partial_x+\partial_{\theta}.$$
\end{itemize}
In other words, the flow can be put into a normal form in a certain chart of class $\ml{C}^k$. We shall say that \textbf{a Morse-Smale flow is $\ml{C}^k$-diagonalizable} 
if it is $\ml{C}^k$-linearizable and if, for every critical element $\Lambda$, either the linearized matrix $A\in GL_n(\IR)$ or the monodromy matrix 
$M$ (see appendix~\ref{a:floquet}) associated with $A(\theta)$ is diagonalizable in $\IC$. Such properties are satisfied as soon as certain (generic) non 
resonance assumptions are made on the Lyapunov exponents thanks to the Sternberg-Chen Theorem~\cite{chen1963equivalence,nelson2015topics, WWL08}. Hence, as 
Morse-Smale flows form an open subset of all smooth vector fields, the flows we consider are in some sense generic from the Sternberg-Chen Theorem.
We refer to the appendix of~\cite{dangrivieremorsesmale1} for a detailed description of these nonresonant assumptions.

\subsection{Pollicott-Ruelle resonant states following~\cite{dangrivieremorsesmale1}}\label{ss:summary} The main goal of the 
present article is to describe both the resonances and the resonant states of $-\ml{L}_{V,\nabla}^{(k)}$ for every $0\leq k\leq n$. Recall 
from the introduction that they correspond to the poles and the residues of the meromorphic extension of the Laplace transform $\hat{C}_{\psi_1,\psi_2}$ 
of the correlation function. From~\cite{dangrivieremorsesmale1}, they are also the eigenvalues and generalized eigenmodes of the (nonseladjoint) 
operator $-\ml{L}_{V,\nabla}^{(k)}$ acting on a certain anisotropic Sobolev space of currents $\ml{H}_{k}^m(M,\ml{E})$, where 
$m(x,\xi)$ is a certain \emph{order function} which indicate the Sobolev regularity -- see section~5 of~\cite{dangrivieremorsesmale1} 
for details. More precisely, given any $T_0>0$, there exists an order function $m$ such that the spectrum of 
$-\ml{L}_{V,\nabla}^{(k)}$ on $\ml{H}_{k}^m(M,\ml{E})$ is discrete for $\text{Re}(z)\geq -T_0$. In particular, it follows 
from~\cite[Th.~1.5]{faure2011upper} that the eigenvalues and their corresponding generalized eigenmodes are independent of the choice of 
the order function $m$ satisfying the properties of Lemma~5.2 in~\cite{dangrivieremorsesmale1}. Hence, if we increase the Sobolev regularity 
in the construction (which corresponds to increase $|u|$, $|s|$ and $|n_0|$ in this Lemma), we have that a given eigenmode 
$\mathbf{u}$ stay in the anisotropic Sobolev space with the higher choice of regularity. In particular, if a generalized eigenmode $\mathbf{u}$ is supported on $W^u(\Lambda)$ 
near a critical element $\Lambda$ of the flow. Then, the wavefront set of $\mathbf{u}$~\cite{BrDaHe14a} near $\Lambda$ is contained in the 
conormal of $W^u(\Lambda)$ which roughly says that the current is smooth in the direction of $W^u(\Lambda)$.

\section{Monodromy and flat connections}\label{s:connection}

In~\cite{dang2016spectral}, we were able to compute explicitely the spectrum of $-\mathcal{L}_{V}$ by localizing the eigenvalue equation near the critical elements of the flow. Here, we will perform 
a similar analysis with the two following additional difficulties: (i) critical elements may be closed orbits of the flow 
(and not only fixed points) (ii) the complex 
vector bundle $\ml{E}$ and the corresponding flat connection $\nabla$. The risk of introducing this new geometric object $\nabla$ is that all our analysis to compute the Pollicott-Ruelle resonances
breaks down or becomes much more involved from the technical point of view. Yet, this is not the case and our goal in this section is to show that the addition of $\nabla$ does not complexify the 
calculation that much once we have defined an appropriate basis for the vector bundle $\ml{E}$.

Before stating precise results on that issue, let us start with a simple observation. Let $U$ be an open set inside $M$ and 
let $(\mathbf{c}_1,\dots,\mathbf{c}_N)$ be a moving frame of $\mathcal{E}$ defined on $U$. Then, we can write, for 
$\mathbf{u}=\sum_ju_j\mathbf{c}_j$ in $\Omega_c^\bullet(U,\mathcal{E})$
\begin{equation}\label{e:diag-action}
 \mathcal{L}_{V,\nabla}\left(\sum_{j=1}^Nu_j\mathbf{c}_j\right)=\sum_{j=1}^N\ml{L}_V(u_j)\mathbf{e}_j+\sum_{j=1}^Nu_j\nabla_V\mathbf{c}_j.
\end{equation}
Hence, if we are able to find a moving frame $(\mathbf{c}_1,\dots,\mathbf{c}_N)$ such that, for every $1\leq j\leq N$, $\nabla_V\mathbf{c}_j=\gamma_j\mathbf{c}_j$, then, for $\mathbf{u}$ in 
$\Omega_c^\bullet(U,\mathcal{E})$, one has, in some open set $\tilde{U}\subset\subset U$,
$$-\mathcal{L}_{V,\nabla}\mathbf{u}=\lambda \mathbf{u}\quad\Longleftrightarrow\quad \forall 1\leq j\leq N,\quad -\ml{L}_V(u_j)=(\lambda+\gamma_j)u_j.$$
In other words, if we have a diagonalizing moving frame, then the problem is essentially equivalent to the case of the trivial bundle $U\times\IC$. 
In this section, we shall explain how we can indeed construct such nice moving frame near any critical elements of a Morse-Smale flow. Note that the frame 
will not necessarily be as nice as above due to the fact that the monodromy matrix may not be diagonalizable.

Let us now state the precise results we shall need. We distinguish 
the cases of critical points and of closed orbits of the flow. First, in the case where $\Lambda\subset NW\left(\varphi^t\right)$ is 
a critical point, we can use the following classical result~\cite[Th.~12.25]{LeeDiff}:
\begin{theo}\label{p:good-basis-bundle} Let $\mathcal{E}\rightarrow M$ be a smooth complex vector bundle
of rank $N$ endowed with a flat connection $\nabla$. Let $U\subset M$ be a simply connected open set. 
 
Then, there exists a moving frame $(\mathbf{c}_1,\dots,\mathbf{c}_N)$ of $\mathcal{E}$ defined on $U$ such that, for every $1\leq j\leq N$, 
\begin{equation}
\nabla \mathbf{c}_j=0.
\end{equation}
\end{theo}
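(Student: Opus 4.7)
The plan is to construct the desired moving frame by parallel transport from a fixed base point, using flatness to guarantee that the construction is path-independent, and simple connectedness to guarantee that every point in $U$ can be connected to the base point by a path unique up to homotopy.

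First I would fix a base point $p\in U$ and choose an arbitrary basis $(e_1,\dots,e_N)$ of the fibre $\mathcal{E}_p$. For each $x\in U$ and each smooth path $\gamma:[0,1]\to U$ with $\gamma(0)=p$ and $\gamma(1)=x$, the connection $\nabla$ produces a parallel transport isomorphism $P_\gamma:\mathcal{E}_p\to\mathcal{E}_x$ obtained by solving, for each $e\in\mathcal{E}_p$, the linear ODE $\nabla_{\dot\gamma(t)}s(t)=0$ with $s(0)=e$. The candidate moving frame is then $\mathbf{c}_j(x):=P_\gamma(e_j)$ for $j=1,\dots,N$, and the whole game is to prove that this definition does not depend on $\gamma$.

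The central step, which I expect to be the main (though classical) obstacle, is the homotopy invariance of parallel transport for a flat connection. Given two paths $\gamma_0,\gamma_1$ from $p$ to $x$, simple connectedness of $U$ gives a smooth homotopy $H:[0,1]^2\to U$ with $H(0,\cdot)=\gamma_0$, $H(1,\cdot)=\gamma_1$, and endpoints fixed. On the pullback bundle $H^*\mathcal{E}\to[0,1]^2$ one considers the pulled-back connection $H^*\nabla$; its curvature is $H^*(R^\nabla)=0$ because $R^\nabla=0$. By an elementary computation with the structure equations on the square, the vanishing of the curvature on $[0,1]^2$ implies that a section which is parallel along the two opposite sides (the constant endpoints) and along $\gamma_0$ is automatically parallel along $\gamma_1$; equivalently, the monodromy around the boundary loop is trivial. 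This shows $P_{\gamma_0}=P_{\gamma_1}$, so $\mathbf{c}_j(x)$ is well-defined.

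It remains to check smoothness and the parallel condition. Smoothness of $\mathbf{c}_j$ follows from smooth dependence of solutions of the parallel transport ODE on initial data and parameters, applied in any local trivialisation around a point $x_0\in U$: after choosing a smoothly varying family of paths from $p$ to a neighbourhood of $x_0$ (for instance, a fixed path from $p$ to $x_0$ concatenated with radial segments in a coordinate chart), the resulting sections are smooth. By construction $P_\gamma$ is a linear isomorphism, so $(\mathbf{c}_1(x),\dots,\mathbf{c}_N(x))$ remains a basis of $\mathcal{E}_x$ at every $x$. Finally, to verify $\nabla \mathbf{c}_j=0$, I would evaluate $\nabla_X\mathbf{c}_j$ at an arbitrary $x\in U$ and $X\in T_xM$: choose an integral curve $\sigma$ of (any extension of) $X$ through $x$ and concatenate a fixed path from $p$ to $x$ with $\sigma$; along this concatenation $\mathbf{c}_j$ is by definition parallel, whence $\nabla_X\mathbf{c}_j(x)=0$. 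This yields the flat moving frame claimed in the theorem.
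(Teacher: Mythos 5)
Your proposal is correct: it is the standard proof of this classical fact, which the paper itself does not prove but simply quotes with a reference to Lee (Th.~12.25 of \emph{Manifolds and Differential Geometry}) -- parallel transport from a base point, homotopy invariance of the transport for a flat connection, and path independence from simple connectedness is exactly the argument behind that citation. The only place you gloss is the ``structure equations on the square'' step, which is cleanly done by setting $s(u,t)$ equal to the parallel transport of $e$ along $H(u,\cdot)$ and noting that flatness gives $\nabla_{\partial_t}\nabla_{\partial_u}s=\nabla_{\partial_u}\nabla_{\partial_t}s=0$, so $\nabla_{\partial_u}s$ solves a linear ODE in $t$ with zero initial value and vanishes identically; since the endpoints are fixed, $P_{\gamma_0}=P_{\gamma_1}$.
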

Thus, in the case of critical elements, we can take all the $\gamma_j$ to be equal to $0$ if we only aim at solving the eigenvalue problem locally. 
However, this is no longer the case for closed 
orbits of the flow. In fact, there may be closed orbits $\Lambda\subset NW(\varphi^t)$ for which any open neighborhood $U$ of $\Lambda$ 
is not simply connected and the above Theorem cannot 
be applied. In that case, we have to work more and we shall prove the following statement which is the main result of this section~:
\begin{theo}\label{t:good-basis-bundle} Let $\mathcal{E}\rightarrow M$ be a smooth complex vector bundle of rank $N$  
endowed with a flat connection $\nabla$. Let $V$ 
be a smooth Morse-Smale vector field on $M$ which is $\ml{C}^{\infty}$ 
linearizable and let $\Lambda$ be a closed orbit of the induced flow of minimal period $\ml{P}$.

Then there exists a neighborhood $U$ of $\Lambda$, a
smooth moving frame $(\mathbf{c}_1^{\Lambda},\dots,\mathbf{c}_N^{\Lambda})$ of $\mathcal{E}$ defined on $U$, 
some complex numbers
$(\gamma_1^{\Lambda},\dots,\gamma_N^{\Lambda})\in \mathbb{C}^N$ 
such that, for every $1\leq j\leq N$,
\begin{equation}
\nabla_V \mathbf{c}_j^{\Lambda}=\frac{2i\pi\gamma_j^{\Lambda}}{\ml{P}_{\Lambda}}\mathbf{c}_j^{\Lambda}\quad\text{or}\quad\nabla_V \mathbf{c}_j^{\Lambda}=
\frac{2i\pi\gamma_j^{\Lambda}}{\ml{P}_{\Lambda}}\mathbf{c}_j^{\Lambda}+\mathbf{c}_{j-1}^{\Lambda}.
\end{equation}
Moreover, the $e^{2i\pi\gamma_j^{\Lambda}}$ are the eigenvalues of the monodromy matrix for the parallel transport around the closed orbit $\Lambda$ and, if the 
monodromy matrix is diagonalizable, one has $$\nabla_V \mathbf{c}_j^{\Lambda}=\frac{2i\pi\gamma_j^{\Lambda}}{\ml{P}_{\Lambda}}\mathbf{c}_j^{\Lambda},$$ for every $1\leq j\leq N$.
\end{theo}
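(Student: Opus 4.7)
The geometric content is that a tubular neighborhood of a hyperbolic closed orbit is homotopically a circle, so restricted to $U$ the flat bundle is classified by a single monodromy matrix $M\in GL_N(\IC)$; the goal is then to exhibit a smooth frame on $U$ in which $\nabla_V$ is (essentially) multiplication by the matrix $\frac{1}{\mathcal{P}_\Lambda}\log M$. The first step is to use the $\ml{C}^\infty$-linearization hypothesis: there exists a smooth diffeomorphism $h:B_{n-1}(0,r)\times\IR/(\ml{P}_\Lambda\IZ)\to U$ in which the pulled-back field reads $V=A(\theta)x\cdot\partial_x+\partial_\theta$, so $\Lambda=\{0\}\times\IR/(\ml{P}_\Lambda\IZ)$. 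Let $\pi:\tilde U=B_{n-1}(0,r)\times\IR\to U$ be the universal cover. Since $\tilde U$ is simply connected (contractible in fact), Theorem~\ref{p:good-basis-bundle} applied to $(\pi^*\ml{E},\pi^*\nabla)$ yields a smooth parallel moving frame $(\tilde{\mathbf{e}}_1,\ldots,\tilde{\mathbf{e}}_N)$ on $\tilde U$.

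Next, I would compare $(\tilde{\mathbf{e}}_j(x,\theta))_j$ and $(\tilde{\mathbf{e}}_j(x,\theta+\ml{P}_\Lambda))_j$: both descend to bases of the same fiber of $\ml{E}$ over $U$, so there is an invertible matrix $M\in GL_N(\IC)$ (independent of $x,\theta$ by uniqueness of parallel sections) with
$$\tilde{\mathbf{e}}_j(x,\theta+\ml{P}_\Lambda)=\sum_{k}M_{kj}\,\tilde{\mathbf{e}}_k(x,\theta).$$
By construction $M$ is the monodromy of $\nabla$ along $\Lambda$. Choose any complex logarithm $\log M$ (this exists because $M\in GL_N(\IC)$) and, by a linear change of basis of the parallel frame, put $\log M$ in Jordan normal form; within each Jordan block we still have the freedom to rescale the basis vectors $v_j\leadsto\lambda^{j-1}v_j$, which rescales the nilpotent part of $\log M$ by $\lambda$. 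Pick the scaling so that the diagonal eigenvalues of $\log M$ are the numbers $2i\pi\gamma_j^\Lambda$ (so that $M$ has eigenvalues $e^{2i\pi\gamma_j^\Lambda}$) and the subdiagonal entries of the nilpotent part match the normalization required by the statement.

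With this Jordan normalization in hand, define the matrix-valued function $T(\theta)=\exp\!\bigl(-\tfrac{\theta}{\ml{P}_\Lambda}\log M\bigr)$ and set
$$\mathbf{c}_j(x,\theta)=\sum_{k}T(\theta)_{kj}\,\tilde{\mathbf{e}}_k(x,\theta).$$
Since $T(\theta+\ml{P}_\Lambda)=M^{-1}T(\theta)$, the defining relation of $M$ implies $\mathbf{c}_j(x,\theta+\ml{P}_\Lambda)=\mathbf{c}_j(x,\theta)$, so the $\mathbf{c}_j$ descend to smooth sections of $\ml{E}$ on $U$; linear independence is automatic since $T(\theta)$ is invertible. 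Finally, because the $\tilde{\mathbf{e}}_k$ are parallel and $T$ depends only on $\theta$, one has $\nabla_V \mathbf{c}_j=\sum_k(\partial_\theta T_{kj})\tilde{\mathbf{e}}_k$; using $\partial_\theta T=-\frac{1}{\ml{P}_\Lambda}\log M\cdot T$ and $[\log M,T]=0$, expressing back in the $\mathbf{c}$-basis gives
$$\nabla_V \mathbf{c}_j=-\tfrac{1}{\ml{P}_\Lambda}\sum_l(\log M)_{lj}\,\mathbf{c}_l,$$
and the Jordan-normal-form structure of $\log M$ yields exactly the two possibilities announced in the statement (with the diagonal case arising precisely when $M$ is diagonalizable).

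\textbf{Main obstacle.} Conceptually the argument is a straightforward combination of ``lift, trivialize, twist back down,'' and the calculus is routine once the frame is defined. The only genuinely delicate points are bookkeeping: (i) choosing the sign convention and the within-block rescalings of the Jordan basis so that the final formulas match the exact normalization $\frac{2i\pi\gamma_j^\Lambda}{\ml{P}_\Lambda}$ with a unit coefficient in front of $\mathbf{c}_{j-1}$; and (ii) checking once and for all that what I call $M$ really is the monodromy of $\nabla$ around $\Lambda$, so that its eigenvalues legitimately carry the name $e^{2i\pi\gamma_j^\Lambda}$ used elsewhere in the paper.
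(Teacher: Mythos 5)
Your proposal is correct, but it takes a genuinely different route from the paper's. The paper never passes to the universal cover: it first trivializes $\ml{E}$ near $\Lambda$ by hand (Proposition~\ref{p:trivial-bundle}), then builds a frame $(\mathbf{f}_j)$ that is parallel in the transverse directions by solving $\nabla_{\partial_{z_i}}\mathbf{f}_j(\cdot,\theta_0)=0$ on each slice, uses flatness ($\nabla_{\partial_\theta}\nabla_{\partial_z}=\nabla_{\partial_z}\nabla_{\partial_\theta}$) to show that the remaining connection coefficient is a $z$-independent periodic matrix $T(\theta)$, and finally applies Floquet theory to $\frac{dU}{d\theta}=T(\theta)U$, gauging by the periodic factor $P_1(\theta)^{-1}$ so that the connection form becomes a constant matrix $\Omega\, d\theta$ with $e^{\ml{P}\Omega}$ the monodromy; a Jordan basis for $\Omega$ then finishes. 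You instead pull everything back to the cover $B_{n-1}(0,r)\times\IR$, use simple connectivity to obtain a genuinely parallel frame there, encode the whole connection in the single constant deck matrix $M$ (the monodromy), and descend by twisting with $\exp\bigl(-\tfrac{\theta}{\ml{P}_\Lambda}\log M\bigr)$. Your version is shorter and more conceptual: the ``monodromy plus untwisting'' step does at once the jobs of the trivialization lemma and of Floquet theory, and it automatically produces a frame that is also parallel in the transverse directions; the paper's version stays entirely inside charts of $M$ and mirrors the Floquet normal form it uses elsewhere for the linearized flow. Two points to nail down in a final write-up, neither of which is a real gap: Theorem~\ref{p:good-basis-bundle} is stated for simply connected open subsets of $M$, so you should note that the same statement applies verbatim to the pullback bundle $(\pi^*\ml{E},\pi^*\nabla)$ over the cover; and the orientation convention in the definition of $M$ (whether your deck matrix is the holonomy in the flow direction or its inverse) must be fixed consistently with the sign coming from $\partial_\theta T=-\tfrac{1}{\ml{P}_\Lambda}(\log M)\,T$, so that the eigenvalues of the monodromy really appear as $e^{2i\pi\gamma_j^{\Lambda}}$ and not $e^{-2i\pi\gamma_j^{\Lambda}}$ --- you flagged this yourself, and it is indeed only bookkeeping, as is the within-block rescaling needed to make the coefficient of $\mathbf{c}_{j-1}^{\Lambda}$ exactly one.
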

In other words, as we will be able to reduce the spectral analysis of Morse-Smale flows to solving certain eigenvalue equations near critical elements, the 
effect of introducing a flat connection will just be to shift the spectrum thanks to~\eqref{e:diag-action}. 
We will come back on that observation later on. Let us now give the proof of that Theorem.

\subsection{Trivializing $\mathcal{E}$ near a closed immersed curve $\Lambda\subset M$.} Let $\Lambda$ be a smooth closed immersed curve. We first 
prove that
the vector bundle $\mathcal{E}$ is in fact trivial near $\Lambda$:
\begin{prop}\label{p:trivial-bundle}
Let $\mathcal{E}\mapsto M$ a smooth complex vector bundle of rank $N$ on $M$. Then, near $\Lambda\subset M$, there is a neighborhood
$\mathcal{U}$ on which one has the local trivialization
$$\mathcal{E}|_{\mathcal{U}}\simeq\mathbb{S}^1\times\IR^{n-1}\times \mathbb{C}^{N}.$$
\end{prop}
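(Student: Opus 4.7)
The plan is to combine two classical facts: (i) any complex vector bundle over the circle is trivial, and (ii) vector bundles are homotopy invariant, so the restriction of $\mathcal{E}$ to a tubular-type neighborhood $\mathcal{U}$ of $\Lambda$ is isomorphic to the pullback of $\mathcal{E}|_{\Lambda}$ by the deformation retraction $\mathcal{U}\to\Lambda$.

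First I would fix the shape of $\mathcal{U}$. Since we are in the setting of the paper, the $\mathcal{C}^{\infty}$-linearization hypothesis on the Morse--Smale flow furnishes, near any closed orbit $\Lambda$, a diffeomorphism $h:B_{n-1}(0,r)\times \mathbb{R}/(\mathcal{P}_{\Lambda}\mathbb{Z})\to \mathcal{U}$ with $\mathcal{U}$ an open neighborhood of $\Lambda$; shrinking radii and passing to $\mathbb{R}^{n-1}$ via a radial diffeomorphism, I would identify $\mathcal{U}\simeq \mathbb{S}^1\times\mathbb{R}^{n-1}$. (Alternatively, if one prefers a purely topological argument without linearization, the tubular neighborhood theorem together with triviality of the normal bundle of an oriented closed curve in an oriented $M$ gives the same identification.) In either case $\mathcal{U}$ deformation retracts onto $\Lambda\simeq \mathbb{S}^1$ via the obvious projection.

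Next I would show $\mathcal{E}|_{\Lambda}$ is trivial. A rank $N$ complex vector bundle on $\mathbb{S}^1$ is classified by its clutching function, i.e.\ by an element of $\pi_0(GL_N(\mathbb{C}))$. Since $GL_N(\mathbb{C})$ is path connected (any invertible complex matrix can be joined to the identity in $GL_N(\mathbb{C})$, e.g.\ via its Jordan form and a path in the complex plane avoiding the origin), this homotopy set is a singleton and the bundle must be trivial. Hence there exist $N$ global smooth sections of $\mathcal{E}|_{\Lambda}$ forming a frame.

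Finally, I would invoke the homotopy invariance of smooth complex vector bundles: since the inclusion $\Lambda \hookrightarrow \mathcal{U}$ is a homotopy equivalence with inverse given by the retraction $\pi:\mathcal{U}\to\Lambda$ coming from the product structure, one has $\mathcal{E}|_{\mathcal{U}}\simeq \pi^{*}(\mathcal{E}|_{\Lambda})$, and the trivializing frame on $\Lambda$ pulls back under $\pi$ to a trivialization of $\mathcal{E}|_{\mathcal{U}}$. Composing this isomorphism with the identification $\mathcal{U}\simeq \mathbb{S}^1\times\mathbb{R}^{n-1}$ yields the claim $\mathcal{E}|_{\mathcal{U}}\simeq \mathbb{S}^1\times\mathbb{R}^{n-1}\times\mathbb{C}^N$.

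The only genuinely non-routine step is the identification $\mathcal{U}\simeq \mathbb{S}^1\times\mathbb{R}^{n-1}$, which requires either the $\mathcal{C}^{\infty}$-linearizability assumption or an argument showing triviality of the normal bundle; the rest is a direct application of the clutching classification of bundles over $\mathbb{S}^1$ and the homotopy lifting property for bundles.
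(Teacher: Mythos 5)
Your proposal is correct, and it rests on the same two pillars as the paper's argument: the neighborhood of $\Lambda$ is a product $\mathbb{S}^1\times\IR^{n-1}$ because the normal bundle of an orientable closed curve in an oriented manifold is an orientable real bundle over the circle, hence trivial; and complex bundles over $\mathbb{S}^1$ are trivial because $GL_N(\IC)$ is path connected. Where you diverge is in how the trivialization is propagated from $\Lambda$ to the whole neighborhood $\mathcal{U}$: you quote the clutching classification over $\mathbb{S}^1$ together with homotopy invariance of vector bundles, pulling back a frame on $\Lambda$ under the retraction $\pi:\mathcal{U}\to\Lambda$, whereas the paper stays elementary and explicit -- it writes $\ml{E}|_{\mathcal{U}}$ as a gluing $[0,1]\times\IR^{n-1}\times\IC^N/\sim$ with clutching map $g_x$ depending smoothly on the transverse variable $x$, and constructs by hand (via a cutoff and a path from $g_0$ to the identity) a smooth family $g(\theta,x)$ interpolating between $\operatorname{Id}$ and $g_x$, which directly yields a smooth moving frame. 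Your route is shorter and uses standard machinery; the paper's buys self-containedness and makes the smoothness of the resulting frame manifest, which matters because the frame is later used for explicit computations with the connection. Two small points to tidy up: homotopy invariance of bundle isomorphism classes is usually stated in the continuous category, so you should either invoke its smooth counterpart or note that a continuous trivialization of a smooth bundle can be smoothed; and the $\ml{C}^{\infty}$-linearization hypothesis is not needed for this proposition (your alternative tubular-neighborhood argument is the one that matches the statement, which concerns an arbitrary smooth complex bundle near a closed curve).
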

We prove this proposition in two steps. First, we consider the case of an oriented real bundle on $M=\IS^1$:  
\begin{lemm}
Let $\ml{E}\rightarrow\mathbb{S}^1$ be an oriented real vector bundle of rank $n-1$ over the circle
$\mathbb{S}^1$. Then, $\ml{E}$ is trivial.
\end{lemm}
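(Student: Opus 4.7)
The plan is to prove triviality by a clutching-type argument that uses the orientation to kill the only possible obstruction. First I would choose a point $p \in \mathbb{S}^1$ and set $U_1 := \mathbb{S}^1 \setminus \{p\}$, which is diffeomorphic to an open interval, together with a small open arc $U_2$ containing $p$ such that $U_1 \cap U_2 = V_+ \sqcup V_-$ is the disjoint union of two small arcs lying on the two sides of $p$. Since each $U_i$ is contractible, the restrictions $\ml{E}|_{U_i}$ are smoothly trivializable, giving smooth moving frames $(e_1,\dots,e_{n-1})$ on $U_1$ and $(f_1,\dots,f_{n-1})$ on $U_2$. Because $\ml{E}$ is oriented and each $U_i$ is connected, the parity of the orientation of each frame is constant on $U_i$; replacing $e_1$ by $-e_1$ and/or $f_1$ by $-f_1$ if necessary, I may assume that both frames are positively oriented at every point.

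The change-of-frame matrix from $(f_j)$ to $(e_j)$ is then a smooth map $g_\pm : V_\pm \to GL^+(n-1,\IR)$ into the identity component of $GL(n-1,\IR)$, because the two frames have the same orientation. The next step is to use the fact that $GL^+(n-1,\IR)$ is path-connected (indeed smoothly contractible after polar decomposition onto $SO(n-1)$): since each $V_\pm$ is an interval, the map $g_\pm$ is smoothly homotopic in $GL^+(n-1,\IR)$ to the constant map equal to the identity, via an explicit homotopy $H_\pm : V_\pm \times [0,1] \to GL^+(n-1,\IR)$.

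I would then glue using bump functions. Pick disjoint open neighborhoods $W_\pm \subset U_2$ of $\overline{V_\pm}\cap U_2$ and a smooth function $\chi : U_2 \to [0,1]$ equal to $1$ near $p$ and equal to $0$ outside $W_+ \cup W_-$, arranged so that on each $W_\pm$ the composition $H_\pm(\cdot, \chi(\cdot))$ makes sense and interpolates smoothly between $g_\pm$ near $\partial W_\pm \cap V_\pm$ and the identity near $p$. Replacing the moving frame $(f_j)$ by its image under this interpolating family of elements of $GL^+(n-1,\IR)$ produces a new smooth frame $(\tilde f_j)$ on $U_2$ which coincides with $(e_j)$ on a neighborhood of each $V_\pm$. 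The frames $(e_j)$ on $U_1$ and $(\tilde f_j)$ on $U_2$ then agree on $U_1 \cap U_2$ and glue to a globally defined smooth trivializing frame of $\ml{E}$.

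The main obstacle is purely bookkeeping: making sure the modification of $(f_j)$ on $W_+$ does not interfere with the one on $W_-$ and that the result extends smoothly across $p$ to the whole of $U_2$. This is handled by the disjointness of $W_+$ and $W_-$ together with the fact that outside $W_+ \cup W_-$ we leave the frame unchanged, so the compatibility on the overlap is the only thing that needs to be verified. Once this is done, the trivialization is immediate and no further dynamical or connection data is used.
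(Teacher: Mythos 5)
Your proposal rests on exactly the same key point as the paper's own proof: orientability forces the transition (clutching) data into $GL^{+}_{n-1}(\IR)$, and path-connectedness of $GL^{+}_{n-1}(\IR)$ then kills the only obstruction. The difference is in the execution. The paper cuts the circle once, writing $\ml{E}$ as $[0,1]\times\IR^{n-1}/\sim$ with a single gluing matrix $g$ of positive determinant, chooses a smooth path $g(\theta)$ in $GL^{+}_{n-1}(\IR)$ from $\mathrm{Id}$ to $g$, and takes the frame $\left(g(\theta)^{-1}\mathbf{e}_j\right)_j$; the path itself is the trivializing frame, so no second chart, no cutoff function, and no gluing over a two-component overlap are needed. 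Your two-chart Čech version buys nothing extra here and costs the bump-function bookkeeping you flag yourself.

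Two points of that bookkeeping are off as written. First, $W_{+}$ and $W_{-}$ cannot be chosen disjoint: both $\overline{V_{+}}\cap U_2$ and $\overline{V_{-}}\cap U_2$ contain $p$, so you should instead take disjoint neighborhoods of the outer closed sub-arcs of $V_{\pm}$, which is where the modification is actually needed. Second, the modified frame $(\tilde f_j)$ does \emph{not} agree with $(e_j)$ on all of $U_1\cap U_2$: near $p$ your interpolation is the identity, so $\tilde f_j=f_j$ there, and on the portion of $V_{\pm}$ close to $p$ the two frames still differ; for the same reason the claim that $(\tilde f_j)$ coincides with $(e_j)$ ``on a neighborhood of each $V_{\pm}$'' cannot hold. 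The correct conclusion is to glue $(\tilde f_j)$ on $U_2$ with $(e_j)$ on $\mathbb{S}^1\setminus\overline{B}$, where $B\subset U_2$ is a small closed arc around $p$ containing the region where $\chi>0$: on the overlap $U_2\setminus\overline{B}$ one has $\tilde f_j=e_j$ by construction, and the two open sets cover $\mathbb{S}^1$. With these routine adjustments your argument is complete and equivalent to the paper's, just heavier than the one-cut clutching computation used there.
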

\begin{proof}
Real oriented vector bundles of rank $n-1$ over $\mathbb{S}^1$ are identified with cartesian products $[0,1]\times \mathbb{\IR}^{n-1}$ 
quotiented by an equivalence relation
$[0,1]\times \mathbb{R}^{n-1} /\sim$ identifying the fiber $\{0\}\times \mathbb{R}^{n-1}$ over $\{0\}$ with the fiber 
$\{1\}\times \mathbb{R}^{n-1}$ over $\{1\}$. In other words, there exists a 
linear invertible element $g$ in $GL_{n-1}(\mathbb{R})$ such that $(1,z)\sim (0,gz)$ and $\text{det}(g)>0$. Then, one can define a 
smooth path $g:[0,1]\mapsto GL_{n-1}(\mathbb{R})$
such that $g(0)=\text{Id}, g(1)=g$ and $\text{det}(g(\theta))>0$ for every $\theta\in[0,1]$. Let $(\mathbf{e}_1,\dots,\mathbf{e}_{n-1})$ be a basis of 
$\{0\}\times \mathbb{R}^{n-1}$ and define a moving frame
of $[0,1]\times \mathbb{R}^{n-1}$ as follows
$$ (\mathbf{e}_1(\theta),\dots,\mathbf{e}_{n-1}(\theta))=(g(\theta)^{-1}\mathbf{e}_1,\dots,g(\theta)^{-1}\mathbf{e}_{n-1}).$$
Finally, observe that over the fiber $\{1\}\times\IR^{n-1}$, $(g(1)^{-1}\mathbf{e}_1,\dots,g(1)^{-1}\mathbf{e}_{n-1})$ is identified with 
$(\mathbf{e}_1,\dots,\mathbf{e}_{n-1})=(g(0)^{-1}\mathbf{e}_1,\dots,g(0)^{-1}\mathbf{e}_{n-1})$. Hence, the moving frame 
$ (\mathbf{e}_1(\theta),\dots,\mathbf{e}_{n-1}(\theta))$ trivializes $\ml{E}$ over $\mathbb{S}^1$.  
\end{proof}
 
If $\gamma$ is a closed immersed curve inside $(M,g)$ (which is supposed to be an oriented manifold), then the above lemma shows that 
the normal bundle $N(\gamma\subset M)$ is trivial. Indeed, if one chooses an orientation of 
$\gamma$ and thus of $N(\gamma\subset M)$ (since $M$ is oriented), then $N(\gamma\subset M)$ fibers over the circle and is hence trivial. In order 
to conclude the proof of Proposition~\ref{p:trivial-bundle}, we shall now prove the following Lemma:

\begin{lemm}
Let $\ml{E}\rightarrow M$ be a smooth complex vector bundle of rank $N$. Let $\Lambda$ be a closed immersed curve in $M$, then there is a neighborhood 
$U$ of $\Lambda$ such that the bundle $\ml{E}|_U$ restricted over $U$ is \textbf{trivial}.
\end{lemm}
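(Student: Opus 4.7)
The strategy is to reduce the statement to triviality of a complex bundle over the circle and then extend to a tubular neighborhood. First, I would apply the tubular neighborhood theorem to $\Lambda\subset M$: this gives an open neighborhood $U$ of $\Lambda$ together with a diffeomorphism $U\simeq N(\Lambda\subset M)$ onto (an open subset of) the total space of the normal bundle. By the preceding lemma, since $M$ is oriented and $\Lambda\simeq\IS^1$ is orientable, the normal bundle $N(\Lambda\subset M)$ is a trivial real oriented bundle of rank $n-1$ over $\IS^1$, so shrinking if necessary we may assume $U\simeq\IS^1\times B_{n-1}(0,r)$, with $\Lambda$ corresponding to the zero section.

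Next, I would use the retraction $\pi:U\to \Lambda$ given by $(\theta,x)\mapsto(\theta,0)$, which is a smooth deformation retraction. By the homotopy invariance of smooth complex vector bundles under pullback (see e.g.\ Hirsch's \emph{Differential Topology}), the bundle $\ml{E}|_U$ is isomorphic to $\pi^*(\ml{E}|_\Lambda)$. Hence it suffices to prove that $\ml{E}|_\Lambda\to\IS^1$ is trivial, because the pullback of a trivial bundle under $\pi$ is trivial on $U$.

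The remaining step is the triviality of a rank $N$ complex vector bundle over $\IS^1$. I would mimic the proof of the previous lemma via a clutching construction: any such bundle is of the form $[0,1]\times\IC^N/\sim$ with $(1,z)\sim(0,gz)$ for some $g\in GL_N(\IC)$. The crucial difference with the real oriented case is that the complex linear group $GL_N(\IC)$ is path-connected (it is an open subset of $M_N(\IC)$ whose complement has real codimension $2$). Hence I can pick a smooth path $g:[0,1]\to GL_N(\IC)$ with $g(0)=\mathrm{Id}$ and $g(1)=g$, and the moving frame $(\mathbf{c}_j(\theta))_{j=1}^N:=(g(\theta)^{-1}\mathbf{c}_j)_{j=1}^N$ descends to a global smooth trivializing frame of $\ml{E}|_\Lambda$, exactly as in the oriented real rank $n-1$ case.

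I do not expect a genuine obstacle: every step is standard, the only minor technical point being the verification that the homotopy invariance of pullback bundles holds in the smooth category, which follows from a standard partition of unity argument. Combining the three steps yields a smooth trivialization of $\ml{E}$ on the tubular neighborhood $U$ of $\Lambda$, as required.
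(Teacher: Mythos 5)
Your proposal is correct, and its skeleton matches the paper's: trivialize the normal bundle using the preceding lemma, identify a neighborhood of $\Lambda$ with $\IS^1\times B_{n-1}(0,r)$, and then exploit the path-connectedness of $GL_N(\IC)$ through a clutching description. The difference lies in how you pass from the circle to the whole neighborhood. You invoke the homotopy invariance of smooth vector bundle pullbacks: since $U$ deformation retracts onto $\Lambda$, you get $\ml{E}|_U\cong\pi^*(\ml{E}|_\Lambda)$ and only need triviality of a complex bundle over $\IS^1$. The paper instead works directly with an $x$-dependent clutching family $g_x\in GL_N(\IC)$ over $\IS^1\times\IR^{n-1}$ and builds an explicit smooth interpolation $g(\theta,x)=\chi(\theta)g_0+(1-\chi(\theta))g_x$ (valid for $x$ close to $0$, after which $g_0$ is joined to the identity by a path), so it trivializes the bundle only on a possibly smaller neighborhood -- which is all that is needed. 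Your route is more conceptual and offloads the $x$-dependence onto a standard theorem (whose smooth-category proof is indeed a partition-of-unity argument), while the paper's is self-contained and elementary at the cost of the cutoff construction. Two minor points you share with the paper and may wish to make explicit: the path $g(\theta)$ in the clutching step should be taken constant near $\theta=0,1$ so that the glued frame is smooth across the seam, and the tubular-neighborhood step really uses that $\Lambda$ is embedded (which is the case of interest, a hyperbolic closed orbit), not merely immersed.
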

\begin{proof} Since the normal bundle to $\Lambda$ in $M$ is locally trivial near $\Lambda$, we can in fact consider a complex vector bundle 
$\ml{E}\rightarrow \IS^1\times\IR^{n-1}$ of rank $N$. This can be identified with the quotient cartesian product $[0,1]\times\IR^{n-1}\times\IC^N/\sim$ where, for every $x$ in $\IR^{n-1}$, there exists 
$g_x\in GL_N(\IC)$ such that  $(1,x,z)\sim (0,x,g_xz)$. Moreover, as we supposed the vector bundle to be smooth, the map $x\mapsto g_x$ can be chosen smooth. 
Then, we conclude as in the case of a real oriented bundle by observing that we can find a smooth map 
$$g:(\theta,x)\in [0,1]\times\tilde{U}\rightarrow g(\theta,x)\in Gl_N(\IC)$$ 
defined in small neighborhood of $x=0$ such that $g(0,x)=\text{Id}$ and $g(1,x)=g_x$. Indeed, one can fix a smooth function 
$\chi:[1/2,1]\rightarrow[0,1]$ which is equal to $1$ in a small neighborhood of $1/2$ and to $0$ near $1$. Thus, we can set 
$g(\theta,x)=\chi(\theta)g_0+(1-\chi(\theta))g_x.$ If $x$ is close enough to $0$, this defines a smooth path in $GL_N(\IC)$ and one 
can complete the path up to $\theta=0$ by taking a smooth path from $g_0$ to $\text{Id}$.
\end{proof}

\subsection{Parallel transport and monodromy}

Consider a path $\gamma:t\in [0,1]\mapsto \gamma(t)\in M$. An element $\mathbf{s}\in\Omega^0(M,\ml{E})$ is said to be \emph{parallel} along $\gamma$
if, it solves the following ODE:
\begin{equation}\label{e:paralleltransport}
\nabla_{\gamma^\prime(t)}\mathbf{s}(\gamma(t))=0.
\end{equation}
Given any $\mathbf{e}$ in $\ml{E}_{\gamma(0)}$, there always exists a unique parallel section along $\gamma$ satisfying 
$\mathbf{s}(\gamma(0))=\mathbf{e}$~\cite[Th.~12.20]{LeeDiff}. For $t=1$, we can then define an invertible linear mapping 
from $\ml{E}_{\gamma(0)}$ to $\ml{E}_{\gamma(1)}$, called the \textbf{holonomy} of $\gamma$ (or parallel translation):
$$H(\gamma)(\mathbf{e}):=\mathbf{s}(\gamma(1)).$$
Fix now $x\in M$ and $\gamma_x$ a loop based at $x$. The \textbf{monodromy} of the connection $\nabla$ around $\gamma_x$ calculated at $x$ reads 
\begin{equation}
M(\gamma_x)\in GL(\ml{E}_x)\simeq GL_N(\IC).
\end{equation}
When $\nabla \mathbf{s}=0$, we say that $\mathbf{s}$ is a \emph{parallel section}. Recall that such a section exists locally if we suppose that the connection 
is flat~\cite[Th.~12.25]{LeeDiff}.

\subsection{Proof of Theorem~\ref{t:good-basis-bundle}}\label{ss:proof-monodromy}

As we supposed the Morse-Smale vector field $V$ to be $\ml{C}^{\infty}$-diagonalizable, there exists some smooth local coordinates 
$(z,\theta)\in \tilde{U}\times\IR/(\ml{P}\IZ)\subset\IR^{n-1}\times\IR/(\ml{P}_{\Lambda}\IZ)$ such that the vector field has the form
$$V(z,\theta)=A_{\Lambda}(\theta)z\partial_z+\partial_{\theta}.$$
We shall now work with these local coordinates. Thanks to Proposition~\ref{p:trivial-bundle}, we can fix 
$(\mathbf{e}_1(z,\theta),\ldots, \mathbf{e}_N(z,\theta))$ to be a smooth trivializing frame field near $\Lambda:=\{(0,\theta):\theta\in\IR/(\ml{P}_{\Lambda}\IZ)\}$ 
for the complex vector $\ml{E}\rightarrow M$. In this frame field, the connection 
acts as
\begin{eqnarray*}
\nabla \left(\sum_{j=1}^Ns_j\mathbf{e}_j\right) & = & \sum_{j=1}^Nds_j\mathbf{e}_j+\sum_{j=1}^Ns_j\nabla\mathbf{e}_j\\
 & = &\sum_{j=1}^Nds_j\mathbf{e}_j+\sum_{j=1}^Ns_j\left(\sum_{l=0}^{n-1}B_l(z,\theta)dz_l+C(z,\theta) d\theta\right)\mathbf{e}_j,
\end{eqnarray*}
where $(B_l(z,\theta))_{l=1,\ldots, n-1}$ and $C(z,\theta)$ are smooth maps from $\tilde{U}\times\IR/(\ml{P}\IZ)$ to $GL_N(\IC)$. As a first step, we would like to define a new 
trivializing frame $(\mathbf{f}_1,\ldots, \mathbf{f}_N)$ such that $B\equiv 0$ and $C(z,\theta)$ does not depend on $z$. For that purpose, we fix 
$\theta_0$ in $\IR/(\ml{P}\IZ)$ and for every $1\leq j\leq N$, we solve the following system of equations:
\begin{equation}\label{e:transversal-flat}\forall 1\leq i\leq N, \nabla_{\frac{\partial}{\partial z_i}}\mathbf{f}_j(z,\theta_0)=0\ \text{and}\ \mathbf{f}_j(0,\theta_0)=\mathbf{e}_j(0,\theta_0).\end{equation}
As was already mentionned, using the fact that the connection is flat, we can find a parallel section $\tilde{\mathbf{f}}_j$ in a small 
neighborhood of $(0,\theta_0)$~\cite[Th.~12.25]{LeeDiff} satisfying $\tilde{\mathbf{f}}_j(0,\theta_0)=\mathbf{e}_j(0,\theta_0).$ Such a section solves 
in particular the above system of equations. Thus, we shall now work in this new trivializing (smooth) frame field 
$(\mathbf{f}_1,\ldots,\mathbf{f}_N)$ where, for every $1\leq j\leq N$ and for every $\theta_0\in\IR/(\ml{P}\IZ)$, $\mathbf{f}_j(z,\theta_0)$ 
is the solution of the system~\eqref{e:transversal-flat}. In this new frame, the connection reads
$$\nabla \left(\sum_{j=1}^Ns_j\mathbf{f}_j\right) =
\sum_{j=1}^Nds_j\mathbf{f}_j+\sum_{j=1}^Ns_j\left(\nabla_{\frac{\partial}{\partial\theta}}\mathbf{f}_j\right)(z,\theta) d\theta.$$
Let us now verify that $\left(\nabla_{\frac{\partial}{\partial\theta}}\mathbf{f}_j\right)(z,\theta)$ is of the form 
$\sum_{i=1}^NT(\theta)_j^i\mathbf{f}_i$ for
$z$ near $0$ where $T(\theta)$ is a smooth map from $\IR/(\ml{P}_{\Lambda}\IZ)$ to $GL_N(\IC)$. 
Indeed, by flatness of the connection, we observe that $\nabla_{\frac{\partial}{\partial\theta}}\nabla_{\frac{\partial}{\partial z}}
=\nabla_{\frac{\partial}{\partial z}}\nabla_{\frac{\partial}{\partial\theta}}$~\cite[p.~526]{LeeDiff}. Hence, using the definition of $\mathbf{f}_j$, 
one has
$$0=\nabla_{\frac{\partial}{\partial\theta}}\nabla_{\frac{\partial}{\partial z}}\mathbf{f}_j(z,\theta)
=\nabla_{\frac{\partial}{\partial z}}\nabla_{\frac{\partial}{\partial\theta}}\mathbf{f}_j(z,\theta).$$
We now decompose
$$\nabla_{\frac{\partial}{\partial\theta}}\mathbf{f}_j(z,\theta)=\sum_iT(z,\theta)_j^i\mathbf{f}_i(z,\theta),$$
and we find, using one more time the equation satisfied by $\mathbf{f}_i$,
$$0=\nabla_{\frac{\partial}{\partial z}}\left(\sum_iT(z,\theta)_j^i\mathbf{f}_i(z,\theta)\right)  =\sum_i\frac{\partial T(z,\theta)_j^i}{\partial z}\mathbf{f}_i(z,\theta).$$
Since $\mathbf{f}_i(z,\theta)_{i=1}^N$ is a basis for every $(z,\theta)$, this implies that, for every $1\leq i,j\leq N$,  
$\frac{\partial T_j^i}{\partial z}(z,\theta) =0$, hence $T(z,\theta)_j^i$ is independent of $z$ as expected.


We shall now use Floquet theory to conclude the proof of Theorem~\ref{t:good-basis-bundle}. 
From~\cite[Chapter 3]{teschl2012ordinary}, there exists 
$U(\theta,\theta_0)$ solving the following ordinary differential equation:
$$\frac{d U}{d\theta}=T(\theta)U,\quad U(\theta_0,\theta_0)=\text{Id}_{\IC^N}.$$
Moreover, this fundamental solution can be put under the form $U(\theta,0)=P_1(\theta)e^{\theta\Omega_1}$ where $P_1(\theta)$ is $\ml{P}$-periodic. We then 
set the following gauge transformation $\mathbf{f}_j^\prime=P_1(\theta)^{-1}\mathbf{f}_j$ for every $1\leq j\leq N$. In this new frame, the connection can be written as follows
$$\nabla \left(\sum_{j=1}^Ns_j\mathbf{f}_j^{\prime}\right) =
\sum_{j=1}^Nds_j\mathbf{f}_j^{\prime}+\sum_{j=1}^Ns_jT_1(\theta,d\theta)\mathbf{f}_j^{\prime} ,$$
where
$$T_1(\theta,d\theta):=P_1(\theta)^{-1}\left(-dP_1(\theta)P_1(\theta)^{-1}+T(\theta)d\theta\right)P_1(\theta)=\Omega d\theta,$$
where the second equality follows from the fact that $P_1(\theta)e^{\theta\Omega_1}$ solves the Floquet equation. Hence, in this trivializing frame, one 
has
$$\nabla \left(\sum_{j=1}^Ns_j\mathbf{f}_j^{\prime}\right) =
\sum_{j=1}^Nds_j\mathbf{f}_j^{\prime}+\sum_{j=1}^Ns_j\Omega\mathbf{f}_j^{\prime}d\theta ,$$
where $\Omega$ is a \emph{constant matrix}. Note that $e^{\ml{P}\Omega}$ is the monodromy matrix for the parallel transport along $\Lambda$. 
In order to conclude, we just use the fact that there exists a Jordan basis for the matrix $\Omega$ and make a last gauge transformation 
to work in this basis.

\subsection{Diagonalizing monodromy matrices for parallel transport}\label{ss:diagonal} Suppose now that we have a hermitian structure $\la.,.\ra_{\ml{E}}$ 
on $\ml{E}$ which preserves the flat connection, i.e. for any $\mathbf{s}_1$ and $\mathbf{s}_2$ in $\Omega^0(M,\ml{E})$, one has
$$d\left(\la \mathbf{s}_1,\mathbf{s}_2\ra_{\ml{E}}\right)=\la \nabla\mathbf{s}_1,\mathbf{s}_2\ra_{\ml{E}}+\la \mathbf{s}_1,\nabla\mathbf{s}_2\ra_{\ml{E}}.$$
Recall that complex vector bundles $\mathcal{E}\rightarrow M$ endowed with a flat connection compatible with some 
Hermitian structure are also used in Hodge theory~\cite{demailly1996intro,mnev2014lecture,bunke2015lectures}. 
In our setting, if we use the coordinates $(z,\theta)$ near the closed orbit $\Lambda$ and if we consider the parallel transport
$\mathbf{s}_1$ and $\mathbf{s}_2$ of two vectors $\mathbf{s}_1(0)$ and $\mathbf{s}_2(0)$ based at $(0,0)$
along the curve $c(\theta)=(0,\theta)$, we have
$$\frac{d}{d\theta}\left(\la \mathbf{s}_1,\mathbf{s}_2\ra_{\ml{E}}\right)=\la \nabla_{\partial_{\theta}}\mathbf{s}_1,\mathbf{s}_2\ra_{\ml{E}}
+\la \mathbf{s}_1,\nabla_{\partial_{\theta}}\mathbf{s}_2\ra_{\ml{E}}=0.$$
Hence, the parallel transport preserves the Hermitian structure on $\ml{E}$. In particular, the monodromy matrix for the parallel transport along $\Lambda$ 
is a unitary matrix and has its spectrum contained in $\IS^1=\IR/(2\pi\IZ)$. In that case, the $\gamma_j^{\Lambda}$ in Theorem~\ref{t:good-basis-bundle} 
can be chosen in $\IR$ and the effect of the flat connection is just to shift the spectrum in the vertical direction. 

In the following, we shall always assume for the sake of simplicity\footnote{At the expense of some extra combinatorial work, the general case 
could probably be treated by similar technics.} that \textbf{the Morse-Smale vector fields we consider have diagonalizable monodromy matrices for the 
 parallel transport around its closed orbits.}

\section{Reduction to a scalar problem near critical elements}\label{s:scalar}

As was already explained, we will have to solve the eigenvalue equation $-\ml{L}_{V,\nabla}^{(k)} \mathbf{u}=\lambda \mathbf{u}$ for $\mathbf{u}\in\ml{D}^{\prime k}(M,\ml{E})$ satisfying certain 
smoothness assumptions or more specifically certain wavefront assumptions -- see paragraph~\ref{s:local-form-eigenmode}. 
Theorems~\ref{p:good-basis-bundle} and~\ref{t:good-basis-bundle} provides a smooth moving frame $(\mathbf{c}_1^{\Lambda},\ldots, \mathbf{c}_N^{\Lambda})$ in the neighborhood of any 
critical element $\Lambda$ of the flow. More precisely, recall that, if we write $\mathbf{u}=\sum_{j=1}^Nu_j\mathbf{c}_j^{\Lambda}$ near $\Lambda$ (with $u_j\in\ml{D}^{\prime k}(M)$), then one has
\begin{equation}\label{e:diagonal-action-bundle}
 \ml{L}_{V,\nabla}^{(k)}\left(\sum_{j=1}^Nu_j\mathbf{c}_j^{\Lambda}\right)=\sum_{j=1}^N\left(\ml{L}_V^{(k)}+\frac{2i\pi\gamma_j^{\Lambda}}{\ml{P}_{\Lambda}}\right)u_j\mathbf{c}_j^{\Lambda},
\end{equation}
where all the $\gamma_j^{\Lambda}$ are equal to $0$ for critical points (and $\ml{P}_{\Lambda}=1$) and where 
$\left(e^{2i\pi\gamma_j^{\Lambda}}\right)_{j=1}^N$ are the eigenvalues 
of the monodromy matrix for the parallel transport around the closed orbit $\Lambda$. We would now like to reduce the problem one more time to deal only with scalar problems near the critical 
elements of the flow. For that purpose, we shall explain how to construct an appropriate basis of the vector bundle $\Lambda^k(T^*M)$ in a neighborhood of any critical element $\Lambda$ of the flow and we 
will distinguish again the case of critical points and periodic orbits. 

This construction requires a rather tedious work in order to take into account the various situations. Yet, this 
preliminary discussion permits afterwards to lighten the proofs of the upcoming sections. Along the way, 
we fix some conventions for smooth local coordinates that we shall use all along the article.

\begin{rema}
 \textbf{From this point on, we shall always suppose that $\varphi^t$ is a Morse-Smale vector field which is $\ml{C}^{\infty}$-diagonalizable}. Again, we refer to the discussion following 
 Theorem~\ref{t:maintheo-full} for comments on that assumption.
\end{rema}

\subsection{Critical points}\label{ss:local-coordinates-fixed-point} Fix $\Lambda$ a critical point of the flow. Recall that we supposed the flow to be $\ml{C}^{\infty}$-diagonalizable near $\Lambda$, i.e. there exists a 
diagonalizable (in $\IC$) matrix $A_{\Lambda}$ such that, in a smooth system of coordinates $(x_1,\ldots,x_n)$, the Morse-Smale vector field $V$ can be written as 
$$V(x,\partial_x)=A_{\Lambda}x.\partial_x.$$
Up to a linear change of coordinates in $\IR^n$, we can suppose that the matrix $A$ is block-diagonal of the form
$$A_{\Lambda}=\text{Diag}(S_1(\Lambda),\ldots, S_p(\Lambda),U_{p+1}(\Lambda),\ldots , U_q(\Lambda))\in GL_{n}(\IR),$$
where one has~:
\begin{itemize}
 \item for every $1\leq j\leq p$, one has $S_j(\Lambda)=\chi_j(\Lambda)\text{Id}_{\IR^1}$ or $\displaystyle S_{j}(\Lambda):=\left(\begin{array}{cc}\chi_j(\Lambda)& \omega_j(\Lambda)\\
-\omega_j(\Lambda) & \chi_j(\Lambda) \end{array}\right)$ with $\chi_j(\Lambda)<0$ and $\omega_j(\Lambda)\neq 0$,
  \item for every $p+1\leq j\leq q$, one has $U_j(\Lambda)=\chi_j(\Lambda)\text{Id}_{\IR^1}$ or $\displaystyle U_{j}(\Lambda):=\left(\begin{array}{cc}\chi_j(\Lambda)& \omega_j(\Lambda)\\
-\omega_j(\Lambda) & \chi_j(\Lambda) \end{array}\right)$ with $\chi_j(\Lambda)>0$ and $\omega_j(\Lambda)\neq 0$.
\end{itemize}
 The numbers $(\chi_j(\Lambda))_{j=1,\ldots, n}$ are the \textbf{Lyapunov exponents of the critical point} $\Lambda$.

We now introduce new labellings for the indices of coordinates in $\mathbb{R}^{n}$ depending on the block decomposition of the matrix $A_{\Lambda}$. We denote by 
$L_{s}$ (resp $L_u$) the set of indices $i$ corresponding to a matrix $S_i(\Lambda)$ (resp. $U_i(\Lambda)$) of size $1$ (i.e. associated with a real line) while $P_s$ (resp. $P_u$) will denote the set of indices 
corresponding to matrices of size $2$ (i.e. associated with a real plane). Then, for every $j$ in $L_s$ (resp. $L_u$), we denote the corresponding coordinates by $x_j$ (resp. $y_j$). For $P_s$ 
(resp. $P_u$), two coordinates are involved and we denote them by $(x_{j1},x_{j2})$ (resp. $(y_{j1},y_{j2})$). For latter computation, it will also be convenient to introduce the new pair
of \emph{complex} functions
\begin{eqnarray*}
z_j:(x_{j1},x_{j2})\mapsto x_{j1}+ix_{j2},\,\ \overline{z}_j: 
(x_{j1},x_{j2})\mapsto x_{j1}-ix_{j2}\\
w_j:(y_{j1},y_{j2})\mapsto y_{j1}+iy_{j2},\,\ \overline{w}_j: 
(y_{j1},y_{j2})\mapsto y_{j1}-iy_{j2}.
\end{eqnarray*}
We denote then by $(x,y)$ the two group of complex valued functions corresponding
to stable and unstable directions respectively where 
$$x=((x_j)_{j\in L_s},(z_j,\overline{z}_j)_{j\in P_s})\text{ and }y=((y_j)_{j\in L_u},(w_j,\overline{w}_j)_{j\in P_u}).$$ 
In order to make our reduction to a scalar problem, we also consider the associated Grassmann variables 
$$(dx,dy):=((dx_j)_{j\in L_s},(dz_j,d\overline{z}_j)_{j\in P_s},(dy_j)_{j\in L_u},(dw_j,d\overline{w}_j)_{j\in P_s})$$
Recall that exterior products of those anticommute. 
Keeping this convention in mind, 
we will omit to write exterior products in order to alleviate notations. Note that, for $j\in L_s$ (resp. $L_u$), one has 
$-\ml{L}_{V}^{(1)}(dx_j)=-\chi_j(\Lambda)dx_j$ (resp. $-\ml{L}_{V}^{(1)}(dy_j)=-\chi_j(\Lambda)dy_j$).  In the case where $j$ belongs to $P_s$ (resp. $P_u$), one 
has
$$ -\ml{L}_{V}^{(1)}(dz_j)=-\left(\chi_j(\Lambda)+i\omega_j(\Lambda)\right)dz_j\ \text{and}\  -\ml{L}_{V}^{(1)}(d\overline{z}_j)=-\left(\chi_j(\Lambda)-i\omega_j(\Lambda)\right)d\overline{z}_j,$$
resp.
$$ -\ml{L}_{V}^{(1)}(dw_j)=-\left(\chi_j(\Lambda)+i\omega_j(\Lambda)\right)dw_j\ \text{and}\  -\ml{L}_{V}^{(1)}(d\overline{w}_j)=-\left(\chi_j(\Lambda)-i\omega_j(\Lambda)\right)d\overline{w}_j.$$
Fix now $u(x,y,dx,dy)$ to be an element in $\ml{D}^{\prime k}(M)$ which is compactly supported in a neighborhood of $\Lambda$. We can then decompose $u$ in the 
adapted basis $(\mathbf{b}_1^{\Lambda,k},\ldots, \mathbf{b}_{N_k}^{\Lambda,k})$ given by the $k$ products of the Grassmann variables $(dx,dy)$ where $N_k:=\left(\begin{array}{c} n\\k\end{array}\right)$, i.e.
$$u(x,y,dx,dy)=\sum_{j=1}^{N_k}u_j(x,y)\mathbf{b}_j^{\Lambda,k}.$$
From our construction, we know that
$$\ml{L}_V^{(k)}(u)=\sum_{j=1}^{N_k}\left(\ml{L}_V^{(0)}(u_j)+\beta_j^{\Lambda,k}u_j\right)\mathbf{b}_j^{\Lambda,k},$$
where, for every $1\leq j\leq N_k$, one has
\begin{equation}\label{e:shift-form-critical-point}\beta_j^{\Lambda,k}=\sum_{\chi\in\text{Sp}(A_{\Lambda})}\eps_{\chi}\chi,\end{equation}
where, for every $\chi$, $\eps_{\chi}\in\{0,1\}$, $\sum_{\chi}\eps_{\chi}=k$ and where 
the eigenvalues $\chi$ are counted with their geometric multiplicity. Recall that $\chi$ is either of the form $\chi_j(\Lambda)$ or $\chi_j(\Lambda)\pm i\omega_j(\Lambda)$. Fix now 
$u\in\ml{D}^{\prime k}(M,\ml{E})$. In local coordinates near $\Lambda$, we can use the above frames to write
\begin{equation}\label{e:decompose-basis-fixed-point}u=\sum_{j=1}^N\sum_{j'=1}^{N_k} u_{jj'}\mathbf{c}_j^{\Lambda}\otimes\mathbf{b}_{j'}^{\Lambda,k},\end{equation}
with, for every $(j,j')$, $u_{jj'}\in\ml{D}^{\prime}(M)$. We can then write
\begin{equation}\label{e:lie-derivative-fixed-point}\ml{L}_{V,\nabla}^{(k)}(u)=\sum_{j=1}^N\sum_{j'=1}^{N_k} \left(\ml{L}_V(u_{jj'})+\beta_{j'}^{\Lambda,k}u_{jj'}\right)
\mathbf{c}_j^{\Lambda}\otimes\mathbf{b}_{j'}^{\Lambda,k}.\end{equation}
Hence, near a critical point, the eigenvalue equation $-\ml{L}_{V,\nabla}^{(k)}(u)=\lambda u$ can be reduced to solving some scalar eigenvalue equation where the eigenvalue $\lambda$ may be shifted.
For latter purpose, let us introduce the following set:
\begin{equation}\label{e:shift-critical-point}D_k(\Lambda):=\left\{-\beta_{j}^{\Lambda,k}:1\leq j\leq \left(\begin{array}{c} n\\k\end{array}\right)\right\}.\end{equation}
Note that every $\delta\in D_k(\Lambda)$ appears with a multiplicity which is given by
\begin{equation}\label{e:equation-multiplicity-eigenvalue-fixed-point}
 m_k(\delta,\Lambda):=N\times\left|\left\{(\eps_{\chi})_{\chi\in \text{Sp}(A)}\in\{0,1\}^n:\sum_{\chi}\eps_{\chi}=k\ \text{and}\ \delta=-\sum_{\chi}\eps_{\chi}\chi\right\}\right|.
\end{equation}
Note that, for $k=0$, $D_k(\Lambda)=\{0\}$ and $m_k(0,\Lambda)=N$.

\subsection{Periodic orbits}\label{ss:local-coordinates-closed-orbit} Fix now a periodic orbit $\Lambda$ with minimal period $\ml{P}_{\Lambda}$. We will perform a similar construction in order 
to reduce the problem to a scalar one. Recall that, as the flow is $\ml{C}^{\infty}$-linearizable, there exists a smooth system of coordinates near $\Lambda$ such that
$$V(x,\theta,\partial_x,\partial_{\theta})=A_{\Lambda}(\theta)x.\partial_x+\partial_{\theta},$$
where $A(\theta)$ is a smooth, $M_{n-1}(\IR)$-valued and $\ml{P}_{\Lambda}$-periodic function. We refer to appendix~\ref{a:floquet} for a brief reminder on Floquet theory. In particular, 
according to appendix~\ref{a:floquet} and up to a linear change of coordinates in $\IR^{n-1}$, we can suppose that the monodromy matrix $M_{\Lambda}$ is of the form
$$M_{\Lambda}=\text{Diag}(S_1(\Lambda),\ldots, S_p(\Lambda),U_{p+1}(\Lambda),\ldots , U_q(\Lambda))\in GL_{n-1}(\IR),$$
where one has
\begin{itemize}
 \item for every $1\leq j\leq p$, one has $S_j(\Lambda)=\nu_j(\Lambda)\text{Id}_{\IR^1}$  with $0<|\nu_j(\Lambda)|<1$ or $S_j(\Lambda)=\nu_j(\Lambda) R_{\vartheta_j(\Lambda)}$ with $0<\nu_j(\Lambda)<1$ and 
 $\displaystyle R_{\vartheta_j(\Lambda)}:=\left(\begin{array}{cc}\cos(\vartheta_j(\Lambda))& -\sin(\vartheta_j(\Lambda))\\
\sin(\vartheta_j(\Lambda)) & \cos(\vartheta_j(\Lambda)) \end{array}\right)$,
  \item for every $p+1\leq j\leq q$, one has $U_j(\Lambda)=\nu_j(\Lambda)\text{Id}_{\IR^1}$ with $|\nu_j(\Lambda)|>1$ or $U_j(\Lambda)=\nu_j(\Lambda) R_{\vartheta_j(\Lambda)}$ with $\nu_j(\Lambda)>1$ and 
 $\displaystyle R_{\vartheta_j(\Lambda)}:=\left(\begin{array}{cc}\cos(\vartheta_j(\Lambda))& -\sin(\vartheta_j(\Lambda))\\
\sin(\vartheta_j(\Lambda)) & \cos(\vartheta_j(\Lambda)) \end{array}\right)$.
\end{itemize}
Recall from appendix~\ref{a:floquet} that one can define the following \emph{real-valued} matrix:
$$A_{\Lambda}=\frac{1}{2\ml{P}_{\Lambda}}\log M_{\Lambda}^2.$$
As before, we introduce appropriate coordinates depending on the fact that we consider an eigenvalue associated with a line or with a two plane. One more time, we denote by 
$L_{s}$ (resp $L_u$) the set of indices $j$ corresponding to a matrix $S_i(\Lambda)$ (resp. $U_j(\Lambda)$) of size $1$ (i.e. associated with a real line) while $P_s$ (resp. $P_u$) will denote the set of indices 
corresponding to matrices of size $2$ (i.e. associated with a real plane). We can then define the \textbf{Lyapunov exponents} of the closed orbit:
$$(\chi_l(\Lambda))_{l=1}^{n-1}=\left(\left(\frac{\log|\nu_j(\Lambda)|}{\ml{P}_{\Lambda}}\right)_{j\in L_s\cup L_u},
\left(\frac{\log|\nu_j(\Lambda)|}{\ml{P}_{\Lambda}},\frac{\log|\nu_j(\Lambda)|}{\ml{P}_{\Lambda}}\right)_{j\in P_s\cup P_u}\right),$$
which are exactly the real parts of the eigenvalues of the matrix $A_{\Lambda}$. For the indices in $P_s\cup P_u$, we also set
$\omega_j(\Lambda)=\frac{\vartheta_j(\Lambda)}{\ml{P}_{\Lambda}},$ hence $\chi_j(\Lambda)\pm i\omega_j(\Lambda)$ are exactly the complex eigenvalues of $A_{\Lambda}$.

Coming back to the choice of coordinates, we also split $L_s$ (resp. $L_u$) in two disjoint parts: $L_s^+$ (resp. $L_u^+$) which correspond to some positive $\nu_j$
and $L_s^-$ (resp. $L_u^-$) which correspond to some negative $\nu_j$. Observe that $W^s(\Lambda)$ (resp. $W^u(\Lambda)$) is orientable 
if and only if $|L_s^-|$ (resp. $|L_u^{-}|$) is even. For every $j$, we define its twisting index $\tilde{\varepsilon_j}$ which is equal to $\frac{1}{2}$ 
whenever $j\in L_s^-\cup L_u^-$ and to $0$ otherwise. Again, for every $j$ in $L_s$ (resp. $L_u$), we denote the corresponding coordinates by $x_j$ (resp. $y_j$). For $P_s$ 
(resp. $P_u$), two coordinates are involved and we denote them by $(x_{j1},x_{j2})$ (resp. $(y_{j1},y_{j2})$). As for critical points, it will also be convenient to introduce the new pair
of \emph{complex} functions
\begin{eqnarray*}
z_j:(x_{j1},x_{j2})\mapsto x_{j1}+ix_{j2},\,\ \overline{z}_j: 
(x_{j1},x_{j2})\mapsto x_{j1}-ix_{j2}\\
w_j:(y_{j1},y_{j2})\mapsto y_{j1}+iy_{j2},\,\ \overline{w}_j: 
(y_{j1},y_{j2})\mapsto y_{j1}-iy_{j2}.
\end{eqnarray*}
As before, we denote by $(x,y)$ the two group of complex valued functions corresponding
to stable and unstable directions respectively where 
$$x=((x_j)_{j\in L_s},(z_j,\overline{z}_j)_{j\in P_s})\text{ and }y=((y_j)_{j\in L_u},(w_j,\overline{w}_j)_{j\in P_u}).$$ 
Then, local coordinates near $\Lambda$ are given by $(x,y,\theta)$ where $\theta$ belongs to $\IR/(\ml{P}_{\Lambda}\IZ)$. We should now define an appropriate moving frame along the closed orbit $\Lambda$. 
For that purpose, we define the associated Grassmann variables $(dx,dy,d\theta)$. According to appendix~\ref{a:floquet}, we can introduce a real valued matrix $P(\theta,0):=P(\theta)\in GL_{n-1}(\IR)$ 
which transports the stable (resp. unstable) directions along the closed orbit $\Lambda$. Using the matrix $P(\theta)^T$, we can transport the Grassmann variables along the closed 
orbit $\Lambda$ and consider all the $k$ exterior products as in the case of critical points. This gives rise to a 
moving frame for $\Lambda^k(T^*M)$ along $\Lambda$ but this frame is not a priori well defined (modulo $\ml{P}_{\Lambda}$) 
due to the fact that the matrix $P(\ml{P}_{\Lambda})$ is not equal to the identity -- see appendix~\ref{a:floquet}. 
In order to fix this problem, we just need to multiply each element in the moving frame by a 
factor $e^{2i\pi \frac{\tilde{\varepsilon}\theta}{\ml{P}_{\Lambda}}}$ where $\tilde{\varepsilon}$ is the sum of the twisting indices $\tilde{\varepsilon}_j$ 
corresponding to the Grassman variables appearing in the vector. 
We then denote this moving frame by $(b_j^{\Lambda,k}(\theta))_{j=1}^{N_k}$. As for critical points, we find that, if we write near $\Lambda$
$$u(x,y,\theta,dx,dy,d\theta)=\sum_{j=1}^{N_k}u_j(x,y,\theta)\mathbf{b}_j^{\Lambda,k},$$
then one has
$$\ml{L}_V^{(k)}(u)=\sum_{j=1}^{N_k}\left(\ml{L}_V^{(0)}(u_j)+\beta_j^{\Lambda,k}u_j\right)\mathbf{b}_j^{\Lambda,k},$$
where, for every $1\leq j\leq N_k$, one has
\begin{equation}\label{e:shift-form-closed-orbit}
\beta_j^{\Lambda,k}=\sum_{\chi\in\text{Sp}(A_{\Lambda})\cup\{0\}}\eps_{\chi}\left(\chi+\frac{2i\pi \tilde{\varepsilon}_{\chi} }{\ml{P}_{\Lambda}}\right), 
\end{equation}
where 
\begin{itemize}
 \item for every $\chi$, $\eps_{\chi}\in\{0,1\}$, $\sum_{\chi}\eps_{\chi}=k$,
 \item the eigenvalues $\chi$ are counted with their geometric multiplicity,
 \item $0$ is counted with multiplicity $1$,
 \item $\varepsilon_{\chi}$ is the twisting index.
\end{itemize}
We now fix $u$ in $\ml{D}^{\prime k}(M,\ml{E})$. In local coordinates near $\Lambda$, we can use the above frames to write
\begin{equation}\label{e:decompose-basis-closed-orbit}u=\sum_{j=1}^N\sum_{j'=1}^{N_k} u_{jj'}\mathbf{c}_j^{\Lambda}\otimes\mathbf{b}_{j'}^{\Lambda,k},\end{equation}
with, for every $(j,j')$, $u_{jj'}\in\ml{D}^{\prime}(M)$. We can then write
\begin{equation}\label{e:lie-derivative-closed-orbit}\ml{L}_{V,\nabla}^{(k)}(u)=\sum_{j=1}^N\sum_{j'=1}^{N_k} 
\left(\ml{L}_V(u_{jj'})+\left(\frac{2i\pi\gamma_j^{\Lambda}}{\ml{P}_{\Lambda}}+\beta_{j'}^{\Lambda,k}\right)
u_{jj'}\right)
\mathbf{c}_j^{\Lambda}\otimes\mathbf{b}_{j'}^{\Lambda,k}.\end{equation}
Hence, as before, the eigenvalue equation $-\ml{L}_{V,\nabla}^{(k)}(u)=\lambda u$ can be reduced to solving some scalar eigenvalue equation near $\Lambda$ (where the eigenvalue $\lambda$ may be shifted).
To mimick the conventions for critical points, let us introduce the following set:
\begin{equation}\label{e:shift-closed-orbit}D_k(\Lambda):=\left\{-(\beta_{j'}^{\Lambda,k}+\gamma_j^{\Lambda}):1\leq j\leq N,\ 1\leq j'\leq 
\left(\begin{array}{c} n\\k\end{array}\right)\right\}.\end{equation}
Again, we can compute the multiplicity of every $\delta\in D_k(\Lambda)$ with the following formula
\begin{equation}\label{e:equation-multiplicity-eigenvalue-closed-orbit}
 m_k(\delta,\Lambda):=\left|\left\{\left(j,(\eps_{\chi})_{\chi\in \text{Sp}(A_{\Lambda})\cup\{0\}}\right)
 \in\{1,\ldots,N\}\times\{0,1\}^n:(*)\ \text{holds}\right\}\right|,
\end{equation}
where $(*)$ means that
$$\sum_{\chi}\eps_{\chi}=k\ \text{and}\ \delta=-\frac{2i\pi\gamma_j^{\Lambda}}{\ml{P}_{\Lambda}}-\sum_{\chi}\eps_{\chi}\left(\chi+\frac{2i\pi\tilde{\varepsilon}_{\chi} }{\ml{P}_{\Lambda}}\right).$$

\subsection{Reduction to a scalar problem}\label{ss:reduction-scalar} Let us now summarize this construction by introducing unifying conventions for closed orbits and fixed points.
We fix $0\leq k\leq n$ and $\Lambda$ a critical element. Observe first that the sets of ``shifting parameters'' $D_k(\Lambda)$ counted 
with their multiplicities can be indexed by the set
\begin{equation}\label{e:index-shift}
 D_k:=\left\{1,\ldots, N\right\}\times\left\{1,\ldots, \left(\begin{array}{c} n\\k\end{array}\right)\right\}.
\end{equation}
In the following, we shall denote by $\mathbf{j}=(j,j')$ an element in that set and the corresponding element of $D_k(\Lambda)$ 
will be denoted by $\delta_{\mathbf{j}}^{\Lambda}$. Then, in a neighborhood of $\Lambda$, one can find a moving frame 
$(\mathbf{f}_{\mathbf{j}}^{\Lambda,k})_{\mathbf{j}\in D_{k}}$ of $\Lambda^k(T^*M)\otimes\ml{E}$ such that, 
in the system of linearized coordinates defined above, one can decompose $u\in\ml{D}^{\prime k}(M,\ml{E})$ as
$$u=\sum_{\mathbf{j}\in D_k}u_{\mathbf{j}}\mathbf{f}_{\mathbf{j}}^{\Lambda,k},$$
with $u_{\mathbf{j}}\in\ml{D}'(M)$ for every $\mathbf{j}$ in $D_k$. Moreover, one has
\begin{equation}\label{e:reduction-scalar-case}
 \ml{L}_{V,\nabla}^{(k)}\left(\sum_{\mathbf{j}\in D_k}u_{\mathbf{j}}\mathbf{f}_{\mathbf{j}}^{\Lambda,k}\right)
 =\sum_{\mathbf{j}\in D_k}\left(\ml{L}_V(u_{\mathbf{j}})-\delta_{\mathbf{j}}^{\Lambda} u_{\mathbf{j}}\right)\mathbf{f}_{\mathbf{j}}^{\Lambda,k}.
\end{equation}

\section{Solving the eigenvalue equation near critical elements}\label{s:local-form-eigenmode}

In this section, we shall explain how to solve the generalized eigenvalue equation
\begin{equation}\label{e:eigenvalue-scalar}
 \left(\ml{L}_V+\lambda\right)^p(u)=0,\ u\in\ml{D}^{\prime}(M),\ \lambda\in\IC
\end{equation}
for some $p\geq 1$
near a critical element $\Lambda$ of the Morse-Smale flow and under some constraint on the support and the regularity of $u$ near 
$\Lambda$. More precisely, we fix $\Lambda$ a critical element and $U_{\Lambda}$ a small enough\footnote{In particular, we can use the system of local coordinates introduced in 
paragraphs~\ref{ss:local-coordinates-fixed-point} and~\ref{ss:local-coordinates-closed-orbit}.} open neighborhood of $\Lambda$. We then suppose that $u$ 
solves~\eqref{e:eigenvalue-scalar} near $\Lambda$, i.e. for every $\psi\in\ml{C}^{\infty}(U_{\Lambda})$,
$$\la(\ml{L}_V+\lambda)u,\psi\ra=0.$$
We will solve this problem under the two following additional constraints:
\begin{equation}\label{e:constraint-support}
 \text{supp} (u)\cap U_{\Lambda}\subset W^u(\Lambda),
\end{equation}
and
\begin{equation}\label{e:constraint-wavefront}
 \text{WF} (u)\cap T^*U_{\Lambda}\subset \bigcup_{x\in \Lambda}N^*(W^{uu}(x)),
\end{equation}
where $W^{uu}(x)=W^{u}(\Lambda)$ in the case of a fixed point and where $W^{uu}(x)$ is defined in appendix~\ref{a:floquet} in the case of a closed orbit. These extra 
constraints are motivated by the inductive proof we will give in section~\ref{ss:inductiveproof}.

\begin{rema} 
 Recall from the description of global dynamics of Morse-Smale flows (see for instance Remark~4.5 in the proof of~\cite[Th.~4.4]{dangrivieremorsesmale1}) 
 that $U_{\Lambda}\subset W^u(\Lambda)$ is equal to the local unstable manifold for a small enough neighborhood. We refer to appendix~A 
 of~\cite{dangrivieremorsesmale1}) for a brief reminder on local unstable manifolds.
\end{rema}

As in the previous section, we will distinguish the case of critical points and the one of closed orbits in order to solve this problem. The main results of this section are 
Theorems~\ref{t:local-form-scalar-critical-point} and~\ref{t:local-form-scalar-periodic-orbit} which give the local 
form of $u$ in the system of local coordinates and the value of $\lambda$.

\subsection{Critical points}

We start with the case of critical points whose treatment is close to the one appearing for gradient flows~\cite{dang2016spectral}, up to the difference that some of the eigenvalues of $\Lambda$ may be complex. In order 
to state our main result, let us fix some conventions using the notations of paragraph~\ref{ss:local-coordinates-fixed-point}. First, to describe the value of $\lambda$, we introduce 
for every $\alpha=(\alpha_{\chi})_{\chi\in\text{Sp}(A_{\Lambda})}\in\IN^n$,
\begin{equation}\label{e:scalar-resonance-fixed-point}
\lambda_{\alpha}^{\Lambda}:=-\sum_{j\in L_s}|\chi_j(\Lambda)|-2\sum_{j\in P_s}|\chi_j(\Lambda)|+\sum_{\chi\in\text{Sp}(A_{\Lambda}):\text{Re}(\chi)<0}\alpha_{\chi}\chi-
\sum_{\chi\in\text{Sp}(A_{\Lambda}):\text{Re}(\chi)>0}\alpha_{\chi}\chi.
\end{equation}
Then, we introduce the set of resonances associated with the fixed point $\Lambda$:
\begin{equation}\label{e:set-scalar-resonance-fixed-point}\ml{R}_{0}(\Lambda):=\left\{\lambda_{\alpha}^{\Lambda}:\alpha\in\IN^n\right\}.\end{equation}
For every $\lambda\in \ml{R}_0(\Lambda)$, the multiplicity is defined as follows:
\begin{equation}\label{e:multiplicity-scalar-fixed-point}\mathfrak{m}(\lambda,\Lambda):=\left\{\alpha\in\IN^n:\lambda_{\alpha}^{\Lambda}=\lambda\right\}.\end{equation}

To these resonances, we associate some eigenmodes on $\IR^n$, i.e. for every $\alpha=(\alpha_x,\alpha_y)\in\IN^n$, we set
\begin{equation}\label{e:scalar-resonant-state-fixed-point}u_{\alpha}^{\Lambda}(x,y):=\delta_0^{(\alpha_x)}(x)y^{\alpha_y}.\end{equation}
\begin{rema}\label{r:convention-Dirac} This expression should be understood as follows. For $\alpha=(0,\ldots,0)\in\IN^r$, $\delta_0^{(0)}(x)=\delta_0(x)$ is the Dirac distribution on $\IR^r$. For every multi-index 
$\alpha:=((\alpha_j)_{j\in L_s},(\alpha_j,\overline{\alpha}_j)_{j\in P_s})\in \IN^r$, one has 
$$\delta_0^{(\alpha)}(x)=(\partial^{(\alpha_j)}_{x_j})_{j\in L_s}\left(\partial_{z_j}^{(\alpha_j)}\partial_{\overline{z}_j}^{(\overline{\alpha}_j)}\right)_{j\in P_s}(\delta_0(x)),$$
where
$$\partial_{z_j}:=\partial_{x_{j1}}+i\partial_{x_{j2}}\quad\text{and}\quad\partial_{\overline{z}_j}:=\partial_{x_{j1}}-i\partial_{x_{j2}}.$$
For the polynomial part, one has, for every $\alpha\in\IN^{n-r}$,
$$y^{\alpha}=\prod_{j\in L_u} y_j^{\alpha_j}\prod_{j\in P_u} w_j^{\alpha_j}\overline{w}_j^{\overline{\alpha}_j}.$$
\end{rema}
A more or less direct calculation shows that, using the system of local coordinates of paragraph~\ref{ss:local-coordinates-fixed-point}, one has near $\Lambda$
\begin{equation}\label{e:local-model-eigenmode-fixed-point}
 -\ml{L}_V(u_{\alpha}^{\Lambda})=\lambda_{\alpha}^{\Lambda}u_{\alpha}^{\Lambda}.
\end{equation}
These local solutions will be the building blocks to construct global resonant states for the operator $-\ml{L}_{V,\nabla}^{(k)}.$ One of their key property is the following Theorem:

\begin{theo}\label{t:local-form-scalar-critical-point} Let $\varphi^t$ be a Morse-Smale flow which is $\ml{C}^{\infty}$-diagonalizable and let $\Lambda$ be a critical point of the flow. 
Let $(u,\lambda)\in\ml{D}'(M)\times\IC$ (with $u\neq 0$) be a solution of~\eqref{e:eigenvalue-scalar} on $U_{\Lambda}$ satisfying the properties~\eqref{e:constraint-support} and~\eqref{e:constraint-wavefront}. Then, one has 
$$\lambda\in\ml{R}_{0}(\Lambda)$$
 and, in the local system of coordinates of paragraph~\ref{ss:local-coordinates-fixed-point}, there exist some constants $(c_{\alpha})_{\alpha:\lambda_{\alpha}^{\Lambda}=\lambda}$
 $$u(x,y)=\sum_{\alpha:\lambda_{\alpha}^{\Lambda}=\lambda}c_{\alpha}u_{\alpha}^{\Lambda}(x,y).$$
 
Furthermore, near the critical element $\Lambda$, $u$ solves the equation
$\left(\mathcal{L}_V+\lambda\right)u=0$. 
\end{theo}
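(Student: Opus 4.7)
The plan is to exploit the linearizing coordinates of paragraph~\ref{ss:local-coordinates-fixed-point}, in which $V = V_s + V_u$ splits into a stable part $V_s$ acting only on the $x$-variables and an unstable part $V_u$ acting only on the $y$-variables. In these coordinates, $W^u(\Lambda) \cap U_\Lambda = \{x = 0\}$, so the support condition~\eqref{e:constraint-support} makes $u$ a distribution supported on this smooth submanifold. Combining the classical structure theorem for distributions supported on a submanifold with the local finite order of $u$, together with the wavefront hypothesis~\eqref{e:constraint-wavefront} $\operatorname{WF}(u) \subset N^*\{x=0\}$, I would write
$$u(x,y) = \sum_{|\alpha_x| \leq N_0} \delta_0^{(\alpha_x)}(x) \otimes f_{\alpha_x}(y),$$
where each $f_{\alpha_x}$ is a genuine smooth germ in $y$ near $0$. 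Indeed, once the support condition forces the Dirac-derivative expansion, the conormal wavefront condition precisely prevents the $f_{\alpha_x}$ from developing any transverse singularity.

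The second step is to convert the generalized eigenvalue equation $(\ml{L}_V + \lambda)^p u = 0$ into uncoupled scalar equations. A direct computation using $x_j\partial_{x_j}\delta_0^{(m)} = -(m+1)\delta_0^{(m)}$ and its analogue in the complex $(z_j, \overline{z}_j)$ coordinates of~\ref{ss:local-coordinates-fixed-point} shows that $V_s$ is diagonal on the basis $\{\delta_0^{(\alpha_x)}\}$:
$$V_s \delta_0^{(\alpha_x)} = \mu_{\alpha_x} \delta_0^{(\alpha_x)},$$
with $\mu_{\alpha_x}$ equal to the stable contribution appearing in~\eqref{e:scalar-resonance-fixed-point}. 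Since $V_u$ acts only on $y$, linear independence of the Dirac derivatives transforms the global equation into the family
$$(V_u + \mu_{\alpha_x} + \lambda)^p f_{\alpha_x} = 0, \qquad \alpha_x \in \IN^r,$$
to be solved for smooth germs at $0 \in \IR^{n-r}$.

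Next, I would analyze the smooth germ solutions of $(V_u + \widetilde{\lambda})^p f = 0$ for $\widetilde{\lambda} \in \IC$. Since $\varphi^t$ is $\ml{C}^\infty$-diagonalizable, $A_\Lambda$ is diagonalizable over $\IC$, hence $V_u$ acts semisimply on the polynomial ring with monomial eigenvectors $y^{\alpha_y}$ of eigenvalue $\sum_{\chi} \alpha_\chi \chi$, summing over unstable eigenvalues in $\operatorname{Sp}(A_{\Lambda})$. The claim is that every smooth solution is already a polynomial satisfying the first-order equation $(V_u + \widetilde{\lambda}) f = 0$. To see this, split $f$ into its formal Taylor series at $0$ and a flat remainder $R$. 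Since $V_u$ preserves the grading by homogeneous degree, each homogeneous component $f_k$ of the Taylor series lives in a finite-dimensional semisimple block, so $(V_u + \widetilde{\lambda})^p f_k = 0$ forces $(V_u + \widetilde{\lambda}) f_k = 0$; positivity of the unstable Lyapunov exponents then leaves only finitely many admissible degrees, so the Taylor series is a polynomial. For the flat part, an induction on $p$ (using that $V_u$ preserves flat functions) reduces to $p = 1$, where the relation $R \circ \varphi_u^{-t} = e^{-\widetilde{\lambda} t} R$, combined with the exponential contraction of $\varphi_u^{-t}$ toward $0$ and the flat decay of $R$, yields $R \equiv 0$.

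Combining these three steps, $f_{\alpha_x}$ is a linear combination of monomials $y^{\alpha_y}$ with $\sum_\chi \alpha_\chi \chi = -(\mu_{\alpha_x} + \lambda)$, which is exactly the statement that $\lambda = \lambda_\alpha^\Lambda$ with $\alpha = (\alpha_x,\alpha_y)$ and that $u$ is the prescribed linear combination of the $u_\alpha^\Lambda$ of~\eqref{e:scalar-resonant-state-fixed-point}, together with $(\ml{L}_V + \lambda) u = 0$ (not merely up to $p$-th power). I expect the hardest step to be the first one: passing from the abstract support plus conormal wavefront information to the concrete expansion with \emph{smooth} coefficients $f_{\alpha_x}$ is where the analytic content concentrates and requires a careful conormal-distribution argument. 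A secondary subtlety is the flat-remainder step, which uses positivity of the unstable Lyapunov exponents in an essential way and is ultimately what rigidifies smooth solutions of $(V_u + \widetilde{\lambda}) f = 0$ to polynomials.
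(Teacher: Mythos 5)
Your proposal is correct, and after the common first step (Schwartz's structure theorem for distributions supported on $\{x=0\}$ plus the conormal wavefront condition to get smooth transverse coefficients — exactly what the paper does, citing Schwartz and~\cite{DangIHP}), it diverges genuinely from the paper's proof. The paper Taylor-expands the smooth coefficients to a large finite order, propagates $u$ by $\varphi^{-t*}$, and takes a Laplace transform in $t$: comparing the (a priori multiple) pole of $e^{\lambda t}P(t)$ at $\lambda$ with the simple poles $\lambda_{\alpha,\beta}^{\Lambda}$ of the main term and the absence of a pole for the $\ml{O}(|y|^{N+1})$ remainder identifies $\lambda$, kills the Jordan behaviour, and recovers $u$ from the residue. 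You instead decouple the equation algebraically — diagonality of the stable part on the Dirac derivatives, then a jet-by-jet analysis of $(V_u+\tilde\lambda)^p f=0$: semisimplicity of the degree-preserving action of $V_u$ on homogeneous polynomials (valid because $A_\Lambda$ is diagonalizable) forces first-order eigenpolynomials in finitely many degrees, and a flat-remainder rigidity argument via the backward contraction of the unstable flow finishes the job. Your route is more elementary (no Laplace transform or pole bookkeeping), makes the local absence of Jordan blocks transparent through semisimplicity, and is closer in spirit to the gradient-flow analysis of~\cite{dang2016spectral}; the paper's route is softer and transfers with little change to the closed-orbit case, where the same Laplace/Fourier argument is run with the periodic variable $\theta$, whereas your scheme would need a separate treatment of the $\theta$-dependence there. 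Two harmless imprecisions: $\mu_{\alpha_x}$ is \emph{minus} the stable contribution in~\eqref{e:scalar-resonance-fixed-point} (your final identification $\lambda=\lambda_\alpha^\Lambda$ is nonetheless consistent), and the propagation identity should read $R\circ\varphi_u^{-t}=e^{\tilde\lambda t}R$, which does not affect the super-exponential-versus-exponential contradiction.
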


\begin{proof} We still work in the adapted local system of coordinates $(x,y)$ near $\Lambda$ of paragraph~\ref{ss:local-coordinates-fixed-point}. We start by making use of the 
support hypothesis~\eqref{e:constraint-support} on $u$. From a classical result of Schwartz~\cite[Th.~37, p.102]{Schwartz-66}  on distributions carried by submanifolds, we know that $u$ can be written in local coordinates as
$$u(x,y)=\sum_{\alpha\in\IN^r}\delta_0^{(\alpha)}(x)u_{\alpha}(y),$$
where $u_{\alpha}$ are elements of $\ml{D}'((-\delta,\delta)^{n-r})$ (for some small $\delta$) and $u_{\alpha}=0$ is equal to $0$ except for a finite number of multi-indices. 
We then use the wavefront assumption~\eqref{e:constraint-wavefront} 
to show that the $u_{\alpha}(y)$ appearing in the local 
decomposition of $u$ are indeed smooth functions of the variable $y$~\cite[Corollary 9.3 p.~851]{DangIHP}. 
Fix now some large enough $N$ and write the Taylor expansion of every $u_{\alpha}$. One finds that
$$u=\sum_{(\alpha,\beta)\in\IN^n:|\beta|\leq C_N}c_{\alpha,\beta}\delta_0^{(\alpha)}(x)y^{\beta}+\sum_{\alpha\in\IN^r}\delta_0^{(\alpha)}(x)\ml{O}_{\alpha}(|y|^{N+1}),$$
where we recall that the sums over $\alpha$ are finite and $C_N$ is constant that depends only on $N$. Thus, for every $\psi$ compactly supported in $U_{\Lambda}$, one has
$$\la u,\psi\ra=\sum_{(\alpha,\beta)\in\IN^n:|\beta|\leq C_N}c_{\alpha,\beta}\la \delta_0^{(\alpha)}(x)y^{\beta},\psi\ra+\sum_{\alpha\in\IN^r}\la \delta_0^{(\alpha)}(x)\ml{O}_{\alpha}(|y|^{N+1}),\psi\ra.$$
Using the eigenvalue equation~\eqref{e:eigenvalue-scalar}, we find that for every test function $\psi$ supported
in the chart  of paragraph~\ref{ss:local-coordinates-fixed-point}, 
there is some polynomial $P$ such that, for every $t>0$, 
$$e^{\lambda t}P(t)=\langle \varphi^{-t*}u,\psi \rangle = \sum_{(\alpha,\beta)\in\IN^n:|\beta|\leq C_N}c_{\alpha,\beta}
e^{t\lambda_{\alpha,\beta}^{\Lambda}}\la \delta_0^{(\alpha)}(x)y^{\beta},\psi\ra+\sum_{\alpha\in\IN^r}\la \varphi^{-t*}(\delta_0^{(\alpha)}(x)\ml{O}_{\alpha}(|y|^{N+1})),\psi\ra.$$
In order to conclude, we make the Laplace transform (w.r.t. $t$) of this quantity. As $u\neq 0$, we find that, for an appropriate choice of $\psi$, the Laplace transform of the left hand-side 
is meromorphic and it has a nontrivial (a priori) multiple pole at $\lambda$. For a large enough $N$ (depending on $(u,\lambda)$), the Laplace transform of the remainder of the right hand side has no pole at $\lambda$ and the Laplace transform of the main part contains only \textbf{simple poles} at 
$\lambda_{\alpha,\beta}^{\Lambda}$ for some finite number of $(\alpha,\beta)\in\IN^n, \vert \beta\vert\leqslant C_N$.
This implies that $\lambda=\lambda_{\alpha,\beta}^{\Lambda}$ for some $(\alpha,\beta)\in\IN^n$ and that $P(t)$ is constant equal to $\la u,\psi\ra$.
Finally, using that the equality is valid for any $\psi$ and identifying the residue of the Laplace transform at $\lambda$, we obtain the expected expression for $u$ near $\Lambda$.
\end{proof}

\subsection{Periodic orbits}

We now turn to the case of closed orbits which is slightly more involved but still based on similar ideas (up to the fact that we have to work with the periodic variable $\theta$). As above, we start by 
fixing some conventions using now the notations of paragraph~\ref{ss:local-coordinates-closed-orbit}. For every $\alpha=(\alpha',\alpha_n)\in\IN^{n-1}\times\IZ$, we set
\begin{eqnarray}\label{e:scalar-resonance-periodic-orbit}
\lambda_{\alpha}^{\Lambda} &:= & -\sum_{j\in L_s}|\chi_j(\Lambda)|-2\sum_{j\in P_s}|\chi_j(\Lambda)|+i\alpha_n\frac{2\pi}{\ml{P}_{\Lambda}}\\
 & &+\sum_{\chi\in\text{Sp}(A_{\Lambda}):\text{Re}(\chi)<0}\alpha_{\chi}'\chi-
\sum_{\chi\in\text{Sp}(A_{\Lambda}):\text{Re}(\chi)>0}\alpha_{\chi}'\left(\chi+\frac{2i\pi\tilde{\varepsilon}_{\chi}}{\ml{P}_{\Lambda}}\right) ,
\end{eqnarray}
and we define
\begin{equation}\label{e:set-scalar-resonance-closed-orbit}\ml{R}_{0}(\Lambda):=\left\{\lambda_{\alpha}^{\Lambda}:\alpha\in\IN^{n-1}\times\IZ\right\}.\end{equation}
We recall that $\varepsilon_{\chi}$ denotes the twisting index of the eigenvalue. 
One more time, for every $\lambda\in \ml{R}_0(\Lambda)$, the multiplicity is defined as follows:
\begin{equation}\label{e:multiplicity-scalar-closed-orbit}\mathfrak{m}(\lambda,\Lambda):=\left\{\alpha\in\IN^{n-1}\times\IZ:\lambda_{\alpha}^{\Lambda}=\lambda\right\}.\end{equation}

We now turn to the definition of the local solutions of the problem~\eqref{e:eigenvalue-scalar}. We then set, for 
$\alpha=(\alpha_x,\alpha_y,\alpha_n)\in\IN^{n-1}\times\IZ$,
\begin{equation}\label{e:scalar-resonant-state-closed-orbit}u_{\alpha}^{\Lambda}(x,y,\theta):=
e^{\frac{2i\pi(\alpha_n-\tilde{\varepsilon}.\alpha_y) \theta}{\ml{P}_{\Lambda}}}
(P(\theta)^{-1})^*\left(\delta_0^{(\alpha_x)}(x) y^{\alpha_y}\right),\end{equation}
where we use the conventions of Remark~\ref{r:convention-Dirac} to define the derivatives of the Dirac distributions and the multinomials. Note that the prefactor involving the twisting indices 
$\varepsilon$ ensures that this is a well-defined $\ml{P}_{\Lambda}$-periodic 
distribution\footnote{Recall from appendix~\ref{a:floquet} that $P(\ml{P}_{\Lambda})$ may not be equal to the identity.}. Again, using the 
system of local coordinates of paragraph~\ref{ss:local-coordinates-fixed-point}, one has near $\Lambda$
\begin{equation}\label{e:local-model-eigenmode-closed-orbit}
 -\ml{L}_V(u_{\alpha}^{\Lambda})=\lambda_{\alpha}^{\Lambda}u_{\alpha}^{\Lambda}.
\end{equation}
\begin{rema}\label{r:propag-closed-orbit}
 In order to verify this eigenvalue equation, observe that, using appendix~\ref{a:floquet} and the conventions of
 paragraph~\ref{ss:local-coordinates-closed-orbit}, the flow can be written locally as
 $$\varphi^{-t}(x,\theta)=(P(\theta-t)e^{-tA_{\Lambda}}P(\theta)^{-1}x,\theta-t).$$
 Thus, if we write the pullback under the flow, we get
 \begin{eqnarray*}\varphi^{-t*}\left(u_{\alpha}^{\Lambda}\right) & = & e^{\frac{i\pi(-\varepsilon.\alpha_y+2\alpha_n)(\theta-t)}{\ml{P}_{\Lambda}}}
 (P(\theta-t)e^{-tA_{\Lambda}}P(\theta)^{-1})^*(P(\theta-t)^{-1})^*\left(\delta_0^{(\alpha_x)}(x) y^{\alpha_y}\right)\\
 & = & e^{\frac{i\pi(-\varepsilon.\alpha_y+2\alpha_n)(\theta-t)}{\ml{P}_{\Lambda}}}
 (P(\theta)^{-1})^*(e^{-tA_{\Lambda}})^*\left(\delta_0^{(\alpha_x)}(x) y^{\alpha_y}\right).\end{eqnarray*}  

\end{rema}

Then, the following Theorem is the analogue of Theorem~\ref{t:local-form-scalar-critical-point} for closed orbits:
\begin{theo}\label{t:local-form-scalar-periodic-orbit} Let $\varphi^t$ be a Morse-Smale flow which is $\ml{C}^{\infty}$-diagonalizable and let $\Lambda$ be a closed orbit of the flow. 
Let $(u,\lambda)\in\ml{D}'(M)\times\IC$ be a solution of~\eqref{e:eigenvalue-scalar} on $U_{\Lambda}$ (with $u\neq 0$) satisfying the properties~\eqref{e:constraint-support} and~\eqref{e:constraint-wavefront}. Then, one has 
$$\lambda\in\ml{R}_{0}(\Lambda)$$
 and, in the local system of coordinates of paragraph~\ref{ss:local-coordinates-closed-orbit}, there exist some constants $(c_{\alpha})_{\alpha:\lambda_{\alpha}^{\Lambda}=\lambda}$
 $$u(x,y,\theta)=\sum_{\alpha:\lambda_{\alpha}^{\Lambda}=\lambda}c_{\alpha}u_{\alpha}^{\Lambda}(x,y,\theta).$$
 
  Furthermore, near the critical element $\Lambda$, $u$ solves the equation
$\left(\mathcal{L}_V+\lambda\right)u=0$.
\end{theo}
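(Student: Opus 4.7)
The plan is to adapt the three-step scheme used in the proof of Theorem~\ref{t:local-form-scalar-critical-point}, with two new ingredients to handle the periodic orbit: a Fourier expansion in the angular variable $\theta\in\IR/(\ml{P}_\Lambda\IZ)$ in place of a Taylor expansion in that direction, and a preliminary gauge change by the Floquet transport $P(\theta)$ so that the flow acts diagonally on the local basis vectors $u_\alpha^\Lambda$ (cf.\ Remark~\ref{r:propag-closed-orbit}).

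First, I would exploit the support condition~\eqref{e:constraint-support} via Schwartz's structure theorem for distributions supported on a submanifold. In the local coordinates of paragraph~\ref{ss:local-coordinates-closed-orbit}, $W^u(\Lambda)$ is locally cut out by $\{x=0\}$, so $u$ admits a finite decomposition $u = \sum_{\alpha_x} \delta_0^{(\alpha_x)}(x)\, v_{\alpha_x}(y,\theta)$. The wavefront assumption~\eqref{e:constraint-wavefront} forbids $dy$-covectors from the singular support, which by the same wavefront-theoretic argument as in the fixed-point case forces each $v_{\alpha_x}$ to depend smoothly on the unstable variable $y$ (although possibly distributionally on $\theta$). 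Pulling out the $(P(\theta)^{-1})^{*}$ factor to diagonalize the pullback, Fourier expanding the remaining smooth-in-$y$, $\ml{P}_\Lambda$-periodic-in-$\theta$ coefficients, and Taylor expanding to order $N$ in $y$, I obtain
\begin{equation*}
u = \sum_{\alpha:\,|\alpha_y|\leq N} c_\alpha\, u_\alpha^\Lambda(x,y,\theta) + R_N,
\end{equation*}
where $\alpha = (\alpha_x,\alpha_y,\alpha_n)\in\IN^{n-1}\times\IZ$, the $u_\alpha^\Lambda$ are the explicit local solutions~\eqref{e:scalar-resonant-state-closed-orbit}, and $R_N$ is supported on $\{x=0\}$ and vanishes to order $N+1$ in $y$.

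Next, I would pair both sides with a test function $\psi\in\ml{C}^\infty_c(U_\Lambda)$ and compare two expressions for $\la \varphi^{-t*}u,\psi\ra$. The generalized eigenvalue equation~\eqref{e:eigenvalue-scalar} implies $\la \varphi^{-t*}u,\psi\ra = e^{-\lambda t}Q_\psi(t)$ with $Q_\psi$ a polynomial of degree at most $p-1$. On the other hand, using $\varphi^{-t*}u_\alpha^\Lambda = e^{t\lambda_\alpha^\Lambda}u_\alpha^\Lambda$ from Remark~\ref{r:propag-closed-orbit},
\begin{equation*}
\la \varphi^{-t*}u,\psi\ra = \sum_{\alpha:\,|\alpha_y|\leq N} c_\alpha\, e^{t\lambda_\alpha^\Lambda}\,\la u_\alpha^\Lambda,\psi\ra + \la \varphi^{-t*}R_N,\psi\ra.
\end{equation*}
Taking Laplace transforms in $t$, the left-hand side is meromorphic with a single pole at $z=\lambda$, while the main sum on the right contributes simple poles at the $\lambda_\alpha^\Lambda$ with weights $c_\alpha\la u_\alpha^\Lambda,\psi\ra$. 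Choosing $N$ large enough (depending on $\lambda$) so that $\varphi^{-t*}R_N$ is strongly damped in the unstable directions and its Laplace transform is holomorphic in a neighborhood of each relevant $\lambda_\alpha^\Lambda$, the matching of residues forces $\lambda\in\ml{R}_0(\Lambda)$ and $c_\alpha\la u_\alpha^\Lambda,\psi\ra = 0$ whenever $\lambda_\alpha^\Lambda\neq\lambda$. Varying $\psi$ yields the claimed explicit form, and the identity $(\ml{L}_V+\lambda)u=0$ near $\Lambda$ then follows from~\eqref{e:local-model-eigenmode-closed-orbit}.

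The principal difficulty, as compared to the fixed-point case, is that the Fourier expansion in $\theta$ is genuinely infinite, so I cannot simply truncate it as I do for the polynomial part in $y$. The key point making the pole-counting rigorous is that for any fixed smooth $\psi$, the coefficients $\la u_\alpha^\Lambda,\psi\ra$ decay rapidly in $\alpha_n$, while the contribution of the remainder $\varphi^{-t*}R_N$ can be pushed arbitrarily far to the left in the complex plane by enlarging $N$ (using that the flow expands the $y$-directions, so a $y$-flat remainder is exponentially damped under $\varphi^{-t*}$). Bookkeeping this uniform control of the tails of the Fourier series together with the correct exponents $\lambda_\alpha^\Lambda$ of~\eqref{e:scalar-resonance-periodic-orbit} — including the twisting shifts $2i\pi\tilde\varepsilon_\chi/\ml{P}_\Lambda$ inherited from the monodromy of $P(\theta)$ — is where the bulk of the technical work lies.
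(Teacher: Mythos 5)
Your overall scheme is the one the paper follows: Schwartz's theorem to exploit the support condition \eqref{e:constraint-support}, the wavefront hypothesis \eqref{e:constraint-wavefront} to gain smoothness of the coefficients, the gauge change by the Floquet matrix $P(\theta)$, a Taylor expansion in $y$ combined with a Fourier expansion in $\theta$, and finally the Laplace-transform/residue-matching argument in which the $y$-flat remainder is made harmless by taking $N$ large. Two small slips: with the convention $(\ml{L}_V+\lambda)^pu=0$ one gets $\la\varphi^{-t*}u,\psi\ra=e^{\lambda t}Q_\psi(t)$, not $e^{-\lambda t}Q_\psi(t)$ (you then correctly place the pole at $z=\lambda$, so this is only a sign typo); and the paper actually obtains joint smoothness of the coefficients in $(y,\theta)$, which is what gives rapidly decaying Fourier coefficients and uniform-in-$\theta$ remainder bounds, whereas your ``possibly distributional in $\theta$'' version would force you to work with merely polynomially bounded Fourier coefficients.

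The one step that fails as written is your claim that, after pulling out $(P(\theta)^{-1})^{*}$, the remaining coefficients are $\ml{P}_{\Lambda}$-periodic in $\theta$. The Floquet matrix $P(\theta)$ is only $2\ml{P}_{\Lambda}$-periodic, and $P(\ml{P}_{\Lambda},0)$ is a diagonal matrix of signs which differs from the identity precisely when some Floquet multiplier is negative, i.e.\ when $W^u(\Lambda)$ is non-orientable. Hence $\tilde u=P(\theta)^{*}u$ is in general only $2\ml{P}_{\Lambda}$-periodic (its Taylor coefficients in $y$ are twisted-periodic), and expanding them in the integer modes $e^{2i\pi m\theta/\ml{P}_{\Lambda}}$ is not legitimate: you would miss exactly the half-integer frequencies that produce the twisting shifts $2i\pi\tilde\varepsilon_{\chi}/\ml{P}_{\Lambda}$ in $\ml{R}_0(\Lambda)$, i.e.\ the $\varepsilon_{\Lambda}=\tfrac12$ contribution visible in Theorem~\ref{mainthmintro}. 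The paper circumvents this by expanding on the double cover in the modes $e^{i\pi\tilde\alpha_n\theta/\ml{P}_{\Lambda}}$, $\tilde\alpha_n\in\IZ$, carrying out the Laplace-transform argument there, and only at the very end imposing that the full sum $u$ is $\ml{P}_{\Lambda}$-periodic, which forces the cancellation of all terms with $\tfrac{\tilde\alpha_n}{2}+\beta\cdot\tilde\varepsilon\notin\IZ$ and leaves precisely the currents $u_{\alpha}^{\Lambda}$ of \eqref{e:scalar-resonant-state-closed-orbit}. You do flag the twisting bookkeeping as the hard part, but the mechanism you describe (a plain $\ml{P}_{\Lambda}$-periodic Fourier series) cannot produce it; you need either this double-cover argument or, equivalently, to expand each twisted-periodic coefficient from the start in the shifted modes $e^{2i\pi(m+\tilde\varepsilon\cdot\beta)\theta/\ml{P}_{\Lambda}}$. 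With that correction, the rest of your argument goes through as in Theorem~\ref{t:local-form-scalar-critical-point}.
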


\begin{proof} Recall from appendix~\ref{a:floquet} that the unstable manifold can be expressed in local coordinates as follows:
 $$W^u(\Lambda)\cap U_{\Lambda}:=\left\{(P(\theta)(0,y),\theta):y\in(-\delta,\delta)^{n-1-r},\ \theta\in\IR/(\ml{P}_{\Lambda}\IZ)\right\}.$$
 As for fixed points, we would like to use Laurent Schwartz's \cite[Theorem 37, p.102]{Schwartz-66}  
 to represent the current $u$ 
 as finite sum of distributions involving the variables $(x,y,\theta)$. More precisely, 
 according to the support assumption, the elements in the sum should be locally of the form 
 $(P(\theta)^{-1})^*(\delta_0^{(\alpha)}(x)u_{\alpha}(y,\theta)).$ However, due to the fact that $P(\ml{P}_{\Lambda})$ 
 is not equal to the identity, we can not get a global formula along the $\theta$ variables. In order to write things properly, 
 let us pass first to the universal cover $U_{\Lambda}\times\IR$ of $U_{\Lambda}\times\IR/(\ml{P}_{\Lambda}\IZ)$ by defining the 
 $2\ml{P}_{\Lambda}$-periodic distribution
 $$\tilde{u}(x,y,\theta):=P(\theta)^*u(x,y,\theta).$$
 Note that $\tilde{u}$ is not a priori $\ml{P}_{\Lambda}$-periodic in the $\theta$ variable as 
 $P(\theta)$ is only $2\ml{P}_{\Lambda}$-periodic in $\theta$.
 Now we can write thanks to Schwartz Theorem
 $$\tilde{u}(x,y,\theta)=\sum_{\alpha}\delta_0^{(\alpha)}(x)\tilde{u}_{\alpha}(y,\theta),$$
 Recall from Schwartz Theorem that the sum is finite and that $\tilde{u}_{\alpha}(y,\theta)$ is an element 
 of $\ml{D}'((-\delta,\delta)^{n-1-r}\times\IR)$ which is $2\ml{P}_{\Lambda}$-periodic in the $\theta$-variable. 
 Using the assumption~\eqref{e:constraint-wavefront} on the wave front of $u$, we know that the distributions 
 $\tilde{u}_{\alpha}(y,\theta)$ are in fact smooth by a similar argument as in~\cite[Lemma 9.2 and Corollary 9.3]{DangIHP}. As before, we
 make use of the Taylor expansion in the variable $y$, i.e. for every $\alpha\in\IN^r$ and for every $N\geq 0$, we write
 $$\tilde{u}_{\alpha}(y,\theta)=\sum_{\beta:|\beta|\leq C_N} c_{\alpha,\beta}(\theta)y^{\beta}+R_{N,\alpha}(y,\theta),$$
 with $R_{N,\alpha}(y,\theta)=\ml{O}(|y|^{N+1})$ uniformly in $\theta$ as $\tilde{u}(x,y,\theta)$ is 
 $2\ml{P}_{\Lambda}$-periodic in $\theta$. Then, we write the Fourier decomposition of each term $c_{\alpha,\beta}(\theta)$, i.e. 
$$
\tilde{u}(x,y,\theta) = \sum_{(\alpha,\beta,\tilde{\alpha}_n)\in\IN^{n-1}\times\IZ}\tilde{c}_{\alpha,\beta,\tilde{\alpha}_n}
 e^{\frac{i\pi\tilde{\alpha}_n\theta}{\ml{P}_{\Lambda}}}\delta_0^{(\alpha)}(x)y^{\beta}+\sum_{\alpha}\delta_0^{(\alpha)}(x)R_{N,\alpha}(y,\theta).
$$
Then, we can recover the expression for $u(x,y,\theta)$,
$$u(x,y,\theta) = \sum_{(\alpha,\beta,\tilde{\alpha}_n)\in\IN^{n-1}\times\IZ}\tilde{c}_{\alpha,\beta,\tilde{\alpha}_n}
 e^{\frac{i\pi\tilde{\alpha}_n\theta}{\ml{P}_{\Lambda}}}(P(\theta)^{-1})^*\left(\delta_0^{(\alpha)}(x)y^{\beta}\right)
 +\sum_{\alpha}(P(\theta)^{-1})^*\delta_0^{(\alpha)}(x)R_{N,\alpha}(y,\theta),$$
which is a $\ml{P}_{\Lambda}$ periodic distribution in $\theta$ even if each individual term is a priori only $2\ml{P}_{\Lambda}$-periodic. 
We now fix a smooth test function $\psi\in\ml{C}^{\infty}(U_{\Lambda}\times\IR)$ which is $\ml{P}_{\Lambda}$-periodic in the variable $\theta$ 
and compactly supported in $U_{\Lambda}$. We write
\begin{eqnarray*}
 \la u,\psi\ra & = &\sum_{(\alpha,\beta,\tilde{\alpha}_n)\in\IN^{n-1}\times\IZ}\tilde{c}_{\alpha,\beta,\tilde{\alpha}_n}\left\la 
 e^{\frac{i\pi\tilde{\alpha}_n\theta}{\ml{P}_{\Lambda}}}(P(\theta)^{-1})^*(\delta_0^{(\alpha)}(x)y^{\beta}),\psi\right\ra\\
 & &+\sum_{\alpha}\left\la(P(\theta)^{-1})^*(\delta_0^{(\alpha)}(x)R_{N,\alpha}(y,\theta)),\psi\right\ra.
\end{eqnarray*}
We can now make use of the eigenvalue equation~\eqref{e:eigenvalue-scalar} and of remark~\ref{r:propag-closed-orbit} 
to find that there exists some polynomial $P(t)$ such that, for every $t\geq 0$,
\begin{eqnarray*}
e^{\lambda t}P(t) &=& \la  e^{-t\mathcal{L}_V } u,\psi\ra\\
  & = &\sum_{(\alpha,\beta,\tilde{\alpha}_n)\in\IN^{n-1}\times\IZ}\tilde{c}_{\alpha,\beta,\tilde{\alpha}_n}
 e^{t\tilde{\lambda}_{\alpha,\beta,\tilde{\alpha}_n}}\left\la 
 e^{\frac{i\pi\tilde{\alpha}_n\theta}{\ml{P}_{\Lambda}}}(P(\theta)^{-1})^*(\delta_0^{(\alpha)}(x)y^{\beta}),\psi\right\ra\\
 & &+\sum_{\alpha}\left\la(\varphi^{-t*}P(\theta)^{-1})^*(\delta_0^{(\alpha)}(x)R_{N,\alpha}(y,\theta)),\psi\right\ra,
\end{eqnarray*}
where
$$
\tilde{\lambda}_{\alpha,\beta,\tilde{\alpha}_n} :=  -\sum_{j\in L_s}|\chi_j(\Lambda)|-2\sum_{j\in P_s}|\chi_j(\Lambda)|+i\tilde{\alpha}_n\frac{\pi}{\ml{P}_{\Lambda}}
+\sum_{\chi\in\text{Sp}(A_{\Lambda}):\text{Re}(\chi)<0}\alpha_{\chi}\chi-
\sum_{\chi\in\text{Sp}(A_{\Lambda}):\text{Re}(\chi)>0}\beta_{\chi}\chi.
$$
Note that this is not exactly the expression given by~\eqref{e:scalar-resonance-periodic-orbit} and we will explain how to recover the 
correct expression. Before that, we make the Laplace transform of this equality (w.r.t. the variable $t$). The left-hand side has multiple poles 
at $z=\lambda$. Using the fact that 
$R_{N,\alpha}(y,\theta)=\ml{O}(|y|^{N+1})$, we can verify that the remainder of the right-hand side has no pole at 
$z=\lambda$ provided that $N$ is chosen large enough (in a way that 
depends on $\lambda$ and $u$). Hence, arguing as in the case of fixed point, 
we find that the pole on the left hand side must be simple and $\lambda=\tilde{\lambda}_{\alpha,\beta,\tilde{\alpha}_n}$ for some $(\alpha,\beta,\tilde{\alpha}_n)$. Moreover, as the residue 
of the left hand side is equal to $\la u,\psi\ra$ near $\Lambda$, we know that $u$ is equal to
$$u(x,y,\theta)=\sum_{(\alpha,\beta,\tilde{\alpha}_n)\in\IN^{n-1}\times\IZ:\tilde{\lambda}_{\alpha,\beta,\tilde{\alpha}_n}=\lambda}
\tilde{c}_{\alpha,\beta,\tilde{\alpha}_n}
 e^{\frac{i\pi\tilde{\alpha}_n\theta}{\ml{P}_{\Lambda}}}(P(\theta)^{-1})^*(\delta_0^{(\alpha)}(x)y^{\beta}).$$
Here, we have to pay a little attention to the periodicity issue in order to conclude. First of all, observe that 
$(P(\theta)^{-1})^*(\delta_0^{(\alpha)}(x)y^{\beta})$ is not a priori $\ml{P}_{\Lambda}$-periodic due to the properties of 
$P(\theta)$ -- see appendix~\ref{a:floquet}. However, 
$e^{-\frac{2i\pi \beta.\tilde{\varepsilon} \theta}{\ml{P}_{\Lambda}}}(P(\theta)^{-1})^*(\delta_0^{(\alpha)}(x)y^{\beta})$ is $\ml{P}_{\Lambda}$-periodic. 
Thus, we can rewrite $u$ as follows:
$$u(x,y,\theta)=\sum_{(\alpha,\beta,\tilde{\alpha}_n)\in\IN^{n-1}\times\IZ:\tilde{\lambda}_{\alpha,\beta,\tilde{\alpha}_n}=\lambda}
\tilde{c}_{\alpha,\beta,\tilde{\alpha}_n}
 e^{\frac{2i\pi\left(\frac{\tilde{\alpha}_n}{2}+\beta.\tilde{\varepsilon}\right)\theta}{\ml{P}_{\Lambda}}}
 \left(e^{-\frac{i\pi \beta.\varepsilon \theta}{\ml{P}_{\Lambda}}}(P(\theta)^{-1})^*(\delta_0^{(\alpha)}(x)y^{\beta})\right).$$
Writing now that $u$ is $\ml{P}_{\Lambda}$-periodic, we finally get that
$$\sum_{(\alpha,\beta,\tilde{\alpha}_n):\tilde{\lambda}_{\alpha,\beta,\tilde{\alpha}_n}=\lambda,\ \frac{\tilde{\alpha}_n}{2}+\beta.\varepsilon\notin\IZ}
\tilde{c}_{\alpha,\beta,\tilde{\alpha}_n}
 e^{\frac{2i\pi\left(\frac{\tilde{\alpha}_n}{2}+\beta.\tilde{\varepsilon}\right)\theta}{\ml{P}_{\Lambda}}}
 \left(e^{-\frac{i\pi \beta.\tilde{\varepsilon} \theta}{\ml{P}_{\Lambda}}}(P(\theta)^{-1})^*(\delta_0^{(\alpha)}(x)y^{\beta})\right)=0,$$
 which implies that
 $$u(x,y,\theta)=\sum_{(\alpha,\beta,\tilde{\alpha}_n):\tilde{\lambda}_{\alpha,\beta,\tilde{\alpha}_n}=\lambda,\ \frac{\tilde{\alpha}_n}{2}+\beta.\varepsilon\in\IZ}
\tilde{c}_{\alpha,\beta,\tilde{\alpha}_n}
 e^{\frac{2i\pi\left(\frac{\tilde{\alpha}_n}{2}+\beta.\tilde{\varepsilon}\right)\theta}{\ml{P}_{\Lambda}}}
 \left(e^{-\frac{i\pi \beta.\tilde{\varepsilon} \theta}{\ml{P}_{\Lambda}}}(P(\theta)^{-1})^*(\delta_0^{(\alpha)}(x)y^{\beta})\right),$$
and concludes the proof. 
\end{proof}

\section{Complete description of the Pollicott-Ruelle spectrum}\label{s:proof}

In this section, we will give a complete description of the resonances of the lifted flow $\Phi_k^t$ and their corresponding resonant states. This result will make a crucial use of both the construction 
of the previous paragraphs and the anisotropic 
Sobolev spaces defined in~\cite{dangrivieremorsesmale1}. The strategy is as follows. First, we make use of the local eigenmodes 
defined in section~\ref{s:local-form-eigenmode} and we 
project each of them on a resonant state by using the spectral projectors defined in~\cite{dangrivieremorsesmale1}. 
Then, we study some of their properties~: the 
support and local form near a critical element. Finally, we show that these states generate all the resonant states of the Morse-Smale flow. This is exactly the content of Theorem~\ref{t:maintheo-full} 
which is the main result of the article and which is much more general than Theorems~\ref{mainthmintro} and~\ref{mainthmintro2} from the introduction.

\subsection{Construction of resonant states}
\label{ss:resonantstates}
 Let $\Lambda$ be a critical element of the flow (either a fixed point or a closed orbit) and let $0\leq k\leq n$. 
Fix a smooth function $0\leq\ \tau_{\Lambda}\leq 1$ which is 
equal to $1$ in a small neighborhood of $\Lambda$ and to $0$ outside a slightly larger neighborhood. 
Let $(\alpha,\mathbf{j})$ be an element in $\IN^n\times D_k$ 
(if $\Lambda$ is a fixed point) and in $\IN^{n-1}\times\IZ\times D_k$ (if $\Lambda$ is a closed orbit). Using the conventions of 
paragraph~\ref{ss:reduction-scalar} and of section~\ref{s:local-form-eigenmode}, we define the following element of 
$\ml{D}^{\prime k}(M,\ml{E})$:
\begin{equation}\label{e:resonant-state-local-form-vector}\tilde{\mathbf{u}}_{\alpha,\mathbf{j}}^{\Lambda}
 :=\tau_{\Lambda}u_{\alpha}^{\Lambda} \mathbf{f}_{\mathbf{j}}^{\Lambda, k}.
\end{equation}
In a small neighborhood of $\Lambda$, $\tilde{\mathbf{u}}_{\alpha,\mathbf{j}}^{\Lambda}$ satisfies the following equation:
\begin{equation}\label{e:eigenvalue-equation-resonant-state}-\ml{L}_{V,\nabla}^{(k)}\left(\tilde{\mathbf{u}}_{\alpha,\mathbf{j}}^{\Lambda}\right)
=\left(\lambda_{\alpha}^{\Lambda}+\delta_{\mathbf{j}}^{\Lambda}\right)\tilde{\mathbf{u}}_{\alpha,\mathbf{j}}^{\Lambda}.
\end{equation}
This element is of course not a resonant state as it only solves the eigenvalue equation locally and we 
shall now define a global resonant state from this element. For that purpose, observe that 
$\tilde{\mathbf{u}}_{\alpha,\mathbf{j}}^{\Lambda}$ belongs to the anisotropic Sobolev space $\ml{H}_k^{m}(M,\ml{E})$ 
we have constructed in~\cite[section $5$]{dangrivieremorsesmale1} provided that we choose large enough parameters for the Sobolev regularity 
in the stable and unstable directions. We can also choose the order function $m$ in a such a way that 
$\lambda_{\alpha}^{\Lambda}+\delta_{\mathbf{j}}^{\Lambda}$ belongs to the half-plane where $-\ml{L}_{V,\nabla}^{(k)}$ has 
a discrete spectrum with a finite multiplicity. In particular, if we pick a sufficiently small curve $\Gamma_{\alpha,\mathbf{j}}^{\Lambda,k}$ 
surrounding $\lambda_{\alpha}^{\Lambda}+\delta_{\mathbf{j}}^{\Lambda}$ in $\IC$, we can define the corresponding spectral projector:
\begin{equation}\label{e:spectral-projector}
 \pi_{\alpha,\mathbf{j}}^{\Lambda,k}:=\frac{1}{2i\pi}\int_{\Gamma_{\alpha,\mathbf{j}}^{\Lambda,k}}\left(z+\ml{L}_{V,\nabla}^{(k)}\right)^{-1}dz
 :\ml{H}_k^{m}(M,\ml{E})\rightarrow\ml{H}_k^{m}(M,\ml{E}).
\end{equation}
We refer to section $5$ of~\cite{dangrivieremorsesmale1} for more details on that issue. Note that this projector may a priori be equal to $0$, 
i.e. its range could be reduced to $0$. In any case, this operator allows us to define a resonant state:
\begin{equation}\label{e:global-resonant-state}
 \mathbf{u}_{\alpha,\mathbf{j}}^{\Lambda,k}:=\pi_{\alpha,\mathbf{j}}^{\Lambda,k}\left(\tilde{\mathbf{u}}_{\alpha,\mathbf{j}}^{\Lambda}\right).
\end{equation}
From the definition of the spectral projector, we know that there exists some $p\geq 1$ such that $\mathbf{u}_{\alpha,\mathbf{j}}^{\Lambda,k}$ 
solves the generalized eigenvalue equation:
\begin{equation}\label{e:generalized- eigenvalue-equation-resonant-state}
\left(\ml{L}_{V,\nabla}^{(k)}+\left(\lambda_{\alpha}^{\Lambda}+\delta_{\mathbf{j}}^{\Lambda}\right)\right)^p\mathbf{u}_{\alpha,\mathbf{j}}^{\Lambda,k}=0.
\end{equation}
We define the following family of \textbf{resonant states}
\begin{equation}\label{e:set-resonant-state}
 \ml{F}^{(k)}(V,\nabla,\ml{E}):=\ml{F}_{\text{pt}}^{(k)}(V,\nabla,\ml{E})\cup\ml{F}_{\text{orb}}^{(k)}(V,\nabla,\ml{E}),
\end{equation}
where
$$\ml{F}_{\text{pt}}^{(k)}(V,\nabla,\ml{E}):=\left\{\mathbf{u}_{\alpha,\mathbf{j}}^{\Lambda,k}:(\alpha,\mathbf{j})\in\IN^n\times D_k\right\}$$
and
$$\ml{F}_{\text{orb}}^{(k)}(V,\nabla,\ml{E}):=
\left\{\mathbf{u}_{\alpha,\mathbf{j}}^{\Lambda,k}:(\alpha,\mathbf{j})\in\IN^{n-1}\times\IZ\times D_k\right\}.$$
Theorem~\ref{t:maintheo-full} will show that the family $\ml{F}^{(k)}(V,\nabla,\ml{E})$ is composed of linearly independent vectors which generate all 
the possible resonant states. Before proving this, let us prove some properties of these resonant states (in particular 
that they are not equal to $0$):
\begin{prop}\label{p:resonant-state} Let $\Lambda$ be a critical element and let $0\leq k\leq n$. Fix $(\alpha,\mathbf{j})$ as above. Then, the 
following holds:
\begin{enumerate}
 \item $\mathbf{u}_{\alpha,\mathbf{j}}^{\Lambda,k}$ is equal to $\tilde{\mathbf{u}}_{\alpha,\mathbf{j}}^{\Lambda,k}$ in a small neighborhood of $\Lambda$;
 \item the support of $\mathbf{u}_{\alpha,\mathbf{j}}^{\Lambda,k}$ is equal to $\overline{W^u(\Lambda)}$.
\end{enumerate}
\end{prop}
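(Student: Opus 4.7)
The plan is to deduce both claims from the resolvent identity combined with the forward-invariance of the support of generalized eigenvectors. Set $z_0 := \lambda_\alpha^\Lambda + \delta_{\mathbf{j}}^\Lambda$ and $R(z) := (z + \ml{L}_{V,\nabla}^{(k)})^{-1}$, viewed as a meromorphic family on the anisotropic Sobolev space $\ml{H}_k^m(M,\ml{E})$ from~\cite{dangrivieremorsesmale1}. Using~\eqref{e:eigenvalue-equation-resonant-state} and that $\ml{L}_{V,\nabla}^{(k)}$ is first-order, so that $[\ml{L}_{V,\nabla}^{(k)},\tau_\Lambda] = V(\tau_\Lambda)$, I obtain
\begin{equation*}
(\ml{L}_{V,\nabla}^{(k)}+z_0)\tilde{\mathbf{u}}_{\alpha,\mathbf{j}}^{\Lambda} = V(\tau_\Lambda)\, u_\alpha^\Lambda \,\mathbf{f}_{\mathbf{j}}^{\Lambda,k} =: \mathbf{r},
\end{equation*}
so that $\operatorname{supp}(\mathbf{r})$ lies in $W^u_{\mathrm{loc}}(\Lambda) \cap \operatorname{supp}(d\tau_\Lambda)$, a closed set sitting at positive distance from $\Lambda$. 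A standard resolvent manipulation using $(\ml{L}_{V,\nabla}^{(k)}+z)\tilde{\mathbf{u}} = (z-z_0)\tilde{\mathbf{u}}+\mathbf{r}$ and integration over $\Gamma_{\alpha,\mathbf{j}}^{\Lambda,k}$ yields
\begin{equation*}
\mathbf{u}_{\alpha,\mathbf{j}}^{\Lambda,k} - \tilde{\mathbf{u}}_{\alpha,\mathbf{j}}^{\Lambda} = -\frac{1}{2i\pi}\int_{\Gamma_{\alpha,\mathbf{j}}^{\Lambda,k}} \frac{R(z)\mathbf{r}}{z-z_0}\, dz.
\end{equation*}

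To prove assertion~(1), I want the right-hand side to vanish on a small neighborhood $U_0$ of $\Lambda$. For $\operatorname{Re}(z)$ large, $R(z)\mathbf{r} = \int_0^\infty e^{-tz}\varphi^{-t*}\mathbf{r}\, dt$, hence $\operatorname{supp}(R(z)\mathbf{r}) \subset K' := \overline{\bigcup_{t \geq 0}\varphi^t(\operatorname{supp}(\mathbf{r}))}$. By hyperbolicity of $\Lambda$, forward orbits of points of $W^u(\Lambda)\setminus\{\Lambda\}$ are pushed away from $\Lambda$ and never return, so $K'$ avoids some neighborhood $U_0$ of $\Lambda$. Since the subspace of currents in $\ml{H}_k^m(M,\ml{E})$ supported in $K'$ is closed and $z \mapsto R(z)\mathbf{r}$ is meromorphic with values in $\ml{H}_k^m(M,\ml{E})$, unique meromorphic continuation extends the inclusion $\operatorname{supp}(R(z)\mathbf{r}) \subset K'$ to every $z \in \Gamma_{\alpha,\mathbf{j}}^{\Lambda,k}$. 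The contour integral is thus supported in $K'$, which gives $\mathbf{u}_{\alpha,\mathbf{j}}^{\Lambda,k} = \tilde{\mathbf{u}}_{\alpha,\mathbf{j}}^{\Lambda}$ on $U_0$.

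For assertion~(2), the upper bound $\operatorname{supp}(\mathbf{u}_{\alpha,\mathbf{j}}^{\Lambda,k}) \subset \overline{W^u(\Lambda)}$ follows from the same meromorphic continuation argument, now applied to $R(z)\tilde{\mathbf{u}}_{\alpha,\mathbf{j}}^{\Lambda}$: the support of $\tilde{\mathbf{u}}_{\alpha,\mathbf{j}}^{\Lambda}$ lies in $W^u_{\mathrm{loc}}(\Lambda)$, and its forward $\varphi^t$-saturation lies in $\overline{W^u(\Lambda)}$. For the reverse inclusion, the crucial input is the forward-invariance of $\operatorname{supp}(\mathbf{u}_{\alpha,\mathbf{j}}^{\Lambda,k})$. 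Since $(\ml{L}_{V,\nabla}^{(k)}+z_0)^p\mathbf{u}_{\alpha,\mathbf{j}}^{\Lambda,k} = 0$ for some $p \geq 1$, one has
\begin{equation*}
\varphi^{-t*}\mathbf{u}_{\alpha,\mathbf{j}}^{\Lambda,k} = e^{tz_0}\sum_{j=0}^{p-1}\frac{(-t)^j}{j!}(\ml{L}_{V,\nabla}^{(k)}+z_0)^j\mathbf{u}_{\alpha,\mathbf{j}}^{\Lambda,k},
\end{equation*}
and because $\ml{L}_{V,\nabla}^{(k)}+z_0$ is a differential operator, each term on the right is supported inside $\operatorname{supp}(\mathbf{u}_{\alpha,\mathbf{j}}^{\Lambda,k})$. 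Hence $\varphi^t(\operatorname{supp}(\mathbf{u}_{\alpha,\mathbf{j}}^{\Lambda,k})) \subset \operatorname{supp}(\mathbf{u}_{\alpha,\mathbf{j}}^{\Lambda,k})$ for all $t \geq 0$. Combining this with (1), which yields $W^u_{\mathrm{loc}}(\Lambda) \cap U_0 \subset \operatorname{supp}(\mathbf{u}_{\alpha,\mathbf{j}}^{\Lambda,k})$, and the fact that every point of $W^u(\Lambda)$ lies in the forward orbit of a point of $W^u_{\mathrm{loc}}(\Lambda) \cap U_0$ (take $t$ large so that $\varphi^{-t}(x)$ enters $U_0$), one concludes $\overline{W^u(\Lambda)} \subset \operatorname{supp}(\mathbf{u}_{\alpha,\mathbf{j}}^{\Lambda,k})$.

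The main obstacle I anticipate is justifying that the meromorphic continuation of $R(z)$ on $\ml{H}_k^m(M,\ml{E})$ preserves the closed subspace of currents supported in a prescribed closed flow-saturated set; this should follow from the anisotropic Sobolev construction in~\cite{dangrivieremorsesmale1} together with the continuity of the ``restriction to the complement of $K'$'' functional, but it must be checked carefully. In the closed orbit case, the same argument goes through once the support analysis is carried out in the Floquet coordinates of appendix~\ref{a:floquet}, since the geometric input --- forward orbits on $W^u(\Lambda)$ stay uniformly away from $\Lambda$ --- is unchanged.
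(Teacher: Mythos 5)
Your argument is correct in substance and is close in spirit to the paper's, but it is organized differently, so let me compare. The paper proves both assertions by pairing with a fixed test form $\psi$ and studying the Laplace transform $\hat{C}_{\psi}(z)=\int_0^{\infty}\la \Phi_k^{-t*}\tilde{\mathbf{u}}_{\alpha,\mathbf{j}}^{\Lambda,k},\psi\ra e^{-zt}dt$: the local eigenvalue equation gives an explicit simple pole when $\psi$ is supported near $\Lambda$, the vanishing of $\hat{C}_{\psi}$ when $\supp\psi$ avoids $\overline{W^u(\Lambda)}$ gives the upper bound on the support, and the residue formula of~\cite{dangrivieremorsesmale1} identifies those residues with $\la\pi_{z_0}^{(k)}\tilde{\mathbf{u}},\psi\ra$. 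You instead make the cutoff error explicit, $(\ml{L}_{V,\nabla}^{(k)}+z_0)\tilde{\mathbf{u}}=V(\tau_{\Lambda})u_{\alpha}^{\Lambda}\mathbf{f}_{\mathbf{j}}^{\Lambda,k}=:\mathbf{r}$, write $\mathbf{u}-\tilde{\mathbf{u}}$ as a contour integral of $R(z)\mathbf{r}/(z-z_0)$, and control $\supp R(z)\mathbf{r}$; for the lower bound in (2) you use forward invariance of the support of a generalized eigencurrent, which is exactly the paper's propagation Lemma~\ref{l:propagationzero} (proved there separately and used later for the generation step). Both routes rest on the same two pillars (the meromorphic resolvent on $\ml{H}_k^m(M,\ml{E})$ paired against test forms, and forward-flow support propagation); yours has the advantage of exhibiting $\mathbf{u}-\tilde{\mathbf{u}}$ explicitly, while the paper's avoids any discussion of supports of resolvents by only ever working with scalar functions of $z$. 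Concerning the point you flag as the main obstacle: it is indeed easiest handled scalarly, i.e.\ for each fixed $\psi$ supported off $K'$ the map $z\mapsto\la R(z)\mathbf{r},\psi\ra$ vanishes for $\operatorname{Re}(z)\gg1$ and hence identically by the identity theorem, so no closedness of a subspace of $\ml{H}_k^m$ is needed.

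One justification in your write-up is too weak as stated: the claim that $K'=\overline{\bigcup_{t\geq0}\varphi^t(\supp\mathbf{r})}$ avoids a neighborhood of $\Lambda$ does \emph{not} follow from hyperbolicity of $\Lambda$ alone; for a general hyperbolic invariant set, forward orbits on $W^u(\Lambda)$ may return arbitrarily close to $\Lambda$ (homoclinic points). What makes it true here is the Morse--Smale structure: transversality together with $NW(\varphi^t)=\bigcup_i\Lambda_i$ excludes homoclinic connections, and, more importantly, $W^u(\Lambda)$ coincides with the local unstable manifold in a sufficiently small neighborhood of $\Lambda$ (this is the content of the remark following~\eqref{e:constraint-wavefront}, quoting Remark~4.5 in the proof of Theorem~4.4 of~\cite{dangrivieremorsesmale1}). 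With that fact, a compactness argument gives the uniform bound: if $x_n\in\supp\mathbf{r}$ and $t_n\geq0$ with $\varphi^{t_n}(x_n)\to\Lambda$, then $t_n\to\infty$, the points $\varphi^{t_n}(x_n)\in W^u(\Lambda)$ eventually lie on the local unstable manifold, and flowing backward contracts them to $\Lambda$, contradicting $d(x_n,\Lambda)\geq\eps_0$. So the step is correct, but it must be attributed to the Morse--Smale dynamics (as the paper does) rather than to hyperbolicity.
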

\begin{rema}\label{r:linearly-independent} Note that, 
by construction, the $\tilde{\mathbf{u}}_{\alpha,\mathbf{j}}^{\Lambda,k}$ are linearly independent near every $\Lambda$. Hence, we can deduce 
from this Proposition that the elements of $\ml{F}^{(k)}(V,\nabla,\ml{E})$ are linearly independent.
\end{rema}

\begin{rema}\label{r:jordan}
From~\eqref{e:generalized- eigenvalue-equation-resonant-state}, the resonant states are not a priori solutions of a true eigenvalue 
equation and they may be associated to Jordan blocks. The works of Frenkel-Losev-Nekrasov exhibits situations where there are 
indeed infinitely many Jordan blocks~\cite{frenkel2008instantons}. 
On the other hand, one can also find simple nonresonance criteria under which there are no 
Jordan blocks~\cite[section 7]{dang2016spectral} in the case of Morse-Smale gradient flows. In any case, note that, if we restrict ourselves to the open set
$M-(\overline{W^{u}(\Lambda)}-W^{u}(\Lambda))$, this proposition combined with~\eqref{e:eigenvalue-equation-resonant-state} 
shows that $\mathbf{u}_{\alpha,\mathbf{j}}^{\Lambda,k}$ is a solution of the true eigenvalue equation on that open set.
\end{rema}

\begin{proof}
 In order to prove this Proposition, we should consider the semi-group $\Phi^{-t*}_k$ (see Remark~5.6 in~\cite{dangrivieremorsesmale1}) 
 associated to the unbounded operator
 $$-\ml{L}_{V,\nabla}^{(k)}:\ml{H}_k^{m}(M,\ml{E})\rightarrow\ml{H}_k^{m}(M,\ml{E}).$$
 In other words, we denote by $\Phi_k^{-t*}\tilde{\mathbf{u}}_{\alpha,\mathbf{j}}^{\Lambda,k}$ the solutions of the following PDE:
 $$\partial_t\mathbf{u}=-\ml{L}_{V,\nabla}^{(k)}\mathbf{u},\ \mathbf{u}(t=0)=\tilde{\mathbf{u}}_{\alpha,\mathbf{j}}^{\Lambda,k}.$$
 Then, for every $\psi\in\Omega^{n-k}(M,\ml{E}')$ and for $z$ with $\text{Re}(z)$ large enough, 
 we can define
 $$\hat{C}_{\psi}(z):=\int_0^{+\infty}\la \Phi_k^{-t*}\tilde{\mathbf{u}}_{\alpha,\mathbf{j}}^{\Lambda,k},\psi\ra e^{-zt} dt.$$ 
 Due to equation~\eqref{e:eigenvalue-equation-resonant-state}, we know that, for $\psi$ compactly supported in a small neighborhood of $\Lambda$, one 
 has $\hat{C}_{\psi}(z)=\la\tilde{\mathbf{u}}_{\alpha,\mathbf{j}}^{\Lambda,k},\psi\ra(z-(\lambda_{\alpha}^{\Lambda}+\delta_{\mathbf{j}}^{\Lambda}))^{-1}$. 
 If the support of $\psi$ does not intersect $\overline{W^u(\Lambda)}$, we can also verify that $\hat{C}_{\psi}(z)=0$.
 
 On the other hand, the main result of~\cite{dangrivieremorsesmale1} states that, for every $\psi$, the function $\hat{C}_{\psi}(z)$ has a meromorphic extension to 
 the entire complex plane. Note that the main result of the introduction in~\cite{dangrivieremorsesmale1} is stated for the trivial vector bundle 
 $M\times\IC$ and for test functions which are in $\Omega^k(M)\times\Omega^{n-k}(M)$. Yet, the results hold more generally 
 for elements $\ml{H}_m^k(M,\ml{E})\times\Omega^{n-k}(M,\ml{E})$ as it follows from the spectral analysis of the operator 
 $-\ml{L}_{V,\nabla}^{(k)}$ acting on the Hilbert space $\ml{H}_m^k(M,\ml{E})$ -- see section~$5$ of~\cite{dangrivieremorsesmale1} 
 for more details. From the construction of $\ml{H}_m^k(M,\ml{E})$, one can verify that $\tilde{\mathbf{u}}_{\alpha,\mathbf{j}}^{\Lambda,k}$ is indeed 
 in that space if we pick large enough Sobolev regularity in the definition of the order function $m$. Hence, 
 from~\cite[Sect.~5.4]{dangrivieremorsesmale1}, one knows that, near every $z_0\in \IC$, $\hat{C}_{\psi}(z)$ can be decomposed as follows:
 $$\hat{C}_{\psi}(z)=\sum_{l=1}^{m_k(z_0)}(-1)^{l-1}
 \frac{\left\la(\ml{L}_{V,\nabla}^{(k)}+z_0)^{l-1}\pi_{z_0}^{(k)}(\tilde{\mathbf{u}}_{\alpha,\mathbf{j}}^{\Lambda,k})
 ,\psi\right\ra}{(z-z_0)^l}+R_{\psi}(z),$$
 where $R_{\psi}$ is an holomorphic function and where $\pi_{z_0}^{(k)}$ is a certain spectral projector corresponding to $z_0$ 
 (and which is eventually $0$ if $z_0$ is not an eigenvalue). We can now make use of this result with 
 $z_0=-(\lambda_{\alpha}^{\Lambda}+\delta_{\mathbf{j}}^{\Lambda})$. More precisely, identifying the residue at that point when $\psi$ is compactly 
 supported near $\lambda$, we find that 
 $\la\mathbf{u}_{\alpha,\mathbf{j}}^{\Lambda,k},\psi\ra=\la\tilde{\mathbf{u}}_{\alpha,\mathbf{j}}^{\Lambda,k},\psi\ra$ for such test functions as 
 $\mathbf{u}_{\alpha,\mathbf{j}}^{\Lambda,k}$ is by definition the projection of $\tilde{\mathbf{u}}_{\alpha,\mathbf{j}}^{\Lambda,k}$. This shows the 
 first part of the proposition. For the second part, we argue similarly but with test functions $\psi$ that does intersect $\overline{W^u(\Lambda)}$. 
\end{proof}

\subsubsection{Relation to Epstein-Glaser renormalization}\label{sss:Epstein-Glaser}

In~\cite{frenkel2011instantons} and in the present paper, one 
construct some \textbf{germ of} eigenmode 
$u^\Lambda_\alpha \mathbf{f}_{\mathbf{j}}^{\Lambda,k}$ 
with corresponding
eigenvalue $\lambda$ 
locally near some critical element $\Lambda$.
Assume it is defined in some neighborhood $V_\Lambda$ of $\Lambda$.
Then
using both the flow and the fact that $u^\Lambda_\alpha \mathbf{f}_{\mathbf{j}}^{\Lambda,k}$ 
is an eigenmode for $-\mathcal{L}_{V,\nabla}^{(k)}$
allows us to define the current
$u^\Lambda_\alpha \mathbf{f}_{\mathbf{j}}^{\Lambda,k}$
in $\mathcal{D}^{\prime,k}(M\setminus \partial W^u(\Lambda),\mathcal{E})$
(where $\partial W^u(\Lambda)=\overline{ W^u(\Lambda)}\setminus  W^u(\Lambda)$ 
is the boundary of $W^u(\Lambda)$).
Indeed, for every test form $\psi\in \Omega^{n-k}(M\setminus \partial W^u(\Lambda),\mathcal{E}^\prime)$,
there is some time $T\geqslant 0$ large enough such that
$\varphi^{-T}\left(\text{supp}(\psi)\cap W^u(\Lambda)\right)$ is contained in $V_\Lambda$ and using
the relation $ e^{-T\lambda}\Phi_k^{-T*}\left(u^\Lambda_\alpha \mathbf{f}_{\mathbf{j}}^{\Lambda,k}\right)= u^\Lambda_\alpha \mathbf{f}_{\mathbf{j}}^{\Lambda,k}$, we define
the pairing $\langle u^\Lambda_\alpha \mathbf{f}_{\mathbf{j}}^{\Lambda,k},\psi \rangle$ as~:
$$ e^{-T\lambda}\langle \Phi_k^{-T*}u^\Lambda_\alpha \mathbf{f}_{\mathbf{j}}^{\Lambda,k},\psi  \rangle  .$$
So the globally defined current
$\mathbf{u}_{\alpha,\mathbf{j}}^{\Lambda,k}$ from proposition \ref{p:resonant-state}
is a \textbf{distributional extension}
of the twisted current $u^\Lambda_\alpha \mathbf{f}_{\mathbf{j}}^{\Lambda,k}\in \mathcal{D}^{\prime,k}(M\setminus \partial W^u(\Lambda),\mathcal{E})$.
This is analoguous to Epstein--Glaser renormalization~\cite{Epstein} where one aims 
to extend some distribution defined on some manifold $M$ minus some closed subset $X$ to the whole manifold 
$M$.
Therefore this motivates us to reformulate some of the results from
Theorems~\ref{t:local-form-scalar-critical-point}, \ref{t:local-form-scalar-periodic-orbit} and Proposition \ref{p:resonant-state}
as the following Theorem of independent interest~:
\begin{theo}[Extension Theorem]
\label{t:epstein-glaser}
 Let $\ml{E}\rightarrow M$ be a smooth, complex, hermitian 
vector bundle of dimension $N$ endowed with a flat unitary connection $\nabla$, 
$V$ a $\ml{C}^{\infty}$-diagonalizable Morse--Smale vector field and 
$\mathcal{L}_{V,\nabla}:\mathcal{D}^{\prime,\bullet}(M,\cE)\mapsto \mathcal{D}^{\prime,\bullet}(M,\cE)$ the corresponding linear operator acting
on twisted currents.

For every critical element
$\Lambda$
of $V$,
if
$\tilde{\mathbf{u}}$ is a germ of twisted current defined near
$\Lambda$
which is a solution of $-\mathcal{L}_{V,\nabla}\tilde{\mathbf{u}}=\lambda \tilde{\mathbf{u}}$
with the constraints
$\operatorname{supp}(\tilde{\mathbf{u}})\subset W^u(\Lambda)$ and $WF(\tilde{\mathbf{u}})\subset N^*(W^u(\Lambda))$,
then there exists a globally defined current 
$\mathbf{u}\in \mathcal{D}^{\prime}(M,\mathcal{E})$ such that
$$\left( \mathcal{L}_{V,\nabla}+z\right)^p\mathbf{u}=0 $$ for some $p\in \mathbb{N}$
and $\tilde{\mathbf{u}}=\mathbf{u}$ in some neighbborhood of $\Lambda$.
In particular $\left( \mathcal{L}_{V,\nabla}+z\right)\mathbf{u}=0$ on $M\setminus \partial W^u(\Lambda)$.
\end{theo}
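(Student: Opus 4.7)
The plan is to reduce the vectorial germ problem to the scalar germ problem already solved in Section~\ref{s:local-form-eigenmode}, and then use the spectral projectors from Section~\ref{ss:resonantstates} to promote the resulting local expressions to global currents. Concretely, I would first invoke the moving frame $(\mathbf{f}_{\mathbf{j}}^{\Lambda,k})_{\mathbf{j}\in D_k}$ constructed in paragraph~\ref{ss:reduction-scalar} and write
$$\tilde{\mathbf{u}}=\sum_{\mathbf{j}\in D_k} u_{\mathbf{j}}\,\mathbf{f}_{\mathbf{j}}^{\Lambda,k}$$
in a small neighborhood of $\Lambda$, with $u_{\mathbf{j}}\in\mathcal{D}'$. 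Since the $\mathbf{f}_{\mathbf{j}}^{\Lambda,k}$ are smooth and linearly independent near $\Lambda$, the hypothesis $\operatorname{supp}(\tilde{\mathbf{u}})\subset W^u(\Lambda)$ and $\operatorname{WF}(\tilde{\mathbf{u}})\subset N^*(W^u(\Lambda))$ transfer to each scalar component $u_{\mathbf{j}}$. Using the diagonalization formula~\eqref{e:reduction-scalar-case}, the equation $-\mathcal{L}_{V,\nabla}\tilde{\mathbf{u}}=\lambda\,\tilde{\mathbf{u}}$ becomes, component-wise,
$$\bigl(\mathcal{L}_V+(\lambda-\delta_{\mathbf{j}}^{\Lambda})\bigr)u_{\mathbf{j}}=0$$
in the sense of germs near $\Lambda$.

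The heart of the argument is to apply Theorem~\ref{t:local-form-scalar-critical-point} (for critical points) or Theorem~\ref{t:local-form-scalar-periodic-orbit} (for closed orbits) to each nonzero $u_{\mathbf{j}}$. Those theorems force $\lambda-\delta_{\mathbf{j}}^{\Lambda}\in\mathcal{R}_0(\Lambda)$ and produce a finite expansion
$$u_{\mathbf{j}}=\sum_{\alpha:\lambda_{\alpha}^{\Lambda}=\lambda-\delta_{\mathbf{j}}^{\Lambda}} c_{\alpha,\mathbf{j}}\,u_{\alpha}^{\Lambda}$$
in the linearizing chart. Putting things together, on a possibly smaller neighborhood of $\Lambda$ where $\tau_{\Lambda}\equiv 1$, one obtains
$$\tilde{\mathbf{u}}=\sum_{(\alpha,\mathbf{j}):\,\lambda_{\alpha}^{\Lambda}+\delta_{\mathbf{j}}^{\Lambda}=\lambda} c_{\alpha,\mathbf{j}}\,\tilde{\mathbf{u}}_{\alpha,\mathbf{j}}^{\Lambda},$$
with $\tilde{\mathbf{u}}_{\alpha,\mathbf{j}}^{\Lambda}$ the compactly supported germs defined in~\eqref{e:resonant-state-local-form-vector}.

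Now I would define the global extension by applying the spectral projectors term by term,
$$\mathbf{u}\;:=\;\sum_{(\alpha,\mathbf{j}):\,\lambda_{\alpha}^{\Lambda}+\delta_{\mathbf{j}}^{\Lambda}=\lambda} c_{\alpha,\mathbf{j}}\,\mathbf{u}_{\alpha,\mathbf{j}}^{\Lambda,k},$$
where $\mathbf{u}_{\alpha,\mathbf{j}}^{\Lambda,k}=\pi_{\alpha,\mathbf{j}}^{\Lambda,k}(\tilde{\mathbf{u}}_{\alpha,\mathbf{j}}^{\Lambda})$ is the global resonant state from~\eqref{e:global-resonant-state}. Part~(1) of Proposition~\ref{p:resonant-state} guarantees that each summand coincides with $\tilde{\mathbf{u}}_{\alpha,\mathbf{j}}^{\Lambda}$ in a neighborhood of $\Lambda$, so by linearity $\mathbf{u}=\tilde{\mathbf{u}}$ near $\Lambda$. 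The generalized eigenvalue identity~\eqref{e:generalized- eigenvalue-equation-resonant-state}, applied to each $\mathbf{u}_{\alpha,\mathbf{j}}^{\Lambda,k}$ at its common eigenvalue $\lambda_{\alpha}^{\Lambda}+\delta_{\mathbf{j}}^{\Lambda}=\lambda$, yields $(\mathcal{L}_{V,\nabla}+\lambda)^p\mathbf{u}=0$ upon taking $p$ to be the maximum of the finitely many orders appearing in the sum. Finally, on $M\setminus\partial W^u(\Lambda)$ the true eigenvalue equation holds by Remark~\ref{r:jordan}: on the open set $W^u(\Lambda)$ one can compare with the local eigenmode (which satisfies the first-order equation by~\eqref{e:eigenvalue-equation-resonant-state}) and use the flow invariance to propagate away from $\Lambda$, while on $M\setminus\overline{W^u(\Lambda)}$ the current $\mathbf{u}$ vanishes by part~(2) of Proposition~\ref{p:resonant-state}.

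The main obstacle I expect is checking that the wavefront and support hypotheses are indeed precisely those required by Theorems~\ref{t:local-form-scalar-critical-point} and~\ref{t:local-form-scalar-periodic-orbit}. For a fixed point this is transparent since $W^{uu}(x)=W^u(\Lambda)$, but for a closed orbit one needs to reconcile $N^*(W^u(\Lambda))$ in the statement with the union $\bigcup_{x\in\Lambda}N^*(W^{uu}(x))$ used in~\eqref{e:constraint-wavefront}; the identity $W^u(\Lambda)=\bigcup_{x\in\Lambda}W^{uu}(x)$ together with transversality should bridge the two, but some care is needed to verify the wavefront inclusion translates correctly through the scalar decomposition. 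Everything else is bookkeeping organized by the reduction of paragraph~\ref{ss:reduction-scalar}.
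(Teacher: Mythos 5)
Your proposal is correct and follows essentially the same route as the paper, which presents Theorem~\ref{t:epstein-glaser} precisely as a combination of the local-form results (Theorems~\ref{t:local-form-scalar-critical-point} and~\ref{t:local-form-scalar-periodic-orbit}, after the frame reduction of paragraph~\ref{ss:reduction-scalar}) with the spectral-projector construction and Proposition~\ref{p:resonant-state}. The point you flag about the wavefront hypotheses is not a real obstacle: since $W^{uu}(x)\subset W^u(\Lambda)$, one has $N^*(W^u(\Lambda))\cap T^*U_\Lambda\subset\bigcup_{x\in\Lambda}N^*(W^{uu}(x))$, so the hypothesis $WF(\tilde{\mathbf{u}})\subset N^*(W^u(\Lambda))$ implies the constraint~\eqref{e:constraint-wavefront}, and taking components in a smooth frame does not enlarge the wavefront set.
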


\subsubsection{Relation to Ruelle-Sullivan currents}\label{sss:Ruelle-Sullivan}
In~\cite{RuSu75}, Ruelle and Sullivan constructed natural families of currents associated with Axiom A diffeomorphisms. More precisely, given a basic set 
$\Lambda$ of the diffeomorphism, they constructed two invariant De Rham currents, one $T_u$ associated with $W^u(\Lambda)$ and the other one $T_s$ 
with $W^s(\Lambda)$. The currents are \emph{not globally defined} on $M$. They are rather defined on 
$W^u(\Lambda)\cup \left(M-\overline{W^u(\Lambda)}\right)$ and their support are contained in $W^u(\Lambda)$. The difficulty for turning these currents 
into global objects is that one would have to analyse their mass near the ``boundary'' $\overline{W^u(\Lambda)}-W^u(\Lambda)$ of $W^u(\Lambda)$. In the case of diffeomorphisms derived from 
certain Morse-Smale gradient flows, this can be deduced from results of Laudenbach who analyzed the structure of $\overline{W^u(\Lambda)}$ 
as a stratified manifold~\cite{Lau92} -- see also~\cite{HaLa00, HaLa01, dang2016spectral}. From the perspective of ergodic theory, a 
crucial property of these currents is that they can be paired together and that $T_u\wedge T_s$ is an invariant probability measure 
carried by the basic set $\Lambda$ and maximizing a certain variational principle. 

Here, we deal with Morse-Smale flows which are the simplest example of Axiom A flows and whose basic sets are exactly the critical elements (either fixed 
points or closed orbits). We have constructed \emph{globally defined} 
twisted De Rham currents $\mathbf{u}^{\Lambda,k}_{\alpha,\mathbf{j}}$ which are supported by $\overline{W^u(\Lambda)}$. From their spectral definition, 
they also satisfy certain invariance relation related to the induced flow on $\ml{E}$ and each of them has an associated dual current 
$\hat{\mathbf{u}}^{\Lambda,k}_{\alpha,\mathbf{j}}$ which is supported by $\overline{W^s(\Lambda)}$ and whose local expression can be computed 
in local coordinates. Besides the fact that they are globally defined, 
we also emphasize that compared with~\cite{RuSu75}, we have infinitely many currents associated with each basic set $\Lambda$. Finally, 
for any choice of indices, we have that $\mathbf{u}^{\Lambda,k}_{\alpha,\mathbf{j}}\wedge\hat{\mathbf{u}}^{\Lambda,k}_{\alpha,\mathbf{j}}$ 
is equal to the Dirac measure at $\Lambda$ in the case of fixed points while, in the case of closed orbits, it is equal to the Lebesgue 
measure $\text{Leb}_{\Lambda}$ along $\Lambda$ with total mass 
$\ml{P}_{\Lambda}$. Hence, as for Ruelle-Sullivan currents, we recover a positive measure carried by $\Lambda$ and which is invariant under the 
flow (such a measure is unique in our case). To summarize, our \emph{infinite families of twisted currents} can be understood as generalizations of the Ruelle-Sullivan currents 
in the particular case of Morse-Smale flows satisfying proper nonresonance assumptions.

\subsection{Statement of the main Theorem}

We can now state the main result of the article which gives a complete description of the Pollicott-Ruelle spectrum:

\begin{theo}[Main Theorem]\label{t:maintheo-full} Let $V$ be a Morse-Smale vector field which is $\ml{C}^{\infty}$-diagonalizable. Let $(\ml{E},\nabla)$ be a vector bundle of rank $N$ endowed 
with a flat connection. We suppose that the Morse-Smale vector fields we consider have diagonalizable monodromy matrices for the 
 parallel transport around its closed orbits. 
 
 Then, all the elements of $\ml{F}^{(k)}(V,\nabla,\ml{E})$ are linearly independent and any resonant state of $-\ml{L}_{V,\nabla}^{(k)}$ is a linear combination of elements 
 from $\ml{F}^{(k)}(V,\nabla,\ml{E})$.
 
\end{theo}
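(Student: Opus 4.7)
\medskip

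\noindent\textbf{Proof plan.} The theorem has two halves, linear independence and spanning, and both are driven by an induction along the Smale order from Theorem~\ref{t:smale}. For linear independence I would argue as follows. Suppose $\sum_{\Lambda,\alpha,\mathbf{j}} c_{\Lambda,\alpha,\mathbf{j}}\,\mathbf{u}_{\alpha,\mathbf{j}}^{\Lambda,k}=0$ and pick $\Lambda_0$ that is maximal for $\leqq$ among the $\Lambda$'s appearing with at least one nonzero coefficient. By Proposition~\ref{p:resonant-state}(2) each $\mathbf{u}_{\alpha,\mathbf{j}}^{\Lambda,k}$ is supported on $\overline{W^u(\Lambda)}$, and by maximality $\Lambda_0\notin\overline{W^u(\Lambda)}$ for every other $\Lambda$ appearing in the sum. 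Hence on a small enough neighborhood $U_{\Lambda_0}$ only the terms with $\Lambda=\Lambda_0$ contribute, and by Proposition~\ref{p:resonant-state}(1) they coincide with $\tilde{\mathbf{u}}_{\alpha,\mathbf{j}}^{\Lambda_0}=\tau_{\Lambda_0}\,u_{\alpha}^{\Lambda_0}\,\mathbf{f}_{\mathbf{j}}^{\Lambda_0,k}$. These local models are manifestly linearly independent because the $\mathbf{f}_{\mathbf{j}}^{\Lambda_0,k}$ form a moving frame and the $u_{\alpha}^{\Lambda_0}$ combine pairwise-distinct Dirac derivatives in the stable directions with pairwise-distinct monomials (respectively Fourier modes, in the closed-orbit case) in the unstable directions, so all $c_{\Lambda_0,\alpha,\mathbf{j}}$ vanish; an obvious induction then finishes this half.

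\smallskip

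\noindent\emph{Spanning.} Take any Pollicott--Ruelle resonant state $\mathbf{u}\in\ml{H}_k^m(M,\ml{E})$ with $(\ml{L}_{V,\nabla}^{(k)}+\lambda)^p\mathbf{u}=0$. I would induct from maximum to minimum in the Smale order on the critical elements. Pick a critical element $\Lambda$ that is maximal among the ones where the current residual $\mathbf{u}$ has not yet been analyzed. In a small neighborhood $U_\Lambda$, the partition of $M$ into unstable manifolds combined with maximality of $\Lambda$ shows that the residual is supported in $W^u(\Lambda)\cap U_\Lambda$, so the constraint~\eqref{e:constraint-support} holds; the wavefront condition~\eqref{e:constraint-wavefront} then follows from the microlocal description of $\ml{H}_k^m$ recalled in paragraph~\ref{ss:summary}. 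Reducing to scalar components in the frame $\mathbf{f}_{\mathbf{j}}^{\Lambda,k}\otimes\mathbf{c}_j^\Lambda$ via~\eqref{e:reduction-scalar-case} and applying Theorem~\ref{t:local-form-scalar-critical-point} (fixed points) or Theorem~\ref{t:local-form-scalar-periodic-orbit} (closed orbits) to each component—extended to the generalized equation $(\ml{L}_V+\lambda)^p u=0$ by allowing higher-order Laplace-transform poles in the argument of those theorems—one obtains an expansion $\mathbf{u}\!\rest_{U_\Lambda}=\sum c_{\alpha,\mathbf{j}}\,\tilde{\mathbf{u}}_{\alpha,\mathbf{j}}^{\Lambda}$. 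Set $\mathbf{u}':=\mathbf{u}-\sum c_{\alpha,\mathbf{j}}\,\mathbf{u}_{\alpha,\mathbf{j}}^{\Lambda,k}$; by Proposition~\ref{p:resonant-state}(1) $\mathbf{u}'$ vanishes on $U_\Lambda$, yet still lies in $\ml{H}_k^m$ and satisfies a generalized eigenequation of the same order, so the induction continues on the smaller poset. After finitely many steps the residual vanishes on a neighborhood of every critical element, hence everywhere by dynamical propagation using $\Phi_k^{-t*}$ and the fact that every orbit accumulates on the nonwandering set; so the residual is zero and $\mathbf{u}$ is expressed as a linear combination of elements of $\ml{F}^{(k)}(V,\nabla,\ml{E})$.

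\smallskip

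\noindent\emph{Main obstacle.} The delicate part is the verification, at each inductive step, that the residual indeed satisfies the support condition~\eqref{e:constraint-support} near the current maximal element. This requires more than just that we have subtracted the states attached to strictly larger critical elements: one must rule out mass accumulating on the pieces $W^u(\Lambda')\cap U_\Lambda$ for $\Lambda'>\Lambda$ that lies \emph{outside} the neighborhoods $U_{\Lambda'}$ already handled. The control comes from combining Smale's closure theorem (the closure of every $W^u(\Lambda')$ is stratified by the smaller $W^u$'s) with the dynamical invariance of $\mathbf{u}'$ under $\Phi_k^{-t*}$, which propagates the vanishing on $U_{\Lambda'}$ along the flow and sweeps $W^u(\Lambda')$. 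A secondary technical point is to extend Theorems~\ref{t:local-form-scalar-critical-point}--\ref{t:local-form-scalar-periodic-orbit} from the simple eigenequation to the generalized one; this is expected to be combinatorial but routine, replacing the simple pole analysis of their proofs by a pole-of-order-$p$ analysis at $\lambda$.
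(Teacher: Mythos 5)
Your proposal follows essentially the same route as the paper: linear independence from the local identification of $\mathbf{u}_{\alpha,\mathbf{j}}^{\Lambda,k}$ with $\tilde{\mathbf{u}}_{\alpha,\mathbf{j}}^{\Lambda}$ near a maximal critical element (Proposition~\ref{p:resonant-state}), and spanning by downward induction along the Smale quiver, using propagation of vanishing along the flow plus the Smale stratification to get the support condition, the anisotropic-space regularity for the wavefront condition, the local-form Theorems~\ref{t:local-form-scalar-critical-point}--\ref{t:local-form-scalar-periodic-orbit}, and subtraction of the globally defined states — which is exactly the content of Lemmas~\ref{l:propagationzero}, \ref{l:support-Lemma}, \ref{l:jordan} and Proposition~\ref{p:generation}. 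Your only superfluous concern is the ``secondary technical point'': the local-form theorems are already stated and proved for the generalized equation $(\ml{L}_V+\lambda)^p u=0$ (equation~\eqref{e:eigenvalue-scalar} allows any $p\geq 1$), and they even conclude that the order-one equation holds near the critical element, which is precisely how the paper handles the Jordan-block issue.
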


\begin{rema}
 Recall that the first part of the Theorem is a consequence of Proposition~\ref{p:resonant-state} as was already 
 noticed in Remark~\ref{r:linearly-independent}. Hence, it remains to prove the ``generation part'' of the Theorem 
 and it will be achieved in Proposition~\ref{p:generation}.
\end{rema}

We will explain in paragraph~\ref{ss:imaginary-axis} how this Theorem implies the main results stated in the introduction. Let us 
briefly comment the assumptions we made on the vector field and the flat connection. Concerning the vector field, we already mentionned that 
Morse-Smale flows forms an open set of all smooth vector fields on $M$. Moreover, in dimension $2$, such flows are dense among all smooth vector 
fields thanks to a classical result of Peixoto~\cite{Pe62}. In higher dimension, Palis proved that they form an open set~\cite{Pa68}. Here, we require 
in addition that these flows are smoothly diagonalizable which means that 
we can find $\ml{C}^{\infty}$ charts where the flow is linearized with a block-diagonal form. Thanks to the Sternberg-Chen Theorem~\cite{chen1963equivalence,nelson2015topics,WWL08}, 
this can be guaranteed by imposing some nonresonance assumptions on the Lyapunov exponents which are generic among Morse-Smale flows. It is natural to ask if lower 
regularity assumptions would allow to conclude. The main result from~\cite{dangrivieremorsesmale1} 
was for instance valid for $\ml{C}^{1}$-linearizing assumptions, 
i.e. we have a discrete Pollicott-Ruelle spectrum on $\IC$ as soon as the vector field can be linearized in $C^1$ charts. Here, requiring $\ml{C}^{\infty}$ allows us to give a complete description 
of this spectrum. If we ask for less regularity (say $\ml{C}^k$ with $k\geq 1$), it seems to us that we could get a complete 
description of the spectrum in the half-plane $\text{Re}(z)>-C_k$ (for some $C_k>0$ depending on $k$ and on the vector field). This 
should probably be worked out by similar techniques but at the expense of a slightly more involved 
analysis (due to the low regularity) which would be beyond the scope of the present article. 
The assumption on the diagonalization of the monodromy matrix for the parallel transport around closed orbits makes also our analysis easier 
and it could probably be removed up to some extra combinatorial work. Note that our assumption is satisfied whenever 
the connection preserves an hermitian structure which is a standard hypothesis in Hodge theory~\cite{demailly1996intro}.

To prove the main result, we proceed in several steps. First, we 
start with a simple propagation Lemma which will be used at several stages of our proof 
to control supports of generalized eigencurrents.

\subsubsection{Propagation Lemmas to control supports.}

In our proof, we intend to use a simple propagation Lemma that we will now prove.
\begin{lemm}[Propagation Lemma for generalized eigenstates]\label{l:propagationzero}
Let $0\leq k\leq n$, let $z$ in $\IC$ and let $\mathbf{u}\in \mathcal{D}^{\prime,k}(M,\mathcal{E})$ be a solution 
of $\left(\ml{L}_{V,\nabla}^{(k)}+z\right)^p\mathbf{u}=0$ for some $p\geq 1$. If $\mathbf{u}|_U=0$ where $U\subset M$ is some open 
subset then $\mathbf{u}$ vanishes on the larger open subset $\bigcup_{t\in\mathbb{R}} \varphi^t(U)$. 
\end{lemm}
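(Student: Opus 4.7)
My plan is to derive an explicit closed-form expression for $\Phi_k^{-t*}\mathbf{u}$ and then read off the desired support property from it. The starting point is that $\mathcal{L}_{V,\nabla}^{(k)}$ generates the pullback group $\Phi_k^{-t*}$ and commutes with it. Consequently, if I set $\mathbf{v}(t):=\Phi_k^{-t*}\mathbf{u}$, then $\mathbf{v}$ solves the linear (Cauchy) evolution $\partial_t\mathbf{v}=-\mathcal{L}_{V,\nabla}^{(k)}\mathbf{v}$ with $\mathbf{v}(0)=\mathbf{u}$, and $(\mathcal{L}_{V,\nabla}^{(k)}+z)^p$ annihilates $\mathbf{v}(t)$ for every $t$, since it commutes through the pullback. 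Writing $-\mathcal{L}_{V,\nabla}^{(k)}=z+\bigl(-\mathcal{L}_{V,\nabla}^{(k)}-z\bigr)$ and using the fact that the second summand is nilpotent of order at most $p$ on the finite dimensional subspace spanned by $\mathbf{u},(\mathcal{L}_{V,\nabla}^{(k)}+z)\mathbf{u},\dots,(\mathcal{L}_{V,\nabla}^{(k)}+z)^{p-1}\mathbf{u}$, the formal exponential terminates and I expect to obtain
\begin{equation*}
\Phi_k^{-t*}\mathbf{u}\;=\;e^{tz}\sum_{j=0}^{p-1}\frac{(-t)^j}{j!}\bigl(\mathcal{L}_{V,\nabla}^{(k)}+z\bigr)^j\mathbf{u},\qquad t\in\IR.
\end{equation*}
To make this rigorous in $\mathcal{D}^{\prime,k}(M,\mathcal{E})$, I will simply check that both sides depend smoothly on $t$ as currents, satisfy the same ODE $\partial_t(\cdot)=-\mathcal{L}_{V,\nabla}^{(k)}(\cdot)$, and share the initial datum $\mathbf{u}$; uniqueness of solutions to this first order linear Cauchy problem (which can be transferred to test forms by duality) then forces equality.

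Once this identity is in hand, the conclusion is immediate. Because $\mathcal{L}_{V,\nabla}^{(k)}$ is a \emph{local} first order differential operator, the hypothesis $\mathbf{u}|_U=0$ implies $(\mathcal{L}_{V,\nabla}^{(k)}+z)^j\mathbf{u}\big|_U=0$ for every $j\geq 0$, so each summand in the formula vanishes on $U$, and hence $\Phi_k^{-t*}\mathbf{u}\big|_U=0$ for every $t\in\IR$.

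It remains to convert this into a statement about the support of $\mathbf{u}$ itself. Since $\varphi^t$ is a diffeomorphism of $M$ and pullback commutes with support, $\operatorname{supp}\bigl(\Phi_k^{-t*}\mathbf{u}\bigr)=\varphi^{t}\bigl(\operatorname{supp}\mathbf{u}\bigr)$; thus $\Phi_k^{-t*}\mathbf{u}\big|_U=0$ is equivalent to $\operatorname{supp}\mathbf{u}\cap\varphi^{-t}(U)=\emptyset$, i.e. $\mathbf{u}$ vanishes on $\varphi^{-t}(U)$. Letting $t$ range over $\IR$ and reindexing $s=-t$ yields vanishing of $\mathbf{u}$ on $\bigcup_{s\in\IR}\varphi^s(U)$, as required. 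The only delicate point I anticipate is the rigorous justification of the nilpotent-exponential formula at the level of currents rather than smooth sections; this is the step where I will have to be careful, but it reduces to a standard density / duality argument using the continuity of $t\mapsto\Phi_k^{-t*}$ on $\mathcal{D}^{\prime,k}(M,\mathcal{E})$.
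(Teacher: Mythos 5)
Your proposal is correct and is essentially the paper's own argument: the paper also derives the terminating-exponential formula (in vectorized form, writing the cyclic family $\bigl(\mathbf{u},(\ml{L}_{V,\nabla}^{(k)}+z)\mathbf{u},\dots,(\ml{L}_{V,\nabla}^{(k)}+z)^{p-1}\mathbf{u}\bigr)$ as a solution of a linear ODE with matrix $-z\,\mathrm{Id}+N$ and solving it as $e^{-tz}e^{tN}$), and then concludes by testing against forms supported in $\varphi^{t_0}(U)$, which is the duality version of your support-transport step. The only difference is presentational; your justification via locality of $\ml{L}_{V,\nabla}^{(k)}$ and uniqueness for the transport equation on currents matches the paper's level of rigor.
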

\begin{proof}
We choose $p$ to be the smallest integer so that
$\left(\ml{L}_{V,\nabla}^{(k)}+z\right)^p\mathbf{u}=0$. 
We shall establish the result by a duality argument. First, we note that, for every $\psi$ in $\Omega^{n-k}(M,\ml{E}')$, 
$$\frac{d}{dt}\left\la \Phi_k^{t*}\mathbf{u},\psi \right\ra=\left\la\Phi_k^{t*}\ml{L}_{V,\nabla}^{(k)} \mathbf{u},\psi\right\ra.$$

Hence a simple calculation yields~:
\begin{eqnarray*}
\frac{d}{dt}\left(\begin{array}{c}
\Phi_k^{t*}\mathbf{u}\\
\left(\ml{L}_{V,\nabla}^{(k)}+z\right)\Phi_k^{t*}\mathbf{u}\\
\dots\\
\left(\ml{L}_{V,\nabla}^{(k)}+z\right)^{p-1}\Phi_k^{t*}\mathbf{u}
\end{array} \right)=
\left(\begin{array}{cccc}
-z&1&0&\dots\\
0&-z&1&\dots\\
&&\dots &\\
0&\dots&0&-z
\end{array} \right)
\left(\begin{array}{c}
\Phi_k^{t*}\mathbf{u}\\
\left(\ml{L}_{V,\nabla}^{(k)}+z\right)\Phi_k^{t*}\mathbf{u}\\
\dots\\
\left(\ml{L}_{V,\nabla}^{(k)}+z\right)^{p-1}\Phi_k^{t*}\mathbf{u}
\end{array} \right)
\end{eqnarray*}

Hence, solving the ODE yields 
$$\Phi_k^{t*}\left(\begin{array}{c}
\mathbf{u}\\
\left(\ml{L}_{V,\nabla}^{(k)}+z\right)\mathbf{u}\\
\dots\\
\left(\ml{L}_{V,\nabla}^{(k)}+z\right)^{p-1}\mathbf{u}
\end{array} \right)=e^{-tz}e^{tN}\left(\begin{array}{c}
\mathbf{u}\\
\left(\ml{L}_{V,\nabla}^{(k)}+z\right)\mathbf{u}\\
\dots\\
\left(\ml{L}_{V,\nabla}^{(k)}+z\right)^{p-1}\mathbf{u}
\end{array} \right),\ \forall t\in\mathbb{R},$$
where
$$N:=\left(\begin{array}{cccc}
0&1&0&\dots\\
0&0&1&\dots\\
&&\dots &\\
0&\dots&0&0
\end{array} \right)$$
Choose some arbitrary element $x\in \bigcup_{t\in\mathbb{R}} \varphi^t(U)$. It means that there is some $t_0\in \mathbb{R}$ such that
$x\in \varphi^{t_0}(U)$ which is an open subset of $M$. Let $\psi$ be any vector valued test form in $\Omega^{n-k}(\varphi^{t_0}(U),\ml{E}')$. We have the identity
\begin{eqnarray*}
\langle \left(\begin{array}{c}
\mathbf{u}\\
\left(\ml{L}_{V,\nabla}^{(k)}+z\right)\mathbf{u}\\
\dots\\
\left(\ml{L}_{V,\nabla}^{(k)}+z\right)^{p-1}\mathbf{u}
\end{array} \right),\psi \rangle&=&\langle  (\Phi_k^{t_0})^*\left(\begin{array}{c}
\mathbf{u}\\
\left(\ml{L}_{V,\nabla}^{(k)}+z\right)\mathbf{u}\\
\dots\\
\left(\ml{L}_{V,\nabla}^{(k)}+z\right)^{p-1}\mathbf{u}
\end{array} \right),\Phi_{k,\dagger}^{t_0*}\psi \rangle\\
&=&\langle  e^{-t_0z}e^{tN}\left(\begin{array}{c}
\mathbf{u}\\
\left(\ml{L}_{V,\nabla}^{(k)}+z\right)\mathbf{u}\\
\dots\\
\left(\ml{L}_{V,\nabla}^{(k)}+z\right)^{p-1}\mathbf{u}
\end{array} \right),\Phi_{k,\dagger}^{t_0*}\psi \rangle=0
\end{eqnarray*}
since $\text{supp }\left(\Phi_{k,\dagger}^{t_0*}\psi\right)\subset U$ and $(\mathbf{u},(\ml{L}_{V,\nabla}^{(k)}+z)\mathbf{u},\dots
(\ml{L}_{V,\nabla}^{(k)}+z)^{p-1}\mathbf{u})|_U=0$  by definition.
Therefore $\mathbf{u}=0$ on $\varphi^{t_0}(U)$ in particular $\mathbf{u}=0$ in some neighborhood of
the given element $x\in \bigcup_{t\in\mathbb{R}} \varphi^t(U)$.
\end{proof}


\subsection{The inductive proof following the Smale quiver.}\label{ss:inductiveproof}

Following Smale's Theorem~\ref{t:smale}, we introduce an oriented graph $D$ whose $K$ vertices
are labelled by the the unstable manifolds $(W^u(\Lambda_i))_{i=1}^K$.
Recall that two vertices $W^u(\Lambda_i),W^u(\Lambda_j)$ are connected by an oriented path
starting at $W^u(\Lambda_j)$ and ending at 
$W^u(\Lambda_i)$ iff $W^u(\Lambda_j)\leqq W^u(\Lambda_i)$
i.e. $W^u(\Lambda_j)\subset\overline{W^u(\Lambda_i)}$.
This means $\Lambda_i$ is a larger stratum than
$\Lambda_j$ and $\Lambda_j$ is a stratum of the
topological boundary of $\Lambda_i$.
Let $\mathbf{u}\in \mathcal{H}^{m}_k(M,\ml{E})$
be a generalized eigencurrent which solves the equation
\begin{equation}
(\mathcal{L}^{(k)}_{V,\nabla}+z)^p\mathbf{u}=0.
\end{equation}
Our strategy is to analyze $\mathbf{u}$ near a 
\textbf{maximal critical element} $(\Lambda_j)_{j=1}^K$ 
in the Smale quiver $D$  such that $\mathbf{u}\neq 0$ near $\Lambda_j$.

\subsubsection{Eigencurrents near critical elements.}

Assume $\Lambda_j$ is a maximal element in Smale causality diagram such that $u$ does not vanish near $\Lambda_j$.
\begin{lemm}[support of generalized eigenfunctions]\label{l:support-Lemma}
Let $\mathbf{u}\in \mathcal{D}^{\prime,k}(M,\mathcal{E})$ be some
generalized eigencurrent of $-\ml{L}_{V,\nabla}^{(k)}$ acting on $\ml{H}_k^m(M)$.
If $\mathbf{u}$ vanishes in some neighborhood
of all $\Lambda_i\geqq \Lambda_j$ (with $\Lambda_i\neq\Lambda_j$) for the Smale causality relation and if the germ of distribution 
$\mathbf{u}\neq 0$ near $\Lambda_j$, then $\mathbf{u}$ is supported
on the germ of unstable manifold $W^u(\Lambda_j)$ near $\Lambda_j$.
\end{lemm}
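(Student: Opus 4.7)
The plan is to combine the Smale causality decomposition near $\Lambda_j$ with the propagation Lemma~\ref{l:propagationzero} applied to each strictly larger critical element. The starting point is a purely dynamical observation: if $U$ is a sufficiently small neighborhood of $\Lambda_j$ and $y \in U$ belongs to some $W^u(\Lambda_i)$, then $\Lambda_j \in \overline{W^u(\Lambda_i)}$, so Theorem~\ref{t:smale} gives $W^u(\Lambda_j) \subset \overline{W^u(\Lambda_i)}$, i.e.\ $\Lambda_j \leqq \Lambda_i$. Since the unstable manifolds partition $M$, this yields, for $U$ small enough, the disjoint decomposition
\begin{equation*}
U \;=\; \bigl(W^u(\Lambda_j) \cap U\bigr) \;\sqcup\; \bigsqcup_{\Lambda_i \gneqq \Lambda_j} \bigl(W^u(\Lambda_i) \cap U\bigr).
\end{equation*}

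Next I would fix such a $U$ and, for each $\Lambda_i \gneqq \Lambda_j$, use the hypothesis to pick an open neighborhood $V_i$ of $\Lambda_i$ on which $\mathbf{u}$ vanishes. Applying Lemma~\ref{l:propagationzero} shows that $\mathbf{u}$ vanishes on the open saturated set $\Omega_i := \bigcup_{t \in \IR} \varphi^t(V_i)$. The crucial feature is that $W^u(\Lambda_i) \subset \Omega_i$: for any $x \in W^u(\Lambda_i)$ one has $\varphi^{-t}(x) \to \Lambda_i$ as $t \to +\infty$, hence $\varphi^{-t}(x) \in V_i$ for $t$ large, i.e.\ $x \in \varphi^t(V_i) \subset \Omega_i$. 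Setting $\Omega := \bigcup_{\Lambda_i \gneqq \Lambda_j} \Omega_i$, the decomposition of the previous paragraph implies $U \setminus W^u(\Lambda_j) \subset \Omega$.

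Since $\Omega$ is open and $\mathbf{u}$ vanishes on $\Omega$, every point of $U \setminus W^u(\Lambda_j) \subset \Omega$ lies outside $\operatorname{supp}(\mathbf{u})$. Therefore $\operatorname{supp}(\mathbf{u}) \cap U \subset W^u(\Lambda_j) \cap U$, which is exactly the statement that $\mathbf{u}$ is supported on the germ of $W^u(\Lambda_j)$ at $\Lambda_j$.

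The main obstacle, and the only step that genuinely requires a dynamical input, is the first paragraph: translating the local partition near $\Lambda_j$ into Smale's partial order. Once that decomposition is available, the rest is a routine combination of Lemma~\ref{l:propagationzero} with the partition into unstable manifolds. The one mild technical point is that the shrinking of $U$ must work uniformly for all critical elements $\Lambda_i$ with $\Lambda_j \notin \overline{W^u(\Lambda_i)}$, which is immediate because there are only finitely many $\Lambda_i$.
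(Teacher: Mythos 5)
Your proof is correct, and it reaches the conclusion by a somewhat different route than the paper, although both arguments run on Lemma~\ref{l:propagationzero}. The paper's dynamical input is the backward-invariant neighborhood of $\bigcup_{\Lambda_i\geqq\Lambda_j}W^s(\Lambda_i)$ constructed in the companion paper (Remark~4.5 in the proof of Theorem~4.4 of~\cite{dangrivieremorsesmale1}): for a point $x$ near $\Lambda_j$ but off $W^u(\Lambda_j)$, the backward orbit must escape a neighborhood of $\Lambda_j$ and accumulate on some $\Lambda_i$ with $W^u(\Lambda_i)\geqq W^u(\Lambda_j)$, so it reaches in finite time $T_i$ a neighborhood $U_i$ where $\mathbf{u}$ vanishes, and continuity of $\varphi^{-T_i}$ plus the propagation lemma then kill $\mathbf{u}$ on a small neighborhood of $x$. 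You replace this pointwise escape-time argument by a static decomposition: Smale's Theorem~\ref{t:smale} (closures of unstable manifolds are unions of unstable manifolds, and $\leqq$ is a partial order), the partition of $M$ by the $W^u(\Lambda_i)$, compactness of $\Lambda_j$ and finiteness of the critical elements show that a small enough $U$ meets only $W^u(\Lambda_j)$ and the $W^u(\Lambda_i)$ with $\Lambda_i\geqq\Lambda_j$, $\Lambda_i\neq\Lambda_j$; then, since every point of $W^u(\Lambda_i)$ enters $V_i$ under the backward flow, a single application of Lemma~\ref{l:propagationzero} to $V_i$ kills $\mathbf{u}$ on an open set containing all of $W^u(\Lambda_i)$. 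The two arguments encode the same fact --- near $\Lambda_j$, points off $W^u(\Lambda_j)$ lie on unstable manifolds of strictly larger critical elements --- but yours derives it from Smale's structure theorem as stated in this paper, so it avoids importing the invariant-neighborhood construction from~\cite{dangrivieremorsesmale1} and covers each $W^u(\Lambda_i)$ in one stroke, whereas the paper's version works pointwise and never needs the global inclusion $W^u(\Lambda_i)\subset\bigcup_{t}\varphi^t(V_i)$. Both are complete; your finiteness remark at the end correctly disposes of the only uniformity issue in shrinking $U$.
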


\begin{proof}
We fix $\Lambda_j$. We use Remark 4.5 from the proof of \cite[Theorem 4.4]{dangrivieremorsesmale1} which deals 
with the construction of an invariant neighborhood of\footnote{Here and after, to alleviate notations, we sometimes note $\Lambda_i\leqq\Lambda_j$ instead of 
$W^u(\Lambda_i)\leqq W^u(\Lambda_j)$.} $\bigcup_{\Lambda_i\geqq \Lambda_j}W^s(\Lambda_i)$
for the \textbf{backward flow} $\varphi^{-t},t\geqslant 0$. Then, 
for all $\varepsilon>0$, there exists some neighborhood $\mathcal{U}$
of $\bigcup_{\Lambda_i\geqq \Lambda_j}W^u(\Lambda_i)$ which has size less than $ \varepsilon$ in the sense
that $\forall x\in \mathcal{U}, \,\
\text{dist}(x,\bigcup_{\Lambda_i\geqq \Lambda_j}W^s(\Lambda_i))\leqslant \varepsilon$ and which is invariant by the backward flow
$$\forall t\geqslant 0, \varphi^{-t}(\mathcal{U})\subset \mathcal{U}.$$

Let us consider some neighborhood $U_j$ of $\Lambda_j$ such that $U_j\subset \mathcal{U}$ and let $x\in U_j\setminus W^u(\Lambda_j)$.
Then by definition the flowline $\varphi^{-t}(x)$ cannot converge to $\Lambda_j$ when $t\rightarrow +\infty$ otherwise it would be in $W^u(\Lambda_j)$. Hence it must 
escape the small neighborhood $V_j$ of $\Lambda_j$ in finite time
and reach in the past some critical element $W^u(\Lambda_i)\geqq W^u(\Lambda_j)$ in the sense that
$\text{dist}(\varphi^{-t}(x),\Lambda_i)\rightarrow 0$ when
$t\rightarrow +\infty$.
Since the germ of vector valued current
$\left(\mathbf{u},(\mathcal{L}^{(k)}_{V,\nabla}+z)\mathbf{u},\dots,(\mathcal{L}^{(k)}_{V,\nabla}+z)^{p-1}\mathbf{u}\right)$ 
vanishes near $\Lambda_i$, we know
that 
$$\left(\mathbf{u},(\mathcal{L}^{(k)}_{V,\nabla}+z)\mathbf{u},\dots,(\mathcal{L}^{(k)}_{V,\nabla}+z)^{p-1}\mathbf{u}\right)$$ 
vanishes in some neighborhood 
$U_i$ of $\Lambda_i$ and we may choose $V_i$ small enough so that
$U_i\subset \mathcal{U}$. Let us call $T_i$ the finite
time for $\varphi^{-t}(x)$ to reach the open set $U_i$.
By continuity of $y\in U_j \mapsto \varphi^{-T_i}(y)\in \mathcal{U}$ there is some small neighborhood $U_x$ of $x$ in $U_j$ such that
$\varphi^{-T_i}(U_x)\subset U_i$ and therefore
$u$ vanishes on $\varphi^{-T_i}(U_x)$, which concludes the proof thanks to Lemma~\ref{l:propagationzero}.

\end{proof}

\subsection{Geometric structure of Jordan blocks.}

The next Lemma shows that if we have a cyclic
family $\mathbf{u}_0=\mathbf{u},\mathbf{u}_1=(\mathcal{L}_{V,\nabla}+z_0)\mathbf{u}_0,\dots,\mathbf{u}_{p-1}=(\mathcal{L}_{V,\nabla}+z_0)^{p-1}\mathbf{u}_0$  
of generalized eigencurrents generated by $\mathbf{u}$ solution
of $(\mathcal{L}_V+z_0)^p\mathbf{u}=0$
then their distributional support are ordered in some way which is organized 
by the Smale causality relation. 
In particular, it implies that 
inside a nontrivial cyclic family, generalized eigencurrents
cannot have the same distributional supports.

\begin{lemm}\label{l:jordan}[Support of currents in Jordan block]
For any element $\mathbf{u}\in \mathcal{M}^{m}_k(M,\mathcal{E})$, we 
define the subset $$\operatorname{Crit}(\mathbf{u})=\{ \Lambda \text{ s.t. the germ of current $\mathbf{u}$ does not vanish at }\Lambda  \}\subset NW(\varphi)$$ and 
$\operatorname{Max}\left( \operatorname{Crit}(\mathbf{u})\right)$ as
all maximal elements in $\operatorname{Crit}(\mathbf{u})$ for the Smale causality relation.

Assume that $\mathbf{u}$ is a generalized eigencurrent for the eigenvalue $z_0$
and $p$ is the smallest integer so that
$(\mathcal{L}^{(k)}_{V,\nabla}+z_0)^p \mathbf{u}=0$. 
Set
$\mathbf{u}_0=\mathbf{u},\mathbf{u}_1=(\mathcal{L}^{(k)}_{V,\nabla}+z_0)\mathbf{u}_0,\dots,\mathbf{u}_{p-1}=(\mathcal{L}^{(k)}_{V,\nabla}+z_0)^{p-1}\mathbf{u}_0$ 
the corresponding cyclic family.

Then
\begin{eqnarray*}
\forall \Lambda^\prime\in \operatorname{Max}(\operatorname{Crit}(\mathbf{u}_{i+1})), \exists   \Lambda\neq\Lambda'\in \operatorname{Max}(\operatorname{Crit}(\mathbf{u}_i))
\text{ s.t. }
W^u(\Lambda)\geqq W^u(\Lambda^\prime). 
\end{eqnarray*}
\end{lemm}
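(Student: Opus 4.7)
The plan is to argue by contradiction. First I observe that $\ml{L}_{V,\nabla}^{(k)}$ is a first-order differential operator, so it is local: if $\mathbf{u}_i$ vanishes on an open set $U$, then $\mathbf{u}_{i+1}=(\ml{L}_{V,\nabla}^{(k)}+z_0)\mathbf{u}_i$ vanishes on $U$ as well. In particular $\operatorname{Crit}(\mathbf{u}_{i+1})\subseteq \operatorname{Crit}(\mathbf{u}_i)$, so the maximal element $\Lambda'\in \operatorname{Max}(\operatorname{Crit}(\mathbf{u}_{i+1}))$ belongs to $\operatorname{Crit}(\mathbf{u}_i)$.

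Next I split into two cases. If $\Lambda'$ is \emph{not} maximal in $\operatorname{Crit}(\mathbf{u}_i)$, there is some $\tilde\Lambda\in \operatorname{Crit}(\mathbf{u}_i)$ with $W^u(\tilde\Lambda)\gneqq W^u(\Lambda')$; choosing a maximal element $\Lambda\in \operatorname{Max}(\operatorname{Crit}(\mathbf{u}_i))$ dominating $\tilde\Lambda$ (which exists since the poset is finite) yields $\Lambda\neq\Lambda'$ with $W^u(\Lambda)\geqq W^u(\Lambda')$, as required.

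Now suppose $\Lambda'\in \operatorname{Max}(\operatorname{Crit}(\mathbf{u}_i))$ and assume for contradiction that no other $\Lambda\in \operatorname{Max}(\operatorname{Crit}(\mathbf{u}_i))$ satisfies $W^u(\Lambda)\geqq W^u(\Lambda')$. By the same maximal-element trick, this forces $\mathbf{u}_i$ to vanish near every critical element $\Lambda''$ with $W^u(\Lambda'')\gneqq W^u(\Lambda')$: indeed any such $\Lambda''$ is dominated by some element of $\operatorname{Max}(\operatorname{Crit}(\mathbf{u}_i))$, which by assumption must be $\Lambda'$ itself, contradicting $W^u(\Lambda'')\gneqq W^u(\Lambda')$. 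Hence $\Lambda''\notin\operatorname{Crit}(\mathbf{u}_i)$. The hypotheses of Lemma~\ref{l:support-Lemma} applied to $\mathbf{u}_i$ are therefore satisfied, which shows that the germ of $\mathbf{u}_i$ near $\Lambda'$ is supported on $W^u(\Lambda')$.

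To reach a contradiction I would apply the local-form theorems. By the discussion recalled in paragraph~\ref{ss:summary}, the fact that $\mathbf{u}_i\in\ml{H}^m_k(M,\ml{E})$ is supported on $W^u(\Lambda')$ near $\Lambda'$ implies that its wavefront set there is contained in the conormal of $W^u(\Lambda')$, so both constraints \eqref{e:constraint-support} and \eqref{e:constraint-wavefront} hold. Using the scalar reduction of paragraph~\ref{ss:reduction-scalar}, I expand $\mathbf{u}_i=\sum_{\mathbf{j}\in D_k} u_{i,\mathbf{j}}\,\mathbf{f}_{\mathbf{j}}^{\Lambda',k}$ in the diagonalizing moving frame; the operator $(\ml{L}_{V,\nabla}^{(k)}+z_0)^{p-i}$ acts diagonally (up to the shifts $\delta^{\Lambda'}_{\mathbf{j}}$) on this decomposition, so each scalar component $u_{i,\mathbf{j}}$ satisfies a scalar generalized eigenvalue equation of the form \eqref{e:eigenvalue-scalar} with eigenvalue $z_0+\delta^{\Lambda'}_{\mathbf{j}}$ and inherits the support and wavefront constraints. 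Theorems~\ref{t:local-form-scalar-critical-point} or \ref{t:local-form-scalar-periodic-orbit} (depending on whether $\Lambda'$ is a fixed point or a closed orbit) then provide an explicit expression for each $u_{i,\mathbf{j}}$ and, crucially, guarantee via their ``furthermore'' clauses that the first-order equation $(\ml{L}_V+z_0+\delta^{\Lambda'}_{\mathbf{j}})u_{i,\mathbf{j}}=0$ holds near $\Lambda'$. Summing over $\mathbf{j}$ gives $(\ml{L}_{V,\nabla}^{(k)}+z_0)\mathbf{u}_i=\mathbf{u}_{i+1}=0$ in a neighborhood of $\Lambda'$, contradicting $\Lambda'\in\operatorname{Crit}(\mathbf{u}_{i+1})$ and completing the proof.

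The main obstacle is the clean transfer of the scalar local-form theorems to the vector-valued, generalized-eigencurrent setting: I need the diagonalizing moving frame from Theorems~\ref{p:good-basis-bundle} and~\ref{t:good-basis-bundle} together with the reduction of~\ref{ss:reduction-scalar} to decouple the problem, and I need to confirm that the wavefront condition in~\eqref{e:constraint-wavefront} is indeed implied by supporthood on $W^u(\Lambda')$ via the anisotropic Sobolev regularity recalled in~\ref{ss:summary}. Once these two points are in place, the contradiction is immediate.
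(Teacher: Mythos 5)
Your argument is correct and follows essentially the same route as the paper: locality of $\ml{L}_{V,\nabla}^{(k)}$ gives $\operatorname{Crit}(\mathbf{u}_{i+1})\subseteq\operatorname{Crit}(\mathbf{u}_i)$, and when $\Lambda'$ is maximal for $\mathbf{u}_i$ you combine Lemma~\ref{l:support-Lemma}, the conormal wavefront bound coming from the anisotropic spaces, the scalar reduction of paragraph~\ref{ss:reduction-scalar} and the ``furthermore'' clauses of Theorems~\ref{t:local-form-scalar-critical-point} and~\ref{t:local-form-scalar-periodic-orbit} to conclude $\mathbf{u}_{i+1}=(\ml{L}_{V,\nabla}^{(k)}+z_0)\mathbf{u}_i=0$ near $\Lambda'$, exactly as in the paper's proof. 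The only slip is a sign in the shift: with the convention of~\eqref{e:reduction-scalar-case}, the scalar component $u_{i,\mathbf{j}}$ satisfies~\eqref{e:eigenvalue-scalar} with $\lambda=z_0-\delta_{\mathbf{j}}^{\Lambda'}$ rather than $z_0+\delta_{\mathbf{j}}^{\Lambda'}$, a harmless bookkeeping point.
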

\begin{proof} Suppose that $p>1$ (otherwise the Lemma is empty). We obviously have the chain of inclusions
$\text{supp}(\mathbf{u}_{p-1})\subset\dots\subset \text{supp}(\mathbf{u}_0)  $ for the supports of the currents $(\mathbf{u}_0,\dots,\mathbf{u}_{p-1})$.
We shall prove the following inequality of support~:
\begin{eqnarray*}
\forall \Lambda^\prime\in \operatorname{Max}(\operatorname{Crit}(\mathbf{u}_1)), \exists   \Lambda\neq\Lambda'\in 
\operatorname{Max}(\operatorname{Crit}(\mathbf{u}_0))
\text{ s.t. }
W^u(\Lambda)\geqq W^u( \Lambda^\prime). 
\end{eqnarray*}
We proceed by contradiction and we choose some $\Lambda^\prime\in \text{Max}(\text{Crit}(\mathbf{u}_1))$ which does not satisfy the above property. 
Then by the equation $(\mathcal{L}^{(k)}_{V,\nabla}+z_0)\mathbf{u}_0=\mathbf{u}_1$, 
we find that $\Lambda^\prime\subset \text{supp }(\mathbf{u}_0)$ and, from the contradiction assumption, we deduce that
$\Lambda'\in \text{Max}(\text{Crit}(\mathbf{u}_0))$. So far, we did not use the regularity assumptions we have on the 
solutions of the eigenvalue problem. Recall that these generalized eigenmodes are intrinsic~\cite[Th.~1.5]{faure2011upper} in 
the sense that they do not depend on the choice of the order function $m$ used to define the anisotropic Sobolev spaces $\ml{H}_k^m(M,\ml{E})$ 
in~\cite[Sect.~5]{dangrivieremorsesmale1}. Hence, we can choose order functions $m$ with arbitrary high order 
of Sobolev regularity. Both $\mathbf{u}_0$ and $\mathbf{u}_1$ are supported inside $W^u(\Lambda')$ near $\Lambda'$. Hence, from 
the definition of the Sobolev space in~\cite[Sect.~5]{dangrivieremorsesmale1} and by choosing arbitrary high order of Sobolev 
regularity in these spaces, we find that the wave front of $\mathbf{u}_0$ and $\mathbf{u}_1$ are contained in the conormal of 
$W^u(\Lambda')$ near $\Lambda'$. We can now apply Theorems \ref{t:local-form-scalar-critical-point} and \ref{t:local-form-scalar-periodic-orbit}
near $\Lambda'$ combined with paragraph~\ref{ss:reduction-scalar}. We deduce that both generalized eigencurrents $(\mathbf{u}_0,\mathbf{u}_1)$ 
have local forms
\begin{eqnarray}
\mathbf{u}_0=\sum_{\lambda_\alpha^\Lambda+\delta_{\mathbf{j}}^\Lambda=z_0} c_{\Lambda,\alpha,\mathbf{j}}\tilde{\mathbf{u}}_{\alpha,\mathbf{j}}^{\Lambda,k}\\
\mathbf{u}_1=\sum_{\lambda_\alpha^\Lambda+\delta_{\mathbf{j}}^\Lambda=z_0} d_{\Lambda,\alpha,\mathbf{j}}\tilde{\mathbf{u}}_{\alpha,\mathbf{j}}^{\Lambda,k}
\end{eqnarray}
where the sum runs over multi-indices 
$\alpha$ and $\mathbf{j}\in D_k$ so that $\lambda=\lambda_\alpha^\Lambda+\delta_{\mathbf{j}}^\Lambda$ 
(recall that $\alpha\in \mathbb{N}^n$ if $\Lambda$ is a critical point
or $\alpha\in \mathbb{N}^{n-1}\times \mathbb{Z}$ if $\Lambda$ is a periodic orbit). Moreover, 
$c_{\Lambda,\alpha,\mathbf{j}},d_{\Lambda,\alpha,\mathbf{j}}$ are complex numbers
andthe coefficients $(c_{\Lambda,\alpha,\mathbf{j}})_{\lambda_\alpha^\Lambda+\delta_{\mathbf{j}}^\Lambda=z_0}$
do not all vanish since the germ $\mathbf{u}_0\neq 0$ near $\Lambda$.
But from the definition of the currents $\mathbf{u}_{\alpha,\mathbf{j}}^{\Lambda,k}$, we find that locally near $\Lambda$,
$(\mathcal{L}_{V,\nabla}^{(k)}+z_0)\mathbf{u}_0=0 $ hence $\mathbf{u}_1=(\mathcal{L}^{(k)}_{V,\nabla}+z_0)\mathbf{u}_0=0$ near $\Lambda$ which contradicts 
the fact that the germ $\mathbf{u}_1$ does not vanish near $\Lambda$. This gives the expected contradiction. Repeating the argument, we can deduce the 
Lemma by induction over $i$.
\end{proof}

Finally the next proposition concludes the proof of Theorem~\ref{t:maintheo-full}~:
\begin{prop}[Generation]\label{p:generation}
Assume that $\mathbf{u}\in\ml{H}_k^m(M,\ml{E})$ 
is a generalized eigencurrent for the eigenvalue $z_0$.
Then $\mathbf{u}$ is a linear combination of the eigencurrents
$\mathbf{u}_{\alpha,\mathbf{j}}^{\Lambda,k}$ for all triples
$(\Lambda,\alpha,\mathbf{j})$ such that
$\lambda_\alpha^\Lambda+\delta_{\mathbf{j}}^\Lambda=z_0$.
\end{prop}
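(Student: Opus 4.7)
The plan is to proceed by induction along the Smale quiver, peeling off the maximal elements of $\operatorname{Crit}(\mathbf{u})$ one layer at a time using Theorems~\ref{t:local-form-scalar-critical-point} and~\ref{t:local-form-scalar-periodic-orbit}, Proposition~\ref{p:resonant-state} and the propagation Lemma~\ref{l:propagationzero}.

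The base case I would treat first is $\operatorname{Crit}(\mathbf{v})=\emptyset$: by the Morse-Smale partition, every $x\in M$ lies in some $W^u(\Lambda_j)\cap W^s(\Lambda_i)$ and its backward orbit $\varphi^{-t}(x)$ enters any prescribed neighborhood of $\Lambda_j$ in finite time, so for small open neighborhoods $U_j$ of the critical elements one has $M=\bigcup_j\bigcup_{t\in\mathbb{R}}\varphi^t(U_j)$. If $\mathbf{v}$ vanishes on each $U_j$, Lemma~\ref{l:propagationzero} yields $\mathbf{v}\equiv 0$ globally.

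For the inductive step I would pick a maximal element $\Lambda\in\operatorname{Max}(\operatorname{Crit}(\mathbf{u}))$. Lemma~\ref{l:support-Lemma} localizes the support of $\mathbf{u}$ to $W^u(\Lambda)$ near $\Lambda$, and the intrinsic character of resonant states (independence of the order function $m$ defining $\ml{H}_k^m(M,\mathcal{E})$, see~\cite[Th.~1.5]{faure2011upper} and paragraph~\ref{ss:summary}) lets me place $\mathbf{u}$ in a Sobolev space with arbitrarily high regularity in the stable/unstable directions, confining the wavefront of $\mathbf{u}$ near $\Lambda$ to the conormal of the (strongly) unstable manifolds in the sense of~\eqref{e:constraint-wavefront}. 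Applying Theorems~\ref{t:local-form-scalar-critical-point} or~\ref{t:local-form-scalar-periodic-orbit} together with the scalar reduction of paragraph~\ref{ss:reduction-scalar} then produces coefficients $c_{\Lambda,\alpha,\mathbf{j}}$ such that, in a neighborhood of $\Lambda$,
$$\mathbf{u}=\sum_{\lambda_\alpha^\Lambda+\delta_{\mathbf{j}}^\Lambda=z_0}c_{\Lambda,\alpha,\mathbf{j}}\,\tilde{\mathbf{u}}_{\alpha,\mathbf{j}}^{\Lambda,k}.$$
Performing this expansion at every maximal element simultaneously and using Proposition~\ref{p:resonant-state}(1) to identify the local germs with the global resonant states, the difference
$$\mathbf{v}:=\mathbf{u}-\sum_{\Lambda\in\operatorname{Max}(\operatorname{Crit}(\mathbf{u}))}\sum_{\alpha,\mathbf{j}}c_{\Lambda,\alpha,\mathbf{j}}\,\mathbf{u}_{\alpha,\mathbf{j}}^{\Lambda,k}$$
is a generalized eigencurrent for $z_0$ that vanishes in a neighborhood of every maximal $\Lambda$ (the supports $\overline{W^u(\Lambda)}$ of distinct maximal elements being pairwise incomparable), and whose critical set is contained in $\{\Lambda''\lneqq\Lambda:\Lambda\in\operatorname{Max}(\operatorname{Crit}(\mathbf{u}))\}$ by Proposition~\ref{p:resonant-state}(2). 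Since the Smale quiver is finite, iterating terminates at the base case and yields the desired decomposition of $\mathbf{u}$.

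The main obstacle will be verifying the wavefront condition~\eqref{e:constraint-wavefront} for the abstract generalized eigencurrent $\mathbf{u}$, without which Theorems~\ref{t:local-form-scalar-critical-point} and~\ref{t:local-form-scalar-periodic-orbit} cannot be invoked; once this is secured the remainder of the proof is a clean combinatorial descent through the Smale poset. This wavefront bound rests on the flexibility in the choice of the order function $m$ provided by~\cite[Th.~1.5]{faure2011upper} in the anisotropic setting of~\cite[Sect.~5]{dangrivieremorsesmale1}, exactly as in the proof of Lemma~\ref{l:jordan}.
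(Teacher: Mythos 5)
Your proposal is correct and follows essentially the same route as the paper: the paper's proof is exactly this iterative subtraction along the Smale quiver, using the local form near maximal elements of $\operatorname{Crit}(\mathbf{u})$ (obtained, as in the proof of Lemma~\ref{l:jordan}, from Lemma~\ref{l:support-Lemma}, the order-function flexibility giving the wavefront bound, and Theorems~\ref{t:local-form-scalar-critical-point}--\ref{t:local-form-scalar-periodic-orbit}), Proposition~\ref{p:resonant-state} to replace germs by the global states, and finiteness of the quiver plus the propagation Lemma~\ref{l:propagationzero} for termination. Your explicit base case ($\operatorname{Crit}(\mathbf{v})=\emptyset\Rightarrow\mathbf{v}=0$) is only implicit in the paper but is exactly the intended argument.
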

\begin{proof}
In the proof of Lemma~\ref{l:jordan}, we found that
$\mathbf{u}=\sum_{\lambda_\alpha^\Lambda+\delta_{\mathbf{j}}^\Lambda=z_0} c_{\Lambda,\alpha,\mathbf{j}}\mathbf{u}_{\alpha,\mathbf{j}}^{\Lambda,k} $
near all elements $\Lambda\in \text{Max}(\text{Crit}(\mathbf{u}_0))$.
Therefore we define the current
$$\mathbf{v}_1=\mathbf{u}-\sum_{\Lambda\in \text{Max}(\text{Crit}(\mathbf{u}_0))} 
\sum_{\lambda_\alpha^\Lambda+\delta_{\mathbf{j}}^\Lambda=z_0} c_{\Lambda,\alpha,\mathbf{j}}\mathbf{u}_{\alpha,\mathbf{j}}^{\Lambda,k}$$ 
which is a generalized eigencurrent for the eigenvalue
$z_0$ which vanishes near all critical elements
$\Lambda\in \text{Max}(\text{Crit}(\mathbf{u}_0))$. 

Now repeat the algorithm for the new generalized eigencurrent $\mathbf{v}_1$ whose support
is represented by lower vertices in the Smale quiver thanks to the propagation Lemma \ref{l:propagationzero}.
Again, we find that
$\mathbf{v}_1=\sum_{\lambda_\alpha^\Lambda+\delta_{\mathbf{j}}^\Lambda=z_0} c_{\Lambda,\alpha,\mathbf{j}}\mathbf{u}_{\alpha,\mathbf{j}}^{\Lambda,k} $
near all elements $\Lambda\in \text{Max}(\text{Crit}(\mathbf{v}_1))$.
Therefore the current
$$\mathbf{v}_2:=\mathbf{v}_1-\sum_{\Lambda\in \text{Max}(\text{Crit}(\mathbf{v}_1))} 
\sum_{\lambda_\alpha^\Lambda+\delta_{\mathbf{j}}^\Lambda=z_0} c_{\Lambda,\alpha,\mathbf{j}}\mathbf{u}_{\alpha,\mathbf{j}}^{\Lambda,k}$$ is a generalized eigencurrent for the eigenvalue
$z_0$ vanishing near all critical elements
$\Lambda\in \text{Max}(\text{Crit}(\mathbf{v}_1))$. 
Since the number of critical elements is finite, the algorithm starting from $\mathbf{u}_0=\mathbf{v}_0$
will terminate at some element $\mathbf{v}_p$ once we exhausted all critical elements and we find
$$\mathbf{u}=\sum_{l=0}^p  \sum_{\Lambda\in \text{Max}(\text{Crit}(\mathbf{v}_l))} \sum_{\lambda_\alpha^\Lambda+\delta_{\mathbf{j}}^\Lambda=z_0} c_{\Lambda,\alpha,\mathbf{j}}\mathbf{u}_{\alpha,\mathbf{j}}^{\Lambda,k} $$
and the decomposition is unique from Remark~\ref{r:linearly-independent}.
\end{proof}

\subsection{Connection preserving a hermitian structure and spectrum on the imaginary axis}\label{ss:imaginary-axis}

Let us now explain how we can derive Theorem~\ref{mainthmintro} from the introduction (the proof of Theorem~\ref{mainthmintro2} 
follows similar lines). Recall that we only stated there Theorems valid under the 
simplifying assumption that $\nabla$ preserves an Hermitian structure on $\ml{E}$ and that we only described the Pollicott-Ruelle spectrum 
on the imaginary axis $\text{Re}(z)=0$. In that case, we recall from paragraph~\ref{ss:diagonal} that the eigenvalues of the 
monodromy matrix for the parallel transport lies on the unit circle. More specifically, the eigenvalues of the monodromy matrix are 
of the form $\rho_j^{\Lambda}(\ml{E})=e^{2i\pi\gamma_j^{\Lambda}}$ for $1\leq j\leq N$ with $\gamma_j^{\Lambda}\in\IR$ the quantity involved in the 
definition of the shifting parameter $\delta_{\mathbf{j}}^{\Lambda}$ -- see equations~\eqref{e:shift-closed-orbit} 
and~\eqref{e:generalized- eigenvalue-equation-resonant-state}.

Let us now describe the eigenvalues of $-\ml{L}_{V,\nabla}^{(k)}$ satisfying $\text{Re}(z)=0$. Recall that eigenvalues are of the form 
$\lambda_{\alpha}^{\Lambda}+\delta_{\mathbf{j}}^{\Lambda}$ where 
\begin{itemize}
 \item for critical points, $\lambda_{\alpha}^{\Lambda}$ is defined 
by~\eqref{e:scalar-resonance-fixed-point} and $\delta_{\mathbf{j}}^{\Lambda}=\beta_{j'}^{\Lambda,k}$ for every 
$\mathbf{j}=(j,j')\in D_k=\{1,\ldots,N\}\times\{1,\ldots, n!/(k!(n-k)!)\}$ with $\beta_{j'}^{\Lambda,k}$ defined 
by~\eqref{e:shift-form-critical-point};
 \item for closed orbits, $\lambda_{\alpha}^{\Lambda}$ is defined 
by~\eqref{e:scalar-resonance-periodic-orbit} and $\delta_{\mathbf{j}}^{\Lambda}=\beta_{j'}^{\Lambda,k}+\frac{2i\pi \gamma_j^{\Lambda}}{\ml{P}_{\Lambda}}$ 
for every $\mathbf{j}=(j,j')\in D_k=\{1,\ldots,N\}\times\{1,\ldots, n!/(k!(n-k)!)\}$ with $\beta_{j'}^{\Lambda,k}$ defined 
by~\eqref{e:shift-form-closed-orbit}.
\end{itemize}

Recall that we defined $\varepsilon_{\Lambda}$ to be equal to $0$ whenever $W^u(\Lambda)$ is orientable and to $1/2$ otherwise. Combining these observations 
with the local forms of the resonant states near $\Lambda$ (given by~\eqref{e:local-model-eigenmode-fixed-point} 
and~\eqref{e:local-model-eigenmode-closed-orbit}), we find, for every $0\leq k\leq n$ the following resonances on the imaginary axis:
\begin{itemize}
 \item For every \textbf{fixed point $\Lambda$ such that $\text{dim}\ W^s(\Lambda)=k$}, we have a resonance of multiplicity $N$ at 
 $$z=0,$$ 
 each of them being associated with a generalized eigenmode whose local form near $\Lambda$ is given by
 $$\forall 1\leq j\leq N,\ \delta_0(x)\left(\wedge_{j\in L_s} dx_j \right)\wedge \left(\wedge_{j\in P_s} (dz_j\wedge d\overline{z}_j)\right)\mathbf{c}_j^{\Lambda},$$
 with the conventions of paragraph~\ref{ss:local-coordinates-fixed-point} for the local coordinates and with $\nabla\mathbf{c}_j^{\Lambda}=0$ 
 for every $j$ -- see Theorem~\ref{p:good-basis-bundle}.
 \item For every \textbf{closed orbit $\Lambda$ such that $\text{dim}\ W^s(\Lambda)=k+1$}, we have infinitely many resonances on the imaginary 
 axis. More precisely, for every $1\leq j\leq N$ and for every $m\in\IZ$, we have a 
 resonance which is given by $$z=-\frac{2i\pi}{\ml{P}_{\Lambda}}\left(m+\varepsilon_{\Lambda}+\gamma_j^{\Lambda}\right)$$ 
 associated with a resonant state whose local form near $\Lambda$ is 
 given by
$$e^{\frac{2i\pi \theta}{\ml{P}_{\Lambda}}(m+\varepsilon_{\Lambda})}(P(\theta)^{-1})^*\left(\delta_0(x)\left(\wedge_{j\in L_s} dx_j \right)\wedge \left(\wedge_{j\in P_s} (dz_j\wedge d\overline{z}_j)\right)\right)
\mathbf{c}_j^{\Lambda},$$
with the conventions of paragraph~\ref{ss:local-coordinates-fixed-point} for the local coordinates and with 
$\nabla\mathbf{c}_j^{\Lambda}=\frac{2i\pi\gamma_j^\Lambda}{\ml{P}_{\Lambda}}\mathbf{c}_j^{\Lambda}d\theta$ 
 for every $j$ -- see Theorem~\ref{t:good-basis-bundle} and paragraph~\ref{ss:proof-monodromy}. 
\item For every \textbf{closed orbit $\Lambda$ such that $\text{dim}\ W^s(\Lambda)=k$}, we have infinitely many resonances on the imaginary 
 axis. More precisely, for every $1\leq j\leq N$ and for every $m\in\IZ$, we have a 
 resonance which is given by $$z=-\frac{2i\pi}{\ml{P}_{\Lambda}}\left(m+\varepsilon_{\Lambda}+\gamma_j^{\Lambda}\right)$$ 
 associated with a resonant state whose local form near $\Lambda$ is 
 given by
$$e^{\frac{2i\pi \theta}{\ml{P}_{\Lambda}}(m+\varepsilon_{\Lambda})}(P(\theta)^{-1})^*\left(\delta_0(x)\left(\wedge_{j\in L_s} dx_j \right)\wedge \left(\wedge_{j\in P_s} (dz_j\wedge d\overline{z}_j)\right)\right)
\wedge d\theta
\mathbf{c}_j^{\Lambda},$$
with the conventions of paragraph~\ref{ss:local-coordinates-fixed-point} for the local coordinates and with 
$\nabla\mathbf{c}_j^{\Lambda}=\frac{2i\pi\gamma_j^\Lambda}{\ml{P}_{\Lambda}}\mathbf{c}_j^{\Lambda}d\theta$  
 for every $j$ -- see Theorem~\ref{t:good-basis-bundle} and paragraph~\ref{ss:proof-monodromy}.
\end{itemize}

\subsection{Weyl's asymptotics}\label{ss:weyl}

As an application of our main Theorem~\ref{t:maintheo-full}, we can give asymptotics formulas for the resonance couting function. We will use the conventions of 
paragraphs~\ref{ss:local-coordinates-fixed-point} and~\ref{ss:local-coordinates-closed-orbit}. In order to alleviate the expressions, we introduce the following conventions:
$$\chi^+_{\Lambda}:=\left(|\text{Re}(\chi)|\right)_{\chi\in\text{Sp}(A_{\Lambda})}\ \text{and}\ \omega_{\Lambda}:=\left(\text{Im}(\chi)\right)_{\chi\in\text{Sp}(A_{\Lambda})},$$
where the eigenvalues are indexed with their algebraic multiplicity which is equal to the geometric multiplicity (as we supposed $A_{\Lambda}$ to be diagonalizable in $\IC$). 
In the case of a fixed point, these define vectors in $\IR^n$ and, in the case of closed orbits, 
vectors in $\IR^{n-1}$. Then, we define a \emph{convex polytope associated to every critical element} $\Lambda$. More precisely, for a fixed point $\Lambda$, we set
\begin{equation}\label{e:polytope-fixed-point}\ml{Q}_{\Lambda}:=\left\{x\in\IR^n_+:-1\leq x.\omega_{\Lambda}\leq 1\ \text{and}\ x.\chi^+_{\Lambda}\leq 1
\right\},\end{equation}
and, for a closed orbit $\Lambda$, 
\begin{equation}\label{e:polytope-closed-orbit}\ml{Q}_{\Lambda}:=\left\{(x',x_n)\in\IR^{n-1}_+\times\IR:-1\leq x'.\omega_{\Lambda}+\frac{2\pi}{\ml{P}_{\Lambda}}x_n\leq 1\ \text{and}\ x'.\chi^+_{\Lambda}\leq 1\right\}.\end{equation}
With these conventions, one has 
\begin{coro}[Weyl's law] Suppose that the assumptions of Theorem~\ref{t:maintheo-full} are satisfied. Then, for every $0\leq k\leq n$, one has, as $T\rightarrow+\infty$,
\begin{eqnarray*} N_k(T) &:=&\left|\left\{z_0\in\ml{R}_k(V,\nabla):\ |\operatorname{Im}(z_0)|\leq T\ \text{and}\ \operatorname{Re}(z_0)\geq -T\right\}\right|\\
&=& \frac{N n!}{k! (n-k)!}\left(\sum_{\Lambda}\operatorname{Vol}_{\IR^n}(\ml{Q}_{\Lambda})\right)T^n+\ml{O}(T^{n-1}).
\end{eqnarray*}
where the Pollicott-Ruelle resonances are counted with their algebraic multiplicity.
\end{coro}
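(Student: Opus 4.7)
The plan is to reduce the counting of Pollicott-Ruelle resonances (with algebraic multiplicity) to a classical lattice-point problem for dilated convex polytopes. By Theorem~\ref{t:maintheo-full}, the spectrum $\ml{R}_k(V,\nabla)$ is in explicit bijection, counted with multiplicity, with the set of triples $(\Lambda,\alpha,\mathbf{j})$ where $\Lambda$ runs over critical elements, $\mathbf{j}\in D_k$, and $\alpha$ ranges over $\IN^n$ for fixed points or $\IN^{n-1}\times\IZ$ for closed orbits, the associated resonance being $\lambda_\alpha^\Lambda+\delta_{\mathbf{j}}^\Lambda$. Since $|D_k|=N\cdot n!/(k!(n-k)!)$ is a fixed combinatorial factor and each $\delta_{\mathbf{j}}^\Lambda$ is uniformly bounded in $T$, the counting reduces to estimating, for each $\Lambda$, the number of $\alpha$ such that $\lambda_\alpha^\Lambda$ lies in the box $\{|\operatorname{Im}(z)|\leq T,\ \operatorname{Re}(z)\geq -T\}$, up to an $\ml{O}(1)$ perturbation.

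Using the explicit formulas~\eqref{e:scalar-resonance-fixed-point} and~\eqref{e:scalar-resonance-periodic-orbit}, the box condition becomes a pair of affine inequalities in $\alpha$: roughly $\alpha\cdot\chi^+_\Lambda\leq T+\ml{O}(1)$ and $|\alpha\cdot\omega_\Lambda+(2\pi/\ml{P}_\Lambda)\alpha_n|\leq T+\ml{O}(1)$ (the last term being absent for fixed points). After the rescaling $x=\alpha/T$, these are precisely the defining inequalities of the polytopes $\ml{Q}_\Lambda$ from~\eqref{e:polytope-fixed-point} and~\eqref{e:polytope-closed-orbit}, up to additive $\ml{O}(1/T)$ corrections. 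I would then invoke the standard lattice-point asymptotics for dilated convex polytopes: if $K\subset\IR^n$ is a bounded convex polytope with nonempty interior, then $|\IZ^n\cap TK|=T^n\operatorname{Vol}_{\IR^n}(K)+\ml{O}(T^{n-1})$, with the error controlled by the $(n-1)$-dimensional tubular shell around $\partial(TK)$. Because the $\ml{O}(1)$ shifts only displace the defining half-spaces of $T\ml{Q}_\Lambda$ by bounded amounts, they change the count by at most $\ml{O}(T^{n-1})$. Multiplying by $|D_k|$ and summing over $\Lambda$ produces the desired Weyl formula.

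The main obstacle is to verify that $\ml{Q}_\Lambda$ is a bounded convex polytope with nonempty interior, so that the lattice-point estimate gives a genuine leading term. For fixed points this is immediate since $\chi^+_\Lambda$ has strictly positive entries, bounding $\{x\in\IR^n_+:x\cdot\chi^+_\Lambda\leq 1\}$; for closed orbits, the boundedness of $x'$ propagates to boundedness of $x_n$ via the coupled imaginary-part constraint $|x'\cdot\omega_\Lambda+(2\pi/\ml{P}_\Lambda)x_n|\leq 1$. A secondary technical point is sign bookkeeping for complex conjugate pairs of Lyapunov exponents: the two components of $\alpha$ indexed by such a pair contribute with opposite signs to $\operatorname{Im}(\lambda_\alpha^\Lambda)$, but since both components range independently over $\IN$, the set of attainable imaginary parts matches the polytope description after a possible relabelling of paired coordinates, so the volume computation is unaffected.
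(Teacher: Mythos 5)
Your proposal follows essentially the same route as the paper's proof: factor out the combinatorial factor $|D_k|=Nn!/(k!(n-k)!)$, absorb the bounded shifts $\delta_{\mathbf{j}}^{\Lambda}$ into the $\ml{O}(T^{n-1})$ error, and for each critical element compare the lattice count of multi-indices $\alpha$ satisfying the box constraints with the volume of the dilated polytope $T\ml{Q}_{\Lambda}$. Your additional remarks (boundedness of $\ml{Q}_{\Lambda}$ from hyperbolicity and the sign relabelling for conjugate pairs of exponents) only make explicit steps the paper treats implicitly, so the argument is correct and matches the paper's.
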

\begin{proof} Let $0\leq k\leq n$ and let $T>0$ be some large enough paramater. First of all, observe from Theorem~\ref{t:maintheo-full} 
combined with~\eqref{e:set-resonant-state} and~\eqref{e:generalized- eigenvalue-equation-resonant-state} that
\begin{eqnarray*}N_k(T)  & = 
&\sum_{\Lambda\ \text{fixed point}} \left|\left\{(\alpha,\mathbf{j})\in\IN^n\times D_k:\ |\operatorname{Im}(\lambda_{\alpha}^{\Lambda}+\delta_{\mathbf{j}}^{\Lambda})|\leq T\ \text{and}\ \operatorname{Re}(\lambda_{\alpha}^{\Lambda}+\delta_{\mathbf{j}}^{\Lambda})\geq -T
\right\}\right|\\
  & +&\sum_{\Lambda\ \text{closed orbit}}\left|\left\{(\alpha,\mathbf{j})\in\IN^{n-1}\times\IZ\times D_k:\ |\operatorname{Im}(\lambda_{\alpha}^{\Lambda}+\delta_{\mathbf{j}}^{\Lambda})|\leq T\ \text{and}\ \operatorname{Re}(\lambda_{\alpha}^{\Lambda}+\delta_{\mathbf{j}}^{\Lambda})\geq -T
\right\}\right|.
\end{eqnarray*}
As we are only interested as the behaviour as $T\rightarrow+\infty$ and as $|D_k|=\frac{N n!}{k! (n-k)!}$, this quantity is in fact equal to
\begin{equation}\label{e:weyl-split}N_k(T)=\frac{N n!}{k! (n-k)!} \sum_{\Lambda} N^{\Lambda}_0(T)+\ml{O}(T^{n-1}),\end{equation}
where, for every fixed point $\Lambda$,
$$N^{\Lambda}_0(T)=\left|\left\{\alpha\in\IN^n:\ |\operatorname{Im}(\lambda_{\alpha}^{\Lambda})|\leq T\ \text{and}\ \operatorname{Re}(\lambda_{\alpha}^{\Lambda})\geq -T
\right\}\right|$$
and, for every closed orbit $\Lambda$,
$$N^{\Lambda}_0(T)=\left|\left\{\alpha\in\IN^{n-1}\times\IZ:\ |\operatorname{Im}(\lambda_{\alpha}^{\Lambda})|\leq T\ \text{and}\ \operatorname{Re}(\lambda_{\alpha}^{\Lambda})\geq -T
\right\}\right|.$$
Let us now compute the asymptotic formula for $N_0^{\Lambda}(T)$ for every choice of $\Lambda$. For a fixed point $\Lambda$, we find, as $T\rightarrow+\infty$,
$$N^{\Lambda}_0(T)=\left|\left\{\alpha\in\IN^n:\ -T\leq\alpha.\omega_{\Lambda}\leq T\ \text{and}\ 0\leq \alpha.\chi^+_{\Lambda}\leq T
\right\}\right|+\ml{O}(T^{n-1}).$$
Up to an error term of order $\ml{O}(T^{n-1})$, this can be rewritten under an integral form, i.e.
$$N^{\Lambda}_0(T)=\text{Vol}_{\IR^n}\left(\left\{x\in\IR^n_+:-T\leq x.\omega_{\Lambda}\leq T\ \text{and}\ x.\chi^+_{\Lambda}\leq T
\right\}\right)+\ml{O}(T^{n-1}),$$
or equivalently
$$N^{\Lambda}_0(T)=T^n\text{Vol}_{\IR^n}\left(\left\{x\in\IR^n_+:-1\leq x.\omega_{\Lambda}\leq 1\ \text{and}\ x.\chi^+_{\Lambda}\leq 1
\right\}\right)+\ml{O}(T^{n-1}).$$
For a closed orbit, the same calculation would give
$$N^{\Lambda}_0(T)=T^n\text{Vol}_{\IR^n}\left(\left\{(x',x_n)\in\IR^{n-1}_+\times\IR:-1\leq x'.\omega_{\Lambda}+\frac{2\pi}{\ml{P}_{\Lambda}}x_n\leq 1\ \text{and}\ x'.\chi^+_{\Lambda}\leq 1\right\}\right)+\ml{O}(T^{n-1}),$$
which, combined with~\eqref{e:weyl-split}, concludes the proof of the corollary.
\end{proof}

\appendix

\section{Monodromy and periodic orbits: a brief account of Floquet theory}\label{a:floquet}

In section~\ref{s:morse-smale}, we made the assumption that near every closed orbit $\Lambda$ of the flow, 
one can choose (regular) local coordinates such that the vector field $V$ 
can be put into the normal form
$$V(x,\theta):=A(\theta)x.\partial_x+\partial_{\theta},$$
where $x\in\IR^{n-1}$, $\theta\in\IR/(\ml{P}_{\Lambda}\IZ)$, and $A(\theta):\IR/(\ml{P}_{\Lambda}\IZ)\rightarrow GL_{n-1}(\IR)$. 
Floquet theory describes the solution of the 
following ordinary differential equation:
\begin{equation}\label{e:Floquet}
\frac{dU}{d\theta}=A(\theta)U,\quad U(\theta_0,\theta_0)=\text{Id}_{\IR^{n-1}}.
\end{equation}
We will now discuss standard properties of Floquet theory that are extensively used all along our analysis. In particular, we 
give a precise description 
of $W^u(\Lambda)$ and $W^s(\Lambda)$ in these local coordinates. We refer the reader to~\cite[Chapter 3]{teschl2012ordinary}  
for a detailed presentation on Floquet theory.

First of all, the fundamental solution $U$ satisfies the groupoid equation~:
\begin{equation}
U(\theta_1,\theta_2)U(\theta_2,\theta_3)=U(\theta_1,\theta_3)\text{ and }U(\theta,\theta)=\text{Id}_{\IR^{n-1}}.
\end{equation}
Note that $U(\mathcal{P}_{\Lambda}+\theta,\mathcal{P}_{\Lambda}+\theta_0)=U(\theta,\theta_0)$ but 
$U(\theta+\mathcal{P}_{\Lambda},\theta)\neq \text{Id}$ a priori. However
setting \begin{equation}
M(\theta)=U(\theta+\mathcal{P}_{\Lambda},\theta)
\end{equation}
yields the following Lemma:
\begin{lemm}
For all $\theta\in [0,\mathcal{P}_{\Lambda}]$, the monodromy matrix
$M(\theta)=U(\theta+\mathcal{P}_{\Lambda},\theta)$ depends smoothly in
$\theta$ and is $\mathcal{P}_{\Lambda}$--periodic.
Furthermore, all monodromy matrices are conjugated
\begin{eqnarray}
M(\theta+\mathcal{P}_{\Lambda})=M(\theta)\\
M(\theta_1)=U^{-1}(\theta_0,\theta_1)M(\theta_0)U(\theta_0,\theta_1).
\end{eqnarray}
\end{lemm}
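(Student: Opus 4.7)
The plan is to derive all three claims directly from two inputs already in the excerpt: the groupoid identity $U(\theta_1,\theta_2)U(\theta_2,\theta_3)=U(\theta_1,\theta_3)$ together with the invariance $U(\mathcal{P}_{\Lambda}+\theta,\mathcal{P}_{\Lambda}+\theta_0)=U(\theta,\theta_0)$. The latter is itself an immediate consequence of the uniqueness of solutions of the Cauchy problem~\eqref{e:Floquet} combined with the $\mathcal{P}_{\Lambda}$-periodicity of $A$: the map $\theta\mapsto U(\theta+\mathcal{P}_{\Lambda},\theta_0+\mathcal{P}_{\Lambda})$ and the map $\theta\mapsto U(\theta,\theta_0)$ satisfy the same linear ODE with coefficient $A(\theta)$ and agree at $\theta=\theta_0$, so they coincide.

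For the smoothness claim, I would invoke the classical smooth dependence of the flow of a smooth, time-dependent linear ODE on its endpoints: $(\theta_1,\theta_2)\mapsto U(\theta_1,\theta_2)$ is jointly smooth on $\mathbb{R}\times\mathbb{R}$, and hence $\theta\mapsto M(\theta)=U(\theta+\mathcal{P}_{\Lambda},\theta)$ is smooth by composition. For the periodicity statement, I would apply the shift-invariance with the substitution $\theta\rightsquigarrow\theta+\mathcal{P}_{\Lambda}$, $\theta_0\rightsquigarrow\theta$, to obtain
\[
M(\theta+\mathcal{P}_{\Lambda})=U(\theta+2\mathcal{P}_{\Lambda},\theta+\mathcal{P}_{\Lambda})=U(\theta+\mathcal{P}_{\Lambda},\theta)=M(\theta).
\]

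For the conjugation identity, my plan is to factor the resolvent $U(\theta_1+\mathcal{P}_{\Lambda},\theta_1)$ along the ordered sequence of times $\theta_1\leq\theta_0\leq\theta_0+\mathcal{P}_{\Lambda}\leq\theta_1+\mathcal{P}_{\Lambda}$ (or the reverse), by applying the groupoid identity twice:
\[
M(\theta_1)=U(\theta_1+\mathcal{P}_{\Lambda},\theta_0+\mathcal{P}_{\Lambda})\,U(\theta_0+\mathcal{P}_{\Lambda},\theta_0)\,U(\theta_0,\theta_1).
\]
The middle factor is $M(\theta_0)$ by definition, and the first factor equals $U(\theta_1,\theta_0)$ by the shift-invariance; finally $U(\theta_1,\theta_0)=U(\theta_0,\theta_1)^{-1}$ follows from the groupoid identity evaluated at $(\theta_1,\theta_0,\theta_1)$ together with $U(\theta_1,\theta_1)=\mathrm{Id}$. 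Assembling these three equalities yields the desired formula.

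No serious obstacle is expected; the whole statement is a formal consequence of the two structural properties above plus standard ODE regularity. The only point that requires minor care is keeping the order of composition consistent in the conjugation step and checking that the factorization via $\theta_0$ and $\theta_0+\mathcal{P}_{\Lambda}$ is valid regardless of the relative ordering of $\theta_0$ and $\theta_1$, which is true because the groupoid identity holds for \emph{arbitrary} triples of times (the resolvent is two-sided invertible).
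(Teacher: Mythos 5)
Your argument is correct and is essentially the paper's own proof: the same factorization $M(\theta_1)=U(\theta_1+\mathcal{P}_{\Lambda},\theta_0+\mathcal{P}_{\Lambda})U(\theta_0+\mathcal{P}_{\Lambda},\theta_0)U(\theta_0,\theta_1)$ combined with the shift-invariance $U(\theta_1+\mathcal{P}_{\Lambda},\theta_0+\mathcal{P}_{\Lambda})=U(\theta_1,\theta_0)$ coming from periodicity of $A$ and uniqueness for the Cauchy problem. You merely spell out the smoothness and periodicity statements (and the inverse identity $U(\theta_1,\theta_0)=U(\theta_0,\theta_1)^{-1}$) which the paper leaves implicit.
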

\begin{proof}
$$U(\theta_1+\mathcal{P}_{\Lambda},\theta_1)=U(\theta_1+\mathcal{P}_{\Lambda},\theta_0+\mathcal{P}_{\Lambda})U(\theta_0+\mathcal{P}_{\Lambda},\theta_0)
U(\theta_0,\theta_1)  $$
but $U(\theta_1+\mathcal{P}_{\Lambda},\theta_0+\mathcal{P}_{\Lambda})=U(\theta_1,\theta_0)$
which yields the result.
\end{proof}
A direct consequence of that Lemma is that \emph{the spectrum of monodromy matrices $M(\theta)$ does not depend on $\theta$}. Its eigenvalues are called 
the Floquet multipliers of the closed orbit $\Lambda:=\{(0,\theta):\theta\in\IR/(\ml{P}_{\Lambda}\IZ)\}$. Saying that the closed orbit $\Lambda$ is 
\textbf{hyperbolic} is equivalent to the fact that no Floquet multipliers lies on the unit circle $\IS^1$. Equivalently, it also says that the 
monodromy matrices are hyperbolic. We shall now always suppose that the monodromy matrix is hyperbolic and diagonalizable in $\IC$. 
In that case, note that, up to a change of linear coordinates in $\IR^{n-1}$, we can suppose that $M:=M(0)$ takes the following form:
$$M=\text{Diag}(S_1,\ldots, S_p,U_{p+1},\ldots , U_q)\in GL_{n-1}(\IR),$$
where one has
\begin{itemize}
 \item for every $1\leq i\leq p$, one has $S_i=\nu_i\text{Id}_{\IR^1}$  with $|\nu_i|<1$ or $S_i=\nu_i R_{\vartheta_i}$ with $0<\nu_i<1$ and 
 $\displaystyle R_{\vartheta_i}:=\left(\begin{array}{cc}\cos(\vartheta_i)& -\sin(\vartheta_i)\\
\sin(\vartheta_i) & \cos(\vartheta_i) \end{array}\right)$,
  \item for every $p+1\leq i\leq q$, one has $S_i=\nu_i\text{Id}_{\IR^1}$ with $|\nu_i|>1$ or $S_i=\nu_i R_{\vartheta_i}$ with $\nu_i>1$ and 
 $\displaystyle R_{\vartheta_i}:=\left(\begin{array}{cc}\cos(\vartheta_i)& -\sin(\vartheta_i)\\
\sin(\vartheta_i) & \cos(\vartheta_i) \end{array}\right)$.
\end{itemize}
The \textbf{Lyapunov exponents} of the closed orbit are then given by the value $\frac{\log|\nu_i|}{\ml{P}_{\Lambda}}$ 
(which appear with multiplicity $2$ when they correspond 
to a complex eigenvalue). We denote\footnote{Some of them may be equal.} them by $(\chi_i(\Lambda))_{i=1,\ldots, n-1}$ 
and by $r$ the number of negative Lyapunov exponents. 
Note now that each matrix $\nu_i R_{\vartheta_i}$ can be put under an exponential form as 
follows~:
$$\nu_i R_{\vartheta_i}=\exp\left(\log(\nu_i)\text{Id}_{\IR^2}+\left(\begin{array}{cc}0& -\vartheta_i\\
\vartheta_i & 0\end{array}\right)\right).$$
For the blocks of size one, this will of course depend on the sign of $\nu_i$. In order to avoid that difficulty, we consider the matrix $M^2$ and we can 
then write 
$$M^2=\exp (2\ml{P}_{\Lambda}A_{\Lambda}),$$
where $A_{\Lambda}$ is a real valued matrix which has also a block-diagonal form given by the above matrices. For a general $\theta$, we set 
$\tilde{A}_{\Lambda}(\theta):=U(0,\theta)^{-1}A_{\Lambda}U(0,\theta)$ and we have the relation $M(\theta)^2=\exp (2\ml{P}_{\Lambda}\tilde{A}_{\Lambda}(\theta))$. 
Let us now state the main result we needed 
in our analysis and which follows from the Floquet Theorem~\cite{teschl2012ordinary}~:
\begin{prop} With the above notations, one has
\begin{equation}
U(\theta,\theta_0)=P(\theta,\theta_0)e^{(\theta-\theta_0)\tilde{A}_{\Lambda}(\theta_0)}
\end{equation}
$P(\theta,\theta_0)$ is a real $2\mathcal{P}_{\Lambda}$-periodic matrix depending smoothly on $\theta$. Furthermore,$P(\ml{P}_{\Lambda},0)$ is a
diagonal matrix with entries on the diagonal which are equal to $\pm 1$ where the term on the diagonal is equal to $-1$ 
whenever $\nu_i<0$ and to $+1$ otherwise.
\end{prop}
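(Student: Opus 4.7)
The plan is to define
\[
P(\theta,\theta_0) := U(\theta,\theta_0)\, e^{-(\theta-\theta_0)\tilde{A}_{\Lambda}(\theta_0)}
\]
and verify the three claimed properties---smoothness, $2\ml{P}_{\Lambda}$-periodicity in $\theta$, and the diagonal structure of $P(\ml{P}_{\Lambda},0)$---in turn. Smoothness in $\theta$ is immediate from the smooth dependence of the fundamental solution $U$ of~\eqref{e:Floquet} on $\theta$ and from the analyticity of the matrix exponential.

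For the $2\ml{P}_{\Lambda}$-periodicity, I would compute $P(\theta+2\ml{P}_{\Lambda},\theta_0)$ directly. The groupoid identity combined with the $\ml{P}_{\Lambda}$-periodicity of the monodromy $M(\cdot)$ from the preceding Lemma gives $U(\theta+2\ml{P}_{\Lambda},\theta)=M(\theta)^{2}$, while the conjugation identity from the same Lemma gives $M(\theta)^{2}=U(\theta,\theta_0)M(\theta_0)^{2}U(\theta_0,\theta)$. Plugging these in reduces the periodicity claim to the single identity $M(\theta_0)^{2}=e^{2\ml{P}_{\Lambda}\tilde{A}_{\Lambda}(\theta_0)}$. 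But by the very definition of $A_{\Lambda}$ one has $M^{2}=e^{2\ml{P}_{\Lambda}A_{\Lambda}}$, and conjugating by $U(0,\theta_0)$ yields exactly the required relation (since $\tilde{A}_{\Lambda}(\theta_0)$ is defined as that conjugate). Using finally that $\tilde{A}_{\Lambda}(\theta_0)$ commutes with itself, the matrix exponentials collapse and we recover $P(\theta,\theta_0)$.

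For the structure of $P(\ml{P}_{\Lambda},0)=M\,e^{-\ml{P}_{\Lambda}A_{\Lambda}}$, I would exploit the block-diagonal form of $M$. Since $A_{\Lambda}$ is defined as a function of $M^{2}$, it inherits the block-diagonal structure of $M$ and commutes with $M$ block by block. For a $1\times1$ block $\nu_i$, one has $A_{\Lambda}|_{\mathrm{block}}=\log|\nu_i|/\ml{P}_{\Lambda}$, so the corresponding block of $P(\ml{P}_{\Lambda},0)$ equals $\nu_i\,e^{-\log|\nu_i|}=\operatorname{sgn}(\nu_i)$. For a $2\times2$ block $\nu_i R_{\vartheta_i}$ (with $\nu_i>0$ by construction), using $R_{\vartheta}^{2}=R_{2\vartheta}$ and the real logarithm $\log R_{\phi}=\phi\bigl(\begin{smallmatrix}0&-1\\1&0\end{smallmatrix}\bigr)$, one obtains $A_{\Lambda}|_{\mathrm{block}}=\frac{1}{\ml{P}_{\Lambda}}\bigl(\log(\nu_i)\,I+\vartheta_i\bigl(\begin{smallmatrix}0&-1\\1&0\end{smallmatrix}\bigr)\bigr)$, whence $e^{\ml{P}_{\Lambda}A_{\Lambda}|_{\mathrm{block}}}=\nu_i R_{\vartheta_i}$ and this block of $P(\ml{P}_{\Lambda},0)$ is the identity. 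Assembling the blocks gives exactly the sign pattern in the statement.

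The main subtlety---rather than a real obstacle---is the reason for working with $A_{\Lambda}=\frac{1}{2\ml{P}_{\Lambda}}\log M^{2}$ instead of the naive $\frac{1}{\ml{P}_{\Lambda}}\log M$: when $M$ has negative real eigenvalues (the $\nu_i<0$ case), it admits no real logarithm and hence no real infinitesimal generator. Passing to $M^{2}$ bypasses this obstruction, at the price that $P$ is only $2\ml{P}_{\Lambda}$-periodic rather than $\ml{P}_{\Lambda}$-periodic; the nontrivial $\pm1$ entries of the diagonal matrix $P(\ml{P}_{\Lambda},0)$ record precisely the orientation-reversing behaviour of the linearised flow along the $\nu_i<0$ directions, and this is exactly what makes the twisting indices $\tilde{\varepsilon}_j$ appear later in the analysis.
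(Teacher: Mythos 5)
Your proof is correct and is essentially the argument the paper has in mind: the paper itself gives no proof of this proposition, deferring to the Floquet theorem in~\cite{teschl2012ordinary}, and your verification (set $P(\theta,\theta_0)=U(\theta,\theta_0)e^{-(\theta-\theta_0)\tilde{A}_{\Lambda}(\theta_0)}$, get $2\ml{P}_{\Lambda}$-periodicity from $U(\theta+2\ml{P}_{\Lambda},\theta_0)=U(\theta,\theta_0)M(\theta_0)^2$ together with $M(\theta_0)^2=e^{2\ml{P}_{\Lambda}\tilde{A}_{\Lambda}(\theta_0)}$) is exactly the standard one. The only statement going beyond the textbook Floquet theorem, namely that $P(\ml{P}_{\Lambda},0)=M e^{-\ml{P}_{\Lambda}A_{\Lambda}}$ is diagonal with entries $\operatorname{sgn}(\nu_i)$ on the $1\times 1$ blocks and the identity on the rotation blocks, you verify correctly by the blockwise computation using the explicit block-diagonal definition of $A_{\Lambda}$.
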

Note that we immediatly deduce from that proposition the following fact:
$$\tilde{A}_{\Lambda}(\theta)=P(\theta,0) A_{\Lambda}P(\theta,0)^{-1}.$$
Then, this proposition allows us to give a precise description of the unstable (resp. stable) manifold $W^u(\Lambda)$ (resp. $W^s(\Lambda)$) of $\Lambda$. 
Indeed, by construction, one can verify that
$$W^u(\Lambda)=\left\{\left(P(\theta_0,0)(0,y),\theta_0\right): (y,\theta_0)\in\IR^{n-1-r}\times\IR/(\ml{P}_{\Lambda}\IZ)\right\},$$
and 
$$W^s(\Lambda)=\left\{\left(P(\theta_0,0)(x,0),\theta_0\right): (x,\theta_0)\in\IR^{r}\times\IR/(\ml{P}_{\Lambda}\IZ)\right\}.$$
These submanifolds are invariantly fibered by the following smooth submanifolds:
$$\forall\theta_0\in\IR/(\ml{P}_{\Lambda}\IZ),\ W^{uu}(\theta_0):=\left\{\left(P(\theta_0,0)(0,y),\theta_0\right): y\in\IR^{n-1-r}\right\},$$
and
$$\forall\theta_0\in\IR/(\ml{P}_{\Lambda}\IZ),\ W^{ss}(\theta_0):=\left\{\left(P(\theta_0,0)(x,0),\theta_0\right): x\in\IR^{r}\right\}.$$
Note that $W^{u}(\Lambda)$ is non orientable if and only if
\begin{equation}
\det P(\mathcal{P}_{\Lambda},0)|_{\{0\}\oplus\IR^{n-1-r}}=-1.
\end{equation}
The same holds for the stable manifolds. In particular, observe that $\text{det}\ M$ is always positive thanks 
to the Liouville's formula~\cite[Lemma 3.11, p.83]{teschl2012ordinary}. 
This implies that $\det\ P(\mathcal{P}_{\Lambda},0)=1$ and thus 
$W^{u}(\Lambda)$ is orientable if and only if $W^{s}(\Lambda)$ is orientable.


\begin{thebibliography}{10}

\bibitem{Ba16} V.~Baladi \emph{Dynamical Zeta Functions and Dynamical Determinants for Hyperbolic Maps -- A Functional Approach}, avalaible at 
https://webusers.imj-prg.fr/$\sim$viviane.baladi/baladi-zeta2016.pdf (2016)
\bibitem{BaTs08} V. Baladi, M. Tsujii, \emph{Dynamical determinants and spectrum for hyperbolic diffeomorphisms}, Contemp. Math. 469 (Amer. Math. Soc.), Volume in honour of M. Brin's 60th birthday (2008), 29--68
\bibitem{BanSlJu15} O.~Bandtlow, W.~Just, J.Slipantschuk, \emph{Spectral structure of transfer operators for expanding circle maps}, Ann. IHP (C) Non Linear Analysis, in press (2015)
\bibitem{bandtlow2016lower} O.~Bandtlow and F.~Naud \emph{Lower bounds for the Ruelle spectrum of analytic expanding circle
  maps}, preprint arXiv:1605.06247 (2016)
\bibitem{demailly1996intro} J.~Bertin, J.P.~Demailly, L.~Illusie and C.~Peters \emph{Introduction to Hodge Theory}, SMF/AMS Texts and Monographs, vol.~8 (2002)
\bibitem{BlKeLi02} M.~Blank, G.~Keller, C.~Liverani, \emph{Ruelle-Perron-Frobenius spectrum for Anosov maps}, Nonlinearity 15 (2002), 1905--1973
\bibitem{BrDaHe14a} C.~Brouder, N.V.~Dang, F.~Helein, \emph{A smooth introduction to the wavefront set}, to appear in Journal of Math. Physics, arXiv:1404.1778 (2014)
\bibitem{bunke2015lectures} U.~Bunke \emph{Lectures on analytic torsion}, (2015)
\bibitem{BuLi07} O.~Butterley, C.~Liverani, \emph{Smooth Anosov flows: correlation spectra and stability}, J. Mod. Dyn. 1, 301--322 (2007)
\bibitem{chen1963equivalence} K.T. Chen \emph{Equivalence and decomposition of vector fields about an elementary
  critical point}, American J. Math. 85 (1963), 693--722
\bibitem{DangIHP} N.V.~Dang \emph{The extension of distributions on manifolds, a microlocal approach}, Ann. H. Poincar{\'e} 17 (2016) 819--859
\bibitem{dang2016spectral}
N.V.~Dang and G.~Rivi\`ere \emph{Spectral analysis of Morse-Smale gradient flows}, preprint arXiv:1605.05516 (2016)
\bibitem{dangrivieremorsesmale1} N.~V. Dang and G.~Rivi\`ere \emph{Spectral analysis of Morse-Smale flows I: construction of the
  anisotropic spaces}, preprint (2017)
\bibitem{dangrivieremorsesmale3} N.~V.~Dang and G.~Rivi\`ere \emph{Topology of Pollicott-Ruelle resonances}, preprint (2017)
\bibitem{DatDyZw14} K.~Datchev, S.~Dyatlov, M.~Zworski, \emph{Sharp polynomial bounds on the number of Pollicott-Ruelle resonances}, Ergodic Theory Dynam. Systems 34 (2014), no. 4, 1168--1183
\bibitem{dyatlov2015power} S.~Dyatlov, F.~Faure, C.~Guillarmou, \emph{Power spectrum of the geodesic flow on hyperbolic manifolds}, Anal. PDE 8 (2015), no. 4, 923--1000
\bibitem{dyatlov2016pollicott} S.~Dyatlov, C.~Guillarmou, \emph{Pollicott-Ruelle resonances for open systems}, Ann. H. Poincar\'e 17 (2016), 3089--3146 
\bibitem{dyatlov2016dynamical} S.~Dyatlov, M.~Zworski, \emph{Dynamical zeta functions for Anosov flows via microlocal analysis},  Ann. Sci. ENS 49 (2016), 543--577
\bibitem{dyatlov2016ruelle} S.~Dyatlov, M.~Zworski, \emph{Mathematical theory of scattering resonances}, avalaible at http://math.mit.edu/$\sim$dyatlov/res/res.pdf (2016)
\bibitem{Epstein} H.~Epstein and V.~Glaser \emph{The role of locality in perturbation theory}, Ann. Inst. H. Poincar\'e 19 (1973), 211--95
\bibitem{enciso2008morse} A.~Enciso, D.~Peralta-Salas \emph{Morse-type inequalities for dynamical systems and the Witten Laplacian}, J. Diff. Eq. 244 
(2008), 2804--2819
\bibitem{faure2011upper} F.~Faure and J.~Sj{\"o}strand \emph{Upper bound on the density of Ruelle resonances for Anosov flows}, Comm. Math. Phys. 308 (2011), 
325--364
\bibitem{FaTs15} F.~Faure, M.~Tsujii, \emph{Prequantum transfer operator for symplectic Anosov diffeomorphism}, Ast\'erisque Vol.~375, Soci\'et\'e Math. de France (2015)
\bibitem{faure2013band} F.~Faure and M.~Tsujii \emph{Band structure of the Ruelle spectrum of contact Anosov flows} CRAS 351 (2013), 385--391
\bibitem{FaTs13} F.~Faure, M.~Tsujii, \emph{The semiclassical zeta function for geodesic flows on negatively curved manifolds}, 
Arxiv arXiv:1311.4932, Inv. Math. to appear (2013)
\bibitem{FlFo03} L.~Flaminio, G.~Forni \emph{Invariant distributions and time averages for horocycle flows}, Duke Math. J. (2003)
\bibitem{frenkel2011instantons} E.~Frenkel, A.~Losev, and N~Nekrasov \emph{Instantons beyond topological theory. I}, J. Inst. Math. Jussieu 10 (2011), 463--565
\bibitem{frenkel2008instantons} E~Frenkel, A~Losev, and N~Nekrasov \emph{Instantons beyond topological theory II}, preprint arXiv:0803.3302 (2008)
\bibitem{frenkel2007notes} E~Frenkel, A~Losev, and N~Nekrasov \emph{Notes on instantons in topological field theory and beyond}, Nuclear Physics B-Proceedings Supplements 
171 (2007) 215--230
\bibitem{GiLiPo13} P.~Giulietti, C.~Liverani, M.~Pollicott, \emph{Anosov flows and dynamical zeta functions}, Ann. of Math. (2) 178 (2013), no. 2, 687--773
\bibitem{guillarmou2016classical}
C.~Guillarmou, J.~Hilgert, and T.~Weich \emph{Classical and quantum resonances for hyperbolic surfaces}, preprint arXiv:1605.08801 (2016)
\bibitem{GuSt90} V.~Guillemin, S.~Sternberg \emph{Geometric asymptotics}, 2nd edition, Math. Surveys and Monographs 14, AMS (1990)
\bibitem{HaLa00} F.R.~Harvey, H.B.~Lawson, \emph{Morse theory and Stokes Theorem}, Surveys in Diff. Geom. VII (2000), 259--311
\bibitem{HaLa01} F.R.~Harvey, H.B.~Lawson, \emph{Finite volume flows and Morse theory}, Ann. of Math. Vol. 153 (2001), 1--25
\bibitem{HeSj86} B. Helffer, J.~Sj\"ostrand, \emph{R\'esonances en limite semi-classique}, M\'em. Soc. Math. France 24-25 (1986), 1--228
\bibitem{Lau92} F. Laudenbach, \emph{On the Thom-Smale complex}, in \emph{An Extension of a Theorem of Cheeger and M\"uller}, by J.-M. Bismut and W. Zhang, 
Ast\'erisque 205, Soci\'et\'e Math. de France, Paris (1992)
\bibitem{LeeDiff} J.~M. Lee \emph{Manifolds and Differential Geometry},  Graduate
  Studies in Mathematics~82, American Mathematical Society, Providence (2009)
\bibitem{losev2011new}
A.~Losev and S.~Slizovskiy \emph{New observables in topological instantonic field theories}, J. Geometry and Physics 61 (2011), 1868--188
\bibitem{mnev2014lecture} P.~Mnev, \emph{Lecture notes on torsion}, arXiv:1406.3705v1 (2014)
\bibitem{nelson2015topics} E.~Nelson, \emph{Topics in dynamics. I: Flows.}, Mathematical Notes. Princeton University Press, Princeton, N.J.; University of Tokyo Press, Tokyo (1969) iii+118 pp.
\bibitem{Pa68} J.~Palis \emph{On Morse-Smale dynamical systems}, Topology 8 (1968), 385--404
\bibitem{Pe62} M.M.~Peixoto \emph{Structural stability on two-dimensional manifolds}, Topology 1 (1962), 101--120
\bibitem{Po85} M.~Pollicott, \emph{On the rate of mixing of Axiom A flows}, Invent. Math. 81 (1985), no. 3, 413--426
\bibitem{Ru87a} D.~Ruelle, \emph{Resonances for Axiom A flows}, J. Differential Geom. 25 (1987), no. 1, 99--116
\bibitem{RuSu75} D.~Ruelle, D.~Sullivan \emph{Currents, flows and diffeomorphisms}, Topology 14 (1975), 319--327
\bibitem{Schwartz-66}
L.~Schwartz \emph{Th{\'e}orie des distributions}, Hermann, Paris, second edition (1966)
\bibitem{smale2000morse} S.~Smale, \emph{Morse inequalities for a dynamical system}, Bull. Amer. Math. Soc., 66 (1960), 43--49
\bibitem{smale1967differentiable} S.~Smale \emph{Differentiable dynamical systems}, Bull. AMS 73 (1967), 747--817
\bibitem{taubes2011differential}
C.~H.~Taubes \emph{Differential geometry: Bundles, connections, metrics and curvature},  OUP, Oxford (2011).
\bibitem{teschl2012ordinary} G.~Teschl \emph{Ordinary differential equations and dynamical systems}, Grad. Studies in Math. 140, AMS, Providence Rhode Island (2012)
\bibitem{Ts12} M.~Tsujii, \emph{Contact Anosov flows and the Fourier-Bros-Iagolnitzer transform}, Erg. Th. Dyn. Sys. 32 (2012), no. 6, 2083--2118
\bibitem{WWL08} P.~Wang, H.~Wu, W.G.~Li \emph{Normal forms for periodic orbis of real vector fields}, Acta Math. Sinica, English Series $\textbf{24}$ (2008), 797--808
\end{thebibliography}
\end{document}